\documentclass[acmtopc]{acmsmallcustom}

\usepackage{amsfonts}
\usepackage{amsmath}
\usepackage{mathabx}
\usepackage{comment}
\usepackage{stmaryrd}

\usepackage[ruled]{algorithm2e}

\SetAlFnt{\small}
\SetAlCapFnt{\small}
\SetAlCapNameFnt{\small}
\SetAlCapHSkip{0pt}
\IncMargin{-\parindent}

\acmVolume{9}
\acmNumber{4}
\acmArticle{1}
\acmYear{2010}
\acmMonth{3}

\begin{document}

\markboth{H. Chen}{The Tractability Frontier of Graph-Like First-Order Query Sets}

\title {The Tractability Frontier of Graph-Like First-Order Query Sets}
\author{HUBIE CHEN
\affil{Universidad del Pa\'{i}s Vasco,
E-20018 San Sebasti\'{a}n, Spain; and IKERBASQUE, Basque Foundation
for Science,
E-48011 Bilbao
}
}

\begin{abstract}
{\bf Abstract.}
We study first-order model checking, by which we refer to the
problem of deciding
whether or not a given first-order sentence is satisfied by a given
finite structure.
In particular, we aim to understand on which sets of sentences
this problem is tractable, in the sense of parameterized complexity
theory.
To this end, we define the notion of a graph-like sentence set,
which definition is inspired by previous work on first-order model
checking 
wherein the permitted connectives and quantifiers were restricted.
Our main theorem is the complete tractability classification of such graph-like
sentence sets, which is (to our knowledge) the first complexity classification theorem
concerning a class of sentences that has no restriction on the
connectives
and quantifiers.
To present and prove our classification, we introduce and develop 
a novel complexity-theoretic framework which is built on
parameterized complexity and includes new notions of reduction.
\end{abstract}

\maketitle

\newcommand{\footbox}{\ensuremath{\footnotesize \textup{$\Box$}}}

\newtheorem{prop}[theorem]{Proposition}
\newtheorem{assumption}[theorem]{Assumption}

\newcommand{\ppequiv}{\mathsf{PPEQ}}
\newcommand{\eq}{\mathsf{EQ}}
\newcommand{\iso}{\mathsf{ISO}}
\newcommand{\ppeq}{\ppequiv}
\newcommand{\ppiso}{\mathsf{PPISO}}
\newcommand{\boolppiso}{\mathsf{BOOL}\mbox{-}\mathsf{PPISO}}
\newcommand{\csp}{\mathsf{CSP}}
\newcommand{\gi}{\mathsf{GI}}
\newcommand{\ci}{\mathsf{CI}}
\newcommand{\rela}{\mathbf{A}}
\newcommand{\relb}{\mathbf{B}}
\newcommand{\relc}{\mathbf{C}}
\newcommand{\alga}{\mathbb{A}}
\newcommand{\algb}{\mathbb{B}}
\newcommand{\algab}{\mathbb{A}_{\relb}}

\newcommand{\idemp}{I}

\newcommand{\varv}{\mathcal{V}}
\newcommand{\variety}{\mathcal{V}}
\newcommand{\false}{\mathsf{false}}
\newcommand{\true}{\mathsf{true}}
\newcommand{\pol}{\mathsf{Pol}}
\newcommand{\inv}{\mathsf{Inv}}
\newcommand{\cc}{\mathcal{C}}
\newcommand{\alg}{\mathsf{Alg}}
\newcommand{\pitwo}{\Pi_2^p}
\newcommand{\sigmatwo}{\Sigma_2^p}
\newcommand{\pithree}{\Pi_3^p}
\newcommand{\sigmathree}{\Sigma_3^p}

\newcommand{\fancyg}{\mathcal{G}}
\newcommand{\tw}{\mathsf{tw}}

\newcommand{\mc}{\mathsf{MC}}
\newcommand{\mcs}{\mathsf{MC}_s}

\newcommand{\mcb}{\mathsf{MC_b}}

\newcommand{\qc}{\mathrm{QC}}
\newcommand{\normqc}{\mathrm{norm\mbox{-}QC}}
\newcommand{\rqc}{\mathsf{RQC\mbox{-}MC}}

\newcommand{\qcfo}{\mathrm{QCFO}}
\newcommand{\qcfofk}{\qcfo_{\forall}^k}
\newcommand{\qcfoek}{\qcfo_{\exists}^k}

\newcommand{\fo}{\mathrm{FO}}
\newcommand{\fok}{\mathrm{FO}^k}

\newcommand{\tup}[1]{\overline{#1}}

\newcommand{\nn}{\mathsf{nn}}
\newcommand{\bush}{\mathsf{bush}}
\newcommand{\width}{\mathsf{width}}

\newcommand{\un}{N^{\forall}}
\newcommand{\en}{N^{\exists}}

\newcommand{\ord}{\tup{u}}
\newcommand{\ordp}[1]{\tup{#1}}

\newcommand{\gc}{G^{-C}}

\newcommand{\ar}{\mathrm{ar}}
\newcommand{\free}{\mathsf{free}}
\newcommand{\vars}{\mathsf{vars}}

\newcommand{\f}{\mathcal{F}}

\newcommand{\pow}{\wp}

\newcommand{\N}{\mathbb{N}}

\newcommand{\param}[1]{\mathsf{param}\textup{-}#1}

\newcommand{\dom}{\mathsf{dom}}

\newcommand{\org}{\mathrm{org}^+}
\newcommand{\lay}{\mathrm{lay}^+}

\newcommand{\und}[1]{\underline{#1}}

\newcommand{\clo}{\mathsf{closure}}

\newcommand{\thick}{\mathsf{thick}}
\newcommand{\thickl}{\thick_l}
\newcommand{\localthickl}{\mathsf{local}\textup{-}\thickl}
\newcommand{\quantthickl}{\mathsf{quant}\textup{-}\thickl}

\newcommand{\lowerdeg}{\mathsf{lower}\textup{-}\mathsf{deg}}

\newcommand{\restrict}{\upharpoonright}

\renewcommand{\nu}[1]{\textup{{\small $\mathsf{nu}$}-{$ #1 $}}}

\newcommand{\case}[1]{\textup{{\small $\mathsf{case}$}-{$ #1 $}}}

\newcommand{\coclique}{\textup{{\small  $\mathsf{co}$}-{\small $\mathsf{CLIQUE}$}}}
\newcommand{\clique}{\textup{{\small $\mathsf{CLIQUE}$}}}

\newcommand{\caseclique}{\case{\clique}}
\newcommand{\casecoclique}{\case{\coclique}}

\newcommand{\fpt}{\textup{\small $\mathsf{FPT}$}}
\newcommand{\wone}{\textup{\small $\mathsf{W[1]}$}}
\newcommand{\cowone}{\textup{\small $\mathsf{co}$-$\mathsf{W[1]}$}}

\renewcommand{\S}{\mathcal{S}}
\newcommand{\G}{\mathcal{G}}

\section{Introduction} \label{section:introduction}

Model checking, the problem of deciding if a logical sentence holds on
a structure,
is a fundamental computational task that appears in many guises
throughout computer science.
In this article, we study \emph{first-order model checking},
by which we refer to
the case of this problem where one wishes to evaluate a first-order sentence
on a finite structure.
This case is of principal interest in database theory, where
first-order sentences form a basic, heavily studied class of \emph{database
  queries}, and where it is well-recognized that the problem of
evaluating
such a query on a database can be taken as a formulation of
first-order model checking.
Indeed, the investigation of model checking in first-order logic entails 
an examination of one of the simplest, most basic logics,
and it can be expected that understanding of the first-order case
should provide a well-founded basis for studying model checking
in other logics, such as those typically considered in verification
and database theory.
First-order model checking is well-known to be intractable in general:
it is PSPACE-complete.

As has been articulated~\cite{PapadimitriouYannakakis99-database,FlumGrohe06-parameterizedcomplexity},
the typical model-checking situation in the database and verification settings
is the evaluation of a relatively short sentence on a relatively large
structure.
Consequently, it has been argued that, in measuring the time
complexity
of model checking, one could reasonably allow a slow (non-polynomial-time)
preprocessing of the sentence, so long as the desired evaluation
can be performed in polynomial time following the preprocessing.
Relaxing polynomial-time computation 
to allow arbitrary preprocessing of a \emph{parameter} of a problem
instance
yields, 
in essence, the notion of \emph{fixed-parameter tractability}.
This notion of tractability is the base of \emph{parameterized
  complexity theory},
which provides a taxonomy for reasoning about and classifying
problems where each instance has an associated parameter.
We utilize this paradigm, and focus the discussion on this form of
tractability
(here, the sentence is the parameter).

A typical way to understand which types of sentences are well-behaved
and exhibit desirable, tractable behavior is to simply consider
model checking relative to a set $\Phi$ of sentences,
and to attempt to understand on which sets one has tractable model checking.
We restrict attention to sets of sentences having bounded arity.\footnote{
Note that in the case of unbounded arity, 
complexity may depend on the 
choice of representation of 
relations~\cite{ChenGrohe10-succinct}.}
Here, there have been successes in understanding 
which sets of sentences are tractable (and which are not) 
in fragments of first-order logic described by restricting
the connectives and quantifiers that may be used:
there are systematic classification results
for so-called conjunctive queries
(formed using the connectives and quantifiers 
in $\{ \wedge, \exists \}$)~\cite{GroheSchwentickSegoufin01-conjunctivequeries,Grohe07-otherside,ChenMueller13-fineclassification},
existential positive queries
($\{ \wedge, \vee, \exists \}$)~\cite{Chen14-existentialpositive}, 
and 
quantified conjunctive
queries
($\{ \wedge, \exists, \forall \}$)~\cite{ChenDalmau12-decomposingquantified,ChenMarx13-blocksorted}.
However, to the best of our knowledge, 
there has been no classification theorem for general first-order
logic,
without any restriction on the connectives and quantifiers.
In this article, we present the first such classification.

{\bf Our approach.}
In the fragments of first-order logic
where the only connective permitted is one of the binary connectives
($\{ \wedge, \vee \}$)---such as
those of conjunctive queries and 
quantified conjunctive queries---a heavily studied approach
to describing sets $\Phi$ of sentences is a \emph{graphical approach}.
In this graphical approach, one studies a sentence set $\Phi$
if it is \emph{graphical} in the following sense:
if one prenex sentence $\phi$ is contained in $\Phi$
and a second prenex sentence $\psi$ has the same prefix as $\phi$
and also has the same graph as $\phi$,
then $\psi$ is also in $\Phi$.
(By the graph of a prenex sentence $\phi$, we mean the graph whose
vertices are the variables of $\phi$ and where two vertices are 
adjacent if they occur together in an atomic formula.)
In the fragments where it was considered,
this approach 
of studying graphical sentence sets 
is not only a natural way to coarsen 
the project of classifying all sets $\Phi$ of sentences,
but in fact, can be used cleanly as a key module in 
obtaining general classifications of sentence sets:
such general classifications have recently been proved by using
the respective graphical classifications as 
black boxes~\cite{ChenMueller13-fineclassification,ChenMarx13-blocksorted}.

In this article, we adapt this graphical approach to the full
first-order setting.
To explain how this is done,
consider that a graphical set $\Phi$ of sentences satisfies
certain syntactic closure properties.
For instance, if one takes a sentence from such a graphical set $\Phi$
and
replaces the relation symbol of an atomic formula,
the resulting sentence will have the same graph,
and will hence continue to be in $\Phi$;
we refer to this property of $\Phi$ as \emph{replacement closure}.
As another example,
if one rewrites a sentence in $\Phi$ by invoking associativity 
or commutativity of the connective $\wedge$,
the resulting sentence will likewise still have the same graph
and will hence be contained in $\Phi$ also.
Inspired by these observations, we define 
a sentence set $\Phi$ to be \emph{graph-like}
if it is replacement closed and
also closed under certain well-known syntactic transformations,
such as associativity and commutativity of the binary connectives
(see Section~\ref{sect:preliminaries} for the full definition).

Our principal result is the complete tractability characterization
of graph-like sentence sets
(see Theorem~\ref{thm:main} and the corollaries that follow).
In particular,
we introduce a measure on first-order formulas which we call
\emph{thickness},
and show that a set of graph-like sentences is tractable
if and only if it has bounded thickness
(under standard complexity-theoretic assumptions).
In studying unrestricted first-order logic,
we believe that our building on the syntactic,
graphical approach---which has an established, fruitful
tradition---will
facilitate the formulation and obtention of future, more general results.

As evidence of our result's generality 
and of its faithfulness to the graphical approach,
we 
note
that the graphical classification
of quantified conjunctive queries~\cite{ChenDalmau12-decomposingquantified}
can be readily derived from our main classification result
(this is discussed in Section~\ref{sect:discussion});
it follows readily that the dual graphical classification
of quantified disjunctive queries
(also previously derived~\cite{ChenDalmau12-decomposingquantified})
can be dually derived from our main classification result.
We therefore give a single classification theorem
that naturally unifies together these two previous classifications.
Indeed, we believe that the technology that we introduce to derive
our classification yields a cleaner, deeper and more general
understanding of these previous classifications.
Observe that, for each of those classifications,
since only one binary connective is present, the two quantifiers
behave asymmetrically;
this is in contrast to the present situation, where in building
formulas,
wherever a formula may be constructed, its dual may be as well.

{\bf Parameterized complexity.}
An increasing literature investigates the following general situation:
\begin{center}
Given a parameterized problem $P$ \\whose instances
consist of two parts, where the first part is the parameter, and a set $S$,
\\define $P \llbracket S \rrbracket$ to be the restricted version of $P$\\
where $(x, y)$ is admitted as an instance iff $x \in S$.

\vspace{4pt}

Then, attempt to classify and understand, over all sets $S$,\\
the complexity of the problem $P \llbracket S \rrbracket$.
\end{center}
Examples of classifications and studies that can be cast in this situation
include~\cite{GroheSchwentickSegoufin01-conjunctivequeries,Grohe07-otherside,ChenThurleyWeyer08-inducedsubgraph,ChenGrohe10-succinct,ChenDalmau12-decomposingquantified,DurandMengel13-structuralcounting,ChenMueller13-fineclassification,JonssonLagerkvistNordh13-blowingholes}.
It is our view that this literature suffers from the defect
that there is no complexity-theoretic framework for discussing the 
families of problems obtained thusly.
As a consequence, different authors and different articles
used divergent language and notions
to present hardness results on and reductions between such problems,
and
applied
different computability assumptions on the sets $S$ 
considered.
We attempt to make a foundational contribution 
and to ameliorate this state of affairs
by presenting a complexity-theoretic framework
for handling and classifying problems of the described form.
In particular, we introduce notions such as reductions and
complexity classes 
for problems of the above type, which we formalize as \emph{case problems}
(Section~\ref{sect:case}).
Although we do not carry out this exercise here,
we believe that most of the results in the mentioned literature
can be shown to be naturally and transparently
expressible within our framework.

In order to derive our classification, we present
(within our complexity-theoretic framework) a new notion of reduction
which we call \emph{accordion reduction} and
which is crucial for the proof of our hardness result.
(See Sections~\ref{sect:accordion} and~\ref{sect:hardness} for further discussion.)
We believe that this notion of reduction may play a basic role
in future classification projects of the form undertaken here.

Let us emphasize that, while the establishment of our
main classification theorem makes use of the complexity-theoretic
framework and accompanying machinery that was just discussed
(and is presented in Sections~\ref{sect:case} and~\ref{sect:accordion}),
this framework and machinery is fully generic
in that it does not make any reference to and is not specialized to
the model checking problem.
We believe and hope that the future will find this framework
to be a suitable basis for presenting, developing and discussing
complexity classification results.

\section{Preliminaries}
\label{sect:preliminaries}

When $g: A \to B$ and $h: B \to C$ are mappings, we will 
typically use $h(g)$ to denote their composition.
When $f$ is a partial mapping, we use $\dom(f)$ to denote its domain,
and we use $f \restrict S$ to denote its restriction to the set $S$.
We will use $\pi_i$ to denote the $i$th projection,
that is, the mapping that, given a tuple, returns the value in the tuple's
$i$th coordinate.
For a natural number $k$, we use $\und{k}$ to denote the set $\{ 1, \ldots, k \}$.

\subsection{First-order logic}
We use the syntax and semantics of first-order logic
as given by a standard treatment of the subject.
In this article, we restrict to relational first-order logic,
so the only symbols in signatures are relation symbols.
We use letters such as $\rela$, $\relb$ to denote structures
and $A$, $B$ to denote their respective universes.
The reader may assume for concreteness that relations of structures
are represented using lists of tuples,
although in general, we will deal with the setting of bounded arity,
and natural representations of relations will be 
(for the complexity questions at hand) equivalent to this one.
We assume that equality is not built-in to first-order logic,
so a \emph{formula} is created from \emph{atoms},
the usual connectives ($\neg$, $\wedge$, $\vee$)
and quantification ($\exists$, $\forall$);
by an \emph{atom}, we mean a formula
$R(v_1, \ldots, v_k)$ where a relation symbol 
is applied to a tuple of variables (of the arity of the symbol).
A formula is \emph{positive} if it does not contain negation ($\neg$).
We use $\free(\phi)$ to denote the set of free variables
of a formula $\phi$.
The \emph{width} of a formula $\phi$,
denoted by $\width(\phi)$, is 
defined as the maximum of $|\free(\psi)|$
over all subformulas $\psi$ of $\phi$.
The \emph{arity} of a formula is the maximum arity over all relation
symbols
that occur in the formula.

We use 
$\phi_1 \wedge \cdots \wedge \phi_n$ as notation 
for $(\cdots ((\phi_1 \wedge \phi_2) \wedge \phi_3) \cdots)$,
and 
$\phi_1 \vee \cdots \vee \phi_n$ 
is defined dually.
We refer to a formula of the shape
$\phi_1 \wedge \cdots \wedge \phi_n$ 
as a \emph{conjunction},
and respectively refer to a formula of the shape
$\phi_1 \vee \cdots \vee \phi_n$ 
as a \emph{disjunction}.
Let $\Phi$ be a set of formulas.
A \emph{positive combination} of formulas from $\Phi$
is a formula in the closure of $\Phi$ under conjunction and
disjunction.
A \emph{CNF} of formulas from $\Phi$ is a 
conjunction of disjunctions of formulas from $\Phi$, and
a \emph{DNF} of formulas from $\Phi$ is a
disjunction of conjunctions of formulas from $\Phi$.

We will use the following terminology which is particular to this article.
A subformula $\psi$ of a formula $\phi$ is
a \emph{positively combined subformula} 
if, in viewing $\phi$ as a tree, all nodes 
on the unique path
from the root of 
$\phi$
to the parent of the root of $\psi$ (inclusive)
are conjunctions or disjunctions.
We say that a formula $\phi$ is \emph{variable-loose} if
no variable is quantified twice, and no variable is both quantified
and a free variable of $\phi$.
We say that a formula is \emph{symbol-loose} if
no relation symbol appears more than once in the formula.
We say that a formula is \emph{loose} 
if it is both variable-loose and symbol-loose.

We now present a number of syntactic transformations;
that each preserves logical equivalence is well-known.\footnote{
Note that in the transformation $(\gamma)$, we permit that $\phi$ 
is not present.}

\begin{itemize}

\item[($\alpha$)] Associativity and commutativity of $\wedge$ and
  $\vee$

\item [($\beta$)] 
$\exists x (\bigvee_{i = 1}^n \phi_i) \equiv \bigvee_{i=1}^n (\exists
x \phi_i)$,

$\forall y (\bigwedge_{i = 1}^n \phi_i) \equiv \bigwedge_{i=1}^n (\forall y \phi_i)$

\item [($\gamma$)] 
$\exists x (\phi \wedge \psi) \equiv (\exists x \phi) \wedge \psi
\textup{ if $x \notin \free(\psi)$}$,

$\forall y (\phi \vee \psi) \equiv (\forall y \phi) \vee \psi
\textup{ if $y \notin \free(\psi)$}$

\item [($\delta$)] (Distributivity for $\wedge$ and $\vee$)

$\phi \wedge (\psi \vee \psi') \equiv 
(\phi \wedge \psi) \vee (\phi \wedge \psi')$, 

$\phi \vee (\psi \wedge \psi') \equiv 
(\phi \vee \psi) \wedge (\phi \vee \psi')$

\item [($\epsilon$)] (DeMorgan's laws)

$\neg \exists v \phi \equiv \forall v \neg \phi$,
$\neg \forall v \phi \equiv \exists v \neg \phi$

$\neg (\phi \wedge \psi) \equiv \neg \phi \vee \neg \psi$,
$\neg (\phi \vee \psi) \equiv \neg \phi \wedge \neg \psi$

\end{itemize}
We say that a set $\Phi$ of formulas is \emph{syntactically closed} if,
for each $\phi \in \Phi$, when a formula $\phi'$ can be obtained from
$\phi$ by applying one of the syntactic transformations
$(\alpha)$, $(\beta)$, $(\gamma)$, $(\delta)$, $(\epsilon)$
 to a subformula of
$\phi$,
it holds that $\phi' \in \Phi$.
The \emph{syntactic closure} of a formula $\phi$ is the 
intersection of all syntactically closed sets that contain $\phi$.

Let us say that a formula $\phi'$ on signature $\sigma'$
is obtainable from a formula $\phi$ on signature $\sigma$
by \emph{replacement} if $\phi'$ can be obtained from $\phi$ by 
replacing instances of relation symbols in $\phi$ 
with instances of relation symbols from $\sigma'$
(without making any other changes to $\phi$).

\begin{example}
Let $\phi$ be the formula $\forall y \exists x \exists x' (E(y, x) \wedge E(x, x'))$.
Let $\tau$ be the signature $\{ E, F \}$ where $E$ and $F$ are
relation symbols of binary arity.
Each of the four formulas $\phi$,
$\forall y \exists x \exists x' (F(y, x) \wedge E(x, x'))$,
$\forall y \exists x \exists x' (E(y, x) \wedge F(x, x'))$, and
$\forall y \exists x \exists x' (F(y, x) \wedge F(x, x'))$
is obtainable from $\phi$ by replacement;
moreover, these are the only four formulas over signature $\tau$
that are obtainable from $\phi$ by replacement.
\end{example}

Let us say that a set of formulas $\Phi$ is \emph{replacement closed}
if, for each $\phi \in \Phi$, when $\phi'$ is obtainable from $\phi$
by replacement, it holds that $\phi' \in \Phi$.

\begin{definition}
A set of formulas $\Phi$ is \emph{graph-like} if it is syntactically
closed
and replacement closed.
\end{definition}

\subsection{Graphs and hypergraphs}
When $S$ is a set, we use $K(S)$ to denote the set containing
all size $2$ subsets of $S$, that is,
$K(S) = \{ \{ s, s' \} ~|~ s, s' \in S, s \neq s' \}$.
For us, a \emph{graph} is a pair $(V, E)$ where $V$ is a set
and $E \subseteq K(V)$.

Here, a \emph{hypergraph} $H$ is a pair $(V(H), E(H))$
consisting of a vertex set $V(H)$ and an edge set $E(H)$
which is a subset of the power set $\pow(V(H))$.
We will sometimes specify a hypergraph just by specifying the edge
set $E$,
in which case the vertex set is understood to be $\bigcup_{e \in E} e$.
We associate a hypergraph $(V(H), E(H))$ with the graph
$(V(H), \bigcup_{e \in E(H)} K(e))$, and thereby refer to (for
example)
the treewidth of or an elimination ordering of a hypergraph.

An \emph{elimination ordering} of a graph $(V, E)$
is a pair $$((v_1, \ldots, v_n), E')$$ that
consists of a superset $E'$ of $E$ and an ordering
$v_1, \ldots, v_n$ of the elements of $V$ such that
the following property holds: for each vertex $v_k$,
any two distinct lower neighbors $v, v'$ of $v_k$ are
adjacent in $E'$, that is, $\{ v, v' \} \in E'$; 
here, a \emph{lower neighbor} of a vertex $v_k$
is a vertex $v_i$ such that $i < k$ and $\{ v_i, v_k \} \in E'$.
Relative to an elimination ordering $e$, 
we define the \emph{lower degree} of a vertex $v$,
denoted by $\lowerdeg(e, v)$,
 to be the
number of lower neighbors that it has;
we define $\lowerdeg(e)$ to be the maximum of
$\lowerdeg(e, v)$ over all vertices $v$.
We assume basic familiarity with the theory of treewidth~\cite{Bodlaender98-arboretum}.
The following is a key property of treewidth that we will utilize;
here, we use $\tw(H)$ to denote the treewidth of $H$.
\begin{prop}
\label{prop:tw-gives-elimination-ordering}
For each $k \geq 2$, there exists a polynomial-time algorithm that,
given as input a hypergraph $H$ with a distinguished edge $f$,
will return the following whenever $\tw(H) < k$: 
an elimination ordering $e = ((v_1, \ldots, v_m), E)$
with $\lowerdeg(e) = \tw(H)$ and where
$\{ v_1, \ldots, v_{|f|} \} = f$.
\end{prop}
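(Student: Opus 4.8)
The plan is to reduce this to the well-known classical fact that treewidth less than $k$ is witnessed (up to a constant factor, here exactly) by an elimination ordering of width $\tw(H)$, plus the observation that one can force a prescribed edge $f$ to form the initial segment of the ordering by working with a slightly augmented graph. I first recall the standard correspondence: for the graph $G$ associated with $H$, $\tw(G) < k$ iff there is an elimination ordering $((v_1,\dots,v_m),E')$ with $\lowerdeg$ equal to $\tw(G)$; such an ordering can be extracted in polynomial time once a width-$(k-1)$ tree decomposition is known, and by Bodlaender's algorithm (or any of the fixed-parameter algorithms for treewidth, which for each fixed $k$ run in linear or polynomial time) such a decomposition can be computed in polynomial time whenever it exists. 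This gives the algorithm in the absence of the edge-constraint: run the treewidth algorithm for parameter $k$; if it reports $\tw(H) \geq k$, output nothing (or anything — the specification only constrains behavior when $\tw(H) < k$); otherwise convert the decomposition to an elimination ordering.

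The second step handles the distinguished edge $f$. The key trick is that in any tree decomposition of $H$ (equivalently of $G$), since $f \in E(H)$ all vertices of $f$ together induce a clique in $G$, hence there is a bag containing all of $f$; I would root the decomposition at such a bag. Equivalently, at the level of elimination orderings: I would add to $G$ all edges $K(f)$ — this does not change the treewidth since $f$ is already a clique — and then argue that one can choose an elimination ordering in which the vertices of $f$ come first. Concretely, take any width-$(\tw(H))$ elimination ordering $((v_1,\dots,v_m),E')$ of $G \cup K(f)$; since at every elimination step the not-yet-eliminated vertices of $f$ still form a clique (eliminating a vertex only adds edges), one can repeatedly move a vertex of $f$ to the front: formally, reorder so that the $|f|$ vertices of $f$ are listed first, in any order, followed by the remaining vertices in their original relative order, and take the edge set to be $E'$ together with $K(f)$. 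One checks the elimination-ordering property is preserved and the lower degree does not increase — the vertices of $f$ now have at most $|f|-1 \leq \tw(H)$ lower neighbors (all within $f$), and a later vertex's lower-neighbor set only shrinks or is unchanged under this reordering since $f$ was already a clique in $E'$. This is all polynomial-time.

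The main obstacle — really the only subtlety — is verifying that forcing $f$ to the front keeps $\lowerdeg$ at exactly $\tw(H)$ rather than merely bounding it by something like $\tw(H)+|f|$. The point to get right is that we must not naively prepend $f$ to an arbitrary ordering; we must start from an ordering of $G \cup K(f)$, so that $f$ is genuinely a clique in the fill-in graph $E'$, and then use that a clique can always be eliminated last-to-first among itself without creating fill edges outside the clique. I would prove the needed monotonicity lemma: if $C$ is a clique in the fill-in graph of an elimination ordering $o$, then moving the vertices of $C$ to the front of $o$ (in any internal order) and keeping the same fill-in edge set yields a valid elimination ordering of no larger lower degree. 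Granting that lemma, the whole statement follows by plugging $C = f$. Finally I would note the hypothesis $k \geq 2$ is only needed so that $|f| \geq 1$-sized and $2$-sized edges behave uniformly and so that the notion of "$\tw(H) < k$" is nonvacuous; the construction itself is uniform in $k$.

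\medskip

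\noindent\emph{Remark.} One could alternatively phrase the entire argument on tree decompositions: compute a width-$(k-1)$ decomposition (polynomial time for fixed $k$), re-root at a bag $\supseteq f$, then do a standard traversal producing an elimination ordering that peels off leaf-bag vertices last, ensuring the vertices of $f$ are eliminated first and hence appear first in $(v_1,\dots,v_m)$; the fill-in set $E$ is taken to be the union of $K(b)$ over all bags $b$. This yields $\lowerdeg(e) \leq$ width $= \tw(H)$, and equality holds because $\lowerdeg(e) \geq \tw(H)$ for any elimination ordering. I expect the tree-decomposition phrasing to be the cleanest to write out in full.
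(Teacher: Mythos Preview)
The paper does not prove this proposition; it is stated as a standard consequence of the theory of treewidth (with a citation to Bodlaender), so there is nothing in the paper to compare against. The question is simply whether your argument stands on its own.

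Your main approach, via the clique-moving lemma, has a genuine gap. The claim that ``a later vertex's lower-neighbor set only shrinks or is unchanged'' after moving $f$ to the front is false --- it is exactly the wrong way round. If some $c \in f$ was originally \emph{after} a non-$f$ vertex $v$ and $\{c,v\} \in E'$, then moving $c$ before $v$ \emph{adds} $c$ to $v$'s lower-neighbor set. Concretely: take the path on $a,b,c,d$ (hypergraph edges $\{a,b\},\{b,c\},\{c,d\}$), with the width-$1$ ordering $(a,b,c,d)$ and $E'=\{ab,bc,cd\}$. With $f=\{c,d\}$ moved to the front, the ordering $(c,d,a,b)$ gives $b$ the lower neighbors $\{a,c\}$, which are not adjacent in $E'$; so the result is not an elimination ordering at all, and enlarging $E'$ to repair this would raise $\lowerdeg$ above $\tw(H)=1$. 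The lemma as you state it is simply false.

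Your Remark, by contrast, is the standard route and is sound: compute an optimal tree decomposition (polynomial time for fixed $k$), root it at a bag containing $f$, and extract an elimination ordering in which vertices whose top bag is the root come first, with the vertices of $f$ listed first among those. Every vertex's lower neighbors then lie in its own top bag minus itself, so $\lowerdeg(e)$ is at most the width, hence $\lowerdeg(e) = \tw(H)$ since no elimination ordering can do better. One directional slip to fix: in the paper's convention, appearing first in $(v_1,\dots,v_m)$ means being eliminated \emph{last}, not first; you want root-bag vertices eliminated last, i.e., leaf-bag vertices peeled off \emph{first}. That is cosmetic --- the argument in the Remark is the correct one, and you should drop the clique-moving approach.
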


The treewidth of (for example) a set of graphs $\fancyg$
is the set $\{ \tw(G) ~|~ G \in \fancyg \}$;
it is said to be \emph{unbounded} if this set is infinite,
and \emph{bounded} otherwise.
We employ similar terminology, in general, when dealing with
a complexity measure defined on a class of objects.

\section{Parameterized complexity}

In this section, we specify the framework of parameterized complexity
to be used in this article.

Throughout, we use $\Sigma$ to denote an alphabet over which languages
are defined.
As is standard, we will sometimes view elements of $\Sigma^* \times
\Sigma^*$
as elements of $\Sigma^*$.
A \emph{parameterization} is a mapping from $\Sigma^*$ to $\Sigma^*$.
A \emph{parameterized problem} is a pair $(Q, \kappa)$ 
consisting of a language $Q \subseteq \Sigma^*$ and a parameterization
$\kappa: \Sigma^* \to \Sigma^*$.
A \emph{parameterized class} is a set of parameterized problems.

\begin{assumption}
\label{assumption:non-trivial-languages}
We assume that each parameterized problem $(Q, \kappa)$
has a non-trivial language $Q$, that is, that neither $Q = \Sigma^*$
nor $Q = \emptyset$.
\end{assumption}

\begin{remark}
Let us remark on a difference between our setup and that of other
treatments.
Elsewhere, a \emph{parameterization} is often defined
to be a mapping from $\Sigma^*$ to $\N$, and in the context of
query evaluation, the parameterization studied is typically
the size of the query.  
In contrast, this article takes the parameterization to be the query itself.
Since there are finitely many queries of any fixed size, 
model checking on a set of queries will be fixed-parameter tractable
(that is, in the class $\fpt$, defined below)
under one of these parameterizations if and only if it is under the other.
However, we find that---as concerns the theory in this
article---taking 
the query itself to be the parameter
allows for a significantly cleaner presentation.
One example reason is that
the reductions we present will generally be ``slice-to-slice'',
that is, they will send all instances with the same query to instances
that share another query.
Indeed,
to understand the complexity of a set of queries, we will 
apply a closure operator to pass to a larger set of queries
having the same complexity, using the notion of 
\emph{accordion reduction}
(see Sections~\ref{sect:accordion}
and~\ref{sect:hardness});
we believe that the theory
justifying this passage is most cleanly expressed under the used
parameterization.

Let us also remark that we do not put in effect any background
assumption on the
computability/complexity
of parameterizations; this is because our theory does not 
require such an assumption.
%
$\footbox$
\end{remark}

We now define what it means for a partial mapping $r$
to be \emph{FPT-computable}; we actually first define
a non-uniform version of this notion.
This definition coincides with typical definitions in the case 
that $r$ is a total mapping; 
in the case that $r$ is a partial mapping, we use a ``promise''
convention,
that is, we do not impose any mandate on the behavior
of the respective algorithm in the case that the input $x$
is not in the domain of $r$.

\begin{definition}
\label{def:fpt-computable}
Let $\kappa: \Sigma^* \to \Sigma^*$ be a parameterization.

A partial mapping $r: \Sigma^* \to \Sigma^*$ is 
\emph{nu-FPT-computable
with respect to $\kappa$}
if there exist a function $f: \Sigma^* \to \N$ and a polynomial $p: \N \to \N$
such that
for each $k \in \Sigma^*$, 
there exists an algorithm $A_k$ satisfying the following condition:
on each string $x \in \dom(r)$ such that $\kappa(x) = k$,
the algorithm $A_k$ computes $r(x)$ within time
$f(\kappa(x))p(|x|)$.

A partial mapping $r: \Sigma^* \to \Sigma^*$ is 
\emph{FPT-computable
with respect to $\kappa$}
if, in the just-given definition, 
the function $f$ can be chosen to be computable
and
 there exists a single algorithm $A$ that can play
the role of each algorithm $A_k$;
formally,
if there exist a computable function $f: \Sigma^* \to \N$, 
a polynomial $p: \N \to \N$,
and an algorithm $A$ such that 
for each $k \in \Sigma^*$, the above condition is satisfied when $A_k$ is
set equal to $A$.
$\footbox$
\end{definition}

\begin{definition}
We define $\fpt$ to be the class that contains a parameterized problem
$(Q, \kappa)$
if and only if
the characteristic function of $Q$ is FPT-computable
with respect to $\kappa$.
\end{definition}

\newcommand{\powfin}{\pow_{\mathsf{fin}}}

We now introduce the notion of a reduction between parameterized
problems.  When $A$ is a set, we use $\powfin(A)$ to denote
the set containing all finite subsets of $A$.

\begin{definition}
Let $(Q, \kappa)$ and $(Q', \kappa')$
be parameterized problems.
A \emph{FPT-reduction} (respectively, \emph{nu-FPT-reduction}) 
from $(Q, \kappa)$ to $(Q', \kappa')$
is a total mapping $g: \Sigma^* \to \Sigma^*$ that 
is FPT-computable 
(respectively, nu-FPT-computable) 
with respect to $\kappa$ 
and
a computable (respectively, not necessarily computable)
mapping  $h: \Sigma^* \to \powfin(\Sigma^*)$
such that:
\begin{itemize}
\item[(1)] for each $x \in \Sigma^*$, it holds that
$x \in Q$ if and only if $g(x) \in Q'$; and
\item[(2)] for each $x \in \Sigma^*$,
it holds that $\kappa'(g(x)) \in h(\kappa(x))$.
\end{itemize}
\end{definition}

\begin{assumption}
\label{assumption:closure-under-reduction}
We assume that each parameterized class $C$
is closed under FPT-reductions, that is,
if $(Q', \kappa')$ is in $C$ and 
$(Q, \kappa)$ FPT-reduces to $(Q', \kappa')$,
then $(Q, \kappa)$ is in $C$.
\end{assumption}

\begin{prop}
\label{prop:fpt-reductions-compose}
Let $(Q, \kappa)$, $(Q', \kappa')$, and $(Q'', \kappa'')$
be parameterized problems.
\begin{itemize}

\item 
If 
$g$ is a FPT-reduction from $(Q, \kappa)$ to $(Q', \kappa')$
and
$h$ is a FPT-reduction from $(Q', \kappa')$ to $(Q'', \kappa'')$,
then their composition $h(g)$ 
is a FPT-reduction from $(Q, \kappa)$ to $(Q'', \kappa'')$.

\item Similarly,
if 
$g$ is a nu-FPT-reduction from $(Q, \kappa)$ to $(Q', \kappa')$
and
$h$ is a nu-FPT-reduction from $(Q', \kappa')$ to $(Q'', \kappa'')$,
then their composition $h(g)$ 
is a nu-FPT-reduction from $(Q, \kappa)$ to $(Q'', \kappa'')$.

\end{itemize}
Also, each FPT-reduction from $(Q, \kappa)$ to $(Q', \kappa')$
is an nu-FPT-reduction from $(Q, \kappa)$ to $(Q', \kappa')$.
\end{prop}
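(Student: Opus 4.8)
The plan is to verify the three assertions of Proposition~\ref{prop:fpt-reductions-compose} essentially by unwinding the definitions, since composition of FPT-reductions is a standard bookkeeping exercise. For the first bullet, suppose $g = (g_1, h_1)$ witnesses an FPT-reduction from $(Q,\kappa)$ to $(Q',\kappa')$ and $g = (g_2, h_2)$ witnesses one from $(Q',\kappa')$ to $(Q'',\kappa'')$ (slightly abusing notation by writing $g$ for the pair). First I would check condition~(1): for each $x$, $x \in Q$ iff $g_1(x) \in Q'$ iff $g_2(g_1(x)) \in Q''$, so the many-one correctness chains through immediately. Next, for condition~(2), I would take $h := \bigcup$ applied pointwise through $h_1$ — precisely, define $h(k) = \bigcup_{k' \in h_1(k)} h_2(k')$, which is a finite subset of $\Sigma^*$ since $h_1(k)$ is finite and each $h_2(k')$ is finite; this is computable because $h_1$ and $h_2$ are. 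Then $\kappa''(g_2(g_1(x))) \in h_2(\kappa'(g_1(x)))$ and $\kappa'(g_1(x)) \in h_1(\kappa(x))$, so $\kappa''(g_2(g_1(x))) \in h(\kappa(x))$, as required.

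The one genuinely non-routine point — and the step I expect to be the main obstacle — is showing that the composed map $g_2 \circ g_1$ is FPT-computable with respect to $\kappa$ (and, in the nu case, nu-FPT-computable). The subtlety is that $g_2$ is FPT-computable with respect to $\kappa'$, not with respect to $\kappa$, so one must bound the running time of $g_2$ on the input $g_1(x)$ using the parameter $\kappa(x)$. The resolution: let $f_1, p_1$ witness FPT-computability of $g_1$ and $f_2, p_2$ witness that of $g_2$. On input $x$ with $\kappa(x) = k$, computing $g_1(x)$ takes time $f_1(k) p_1(|x|)$, so in particular $|g_1(x)| \leq f_1(k) p_1(|x|)$. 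Running $g_2$ on $g_1(x)$ then takes time $f_2(\kappa'(g_1(x))) p_2(|g_1(x)|)$. To bound $f_2(\kappa'(g_1(x)))$ by a function of $k$ alone, I invoke condition~(2) of the first reduction: $\kappa'(g_1(x)) \in h_1(k)$, a finite set depending only on $k$, so $f_2(\kappa'(g_1(x))) \leq \max_{k' \in h_1(k)} f_2(k') =: f_2'(k)$, which is computable (since $h_1$ is computable and $f_2$ is computable) and depends only on $k$. Hence the total time is bounded by $f_1(k) p_1(|x|) + f_2'(k) p_2(f_1(k) p_1(|x|))$, which has the form $f(k) p(|x|)$ for a suitable computable $f$ and polynomial $p$ (using that $p_2$ is a polynomial, so $p_2(f_1(k) p_1(|x|))$ is bounded by a product of a function of $k$ and a polynomial in $|x|$). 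A single algorithm $A$ — run the algorithm for $g_1$, then feed the output to the algorithm for $g_2$ — plays the role of all the $A_k$'s, so FPT-computability (not merely the non-uniform version) is preserved.

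For the second bullet, the nu case, I would observe that the same argument applies verbatim except that $h_1, h_2$ need not be computable and a single uniform algorithm need not exist: one simply takes, for each $k$, the algorithm obtained by composing $A^1_k$ (computing $g_1$ on slice $k$) with the appropriate $A^2_{k'}$; the time bound $f(k)p(|x|)$ goes through with $f$ now merely a function, not necessarily computable, defined by the same $\max$ expression over the (now possibly non-computable) finite set $h_1(k)$. The correctness conditions~(1) and~(2) are purely set-theoretic and do not use computability, so they transfer unchanged. Finally, the last assertion — that every FPT-reduction is an nu-FPT-reduction — is immediate from the definitions, since FPT-computable implies nu-FPT-computable (a computable $f$ is a function, and a single algorithm $A$ serves as each $A_k$) and a computable $h$ is in particular a valid (not-necessarily-computable) $h$; no work beyond citing Definition~\ref{def:fpt-computable} and the definition of FPT-reduction is needed here.
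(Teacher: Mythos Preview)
The paper states this proposition without proof. Your argument for the first bullet and for the final assertion is correct and is the standard one; in particular, your identification of the key step---bounding $f_2(\kappa'(g_1(x)))$ by the function $k \mapsto \max_{k' \in h_1(k)} f_2(k')$ of $\kappa(x)$ alone---is exactly right, and it works precisely because in the uniform case a \emph{single} algorithm for $g_2$ is available and can simply be applied to $g_1(x)$.

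The second bullet is where your argument has a genuine gap. You write that for each $k$ one composes $A^1_k$ with ``the appropriate $A^2_{k'}$'', but you never explain how the single algorithm $A_k$ is to determine which $k'$ is appropriate: as $x$ ranges over the $\kappa$-slice $k$, the value $\kappa'(g_1(x))$ may range over all of the finite set $h_1(k)$, and nothing in the hypotheses lets $A_k$ compute $\kappa'(g_1(x))$ (the parameterization $\kappa'$ is not assumed computable, and the family $(A^2_{k'})$ is not uniform). Running every $A^2_{k'}$ for $k' \in h_1(k)$ does not help, since each may return garbage outside its own $\kappa'$-slice, and there is no evident way to pick the correct output. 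In fact, without an additional assumption the statement as written appears to fail: take $\kappa$ and $\kappa''$ constant, $\kappa' = \chi_H$ for an undecidable set $H$, $g_1$ the identity, $g_2 = \chi_H$, and $Q = Q' = H$, $Q'' = \{1\}$; one checks directly that $g_1$ and $g_2$ are valid nu-FPT-reductions, yet $g_2 \circ g_1 = \chi_H$ would have to be computed by one algorithm on the unique $\kappa$-slice, which is impossible. What rescues the paper's actual uses of the proposition is that the intermediate parameterization there is $\pi_1$, which is computable; under that extra hypothesis your sketch does go through, since $A_k$ can hard-code the finitely many $A^2_{k'}$ for $k' \in h_1(k)$, compute $\kappa'(g_1(x))$, and dispatch accordingly.
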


We may now define, for each parameterized class $C$,
a non-uniform version of the class, denoted by $\nu{C}$.

\begin{definition} (non-uniform classes)
When $C$ is a parameterized class, we define $\nu{C}$
to be the set that contains each parameterized problem
that has a nu-FPT-reduction to a problem in $C$.
\end{definition}

\begin{remark}
It is straightforwardly verified that a parameterized problem
$(Q, \kappa)$ 
is in the class $\nu{\fpt}$ (under the above definitions) 
if and only if the characteristic function of $Q$ is
nu-FPT-computable with respect to $\kappa$.
\end{remark}

We next present two notions of hardness for parameterized classes.

\begin{definition} (hardness)
Let $C$ be a parameterized class.
We say that
a problem $(Q, \kappa)$ is $C$-hard if every problem in 
$C$ has a FPT-reduction to $(Q, \kappa)$.
We say that 
a problem $(Q, \kappa)$ is non-uniformly $C$-hard if every problem in
$C$ has a nu-FPT-reduction to $(Q, \kappa)$.
\end{definition}

We end this section by observing some basic closure properties of
what we call \emph{degree-bounded functions},
which, roughly speaking, are the functions which can serve as
the running time of algorithms in the definition of 
\emph{FPT-computable} (Definition~\ref{def:fpt-computable}).
Let $\kappa: \Sigma^* \to \Sigma^*$ be a parameterization.
A partial function $T: \Sigma^* \to \N$ is
\emph{degree-bounded} with respect to $\kappa$
if there exist a computable function $f: \Sigma^* \to \N$
and a polynomial $p: \N \to \N$ such that,
for each $x \in \dom(T)$,
it holds that $T(x) \leq f(\kappa(x)) p(|x|)$.
We will make use of the observation that
a partial mapping $h: \Sigma^* \to \Sigma^*$ is FPT-computable
with respect to $\kappa$ 
if and only if
there is a degree-bounded function $T$
(with $\dom(T) \supseteq \dom(h)$)
and an algorithm
that, for all $x \in \dom(h)$,
computes $h(x)$ within time $T(x)$.

The following proposition will be of use in establishing that
functions are degree-bounded.

\begin{prop}
\label{prop:db-closure}
Let $\kappa$ be a parameterization,
and let $T_1, \ldots, T_m: \Sigma^* \to \N$ be
partial functions sharing the same domain.

\begin{enumerate}

\item 
\label{closure:sum}
If each of $T_1, \ldots, T_m$
is degree-bounded with respect to $\kappa$, then
$T_1 + \cdots + T_m$ is as well.

\item 
\label{closure:product}
If each of $T_1, \ldots, T_m$
is degree-bounded with respect to $\kappa$, then
the product $T_1 \cdots T_m$ is as well.

\item 
\label{closure:polynomial}
Let $q: \N \to \N$ be a polynomial; 
if a partial function $T: \Sigma^* \to \N$
is degree-bounded with respect to $\kappa$, then
$q(T)$ is as well.


\end{enumerate}
\end{prop}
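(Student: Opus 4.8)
The plan is to unwind the definition of \emph{degree-bounded} in each of the three cases and, from the supplied witnesses, to assemble a single computable function and a single polynomial witnessing the conclusion; every step is elementary arithmetic, together with the standing fact that the computable functions $\Sigma^* \to \N$ are closed under the finitary operations used below. Write $D$ for the common domain of $T_1, \ldots, T_m$. For parts~\ref{closure:sum} and~\ref{closure:product} I would first fix, for each $i \in \und{m}$, a computable $f_i \colon \Sigma^* \to \N$ and a polynomial $p_i \colon \N \to \N$ with $T_i(x) \le f_i(\kappa(x))\, p_i(|x|)$ for all $x \in D$.

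For part~\ref{closure:sum} I would take $f = f_1 + \cdots + f_m$ and $p = p_1 + \cdots + p_m$; then $f$ is computable, $p$ is a polynomial, and for every $x \in D$ (which is also the domain of $T_1 + \cdots + T_m$), using that each $p_i(|x|) \le p(|x|)$ and that all quantities are nonnegative,
\[
(T_1 + \cdots + T_m)(x) \;\le\; \sum_{i=1}^m f_i(\kappa(x))\, p_i(|x|) \;\le\; \Big(\textstyle\sum_{i=1}^m f_i(\kappa(x))\Big)\, p(|x|) \;=\; f(\kappa(x))\, p(|x|).
\]
For part~\ref{closure:product} I would take $f = f_1 \cdots f_m$ and $p = p_1 \cdots p_m$; again $f$ is computable, $p$ is a polynomial, and for every $x \in D$,
\[
(T_1 \cdots T_m)(x) \;\le\; \prod_{i=1}^m \big( f_i(\kappa(x))\, p_i(|x|) \big) \;=\; f(\kappa(x))\, p(|x|).
\]

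For part~\ref{closure:polynomial} I would fix a computable $f$ and a polynomial $p$ witnessing that $T$ is degree-bounded, write $q(n) = \sum_{j=0}^d c_j n^j$ with $d = \deg q$, and set $C = 1 + \sum_{j=0}^d |c_j|$, so that $q(n) \le C(n+1)^d$ for all $n \in \N$ (since $n^j \le (n+1)^d$ for $j \le d$). Then, using $ab + 1 \le (a+1)(b+1)$ for $a, b \ge 0$, for every $x \in \dom(T)$,
\[
q(T(x)) \;\le\; C\,(T(x) + 1)^d \;\le\; C\,\big(f(\kappa(x))\, p(|x|) + 1\big)^d \;\le\; C\,\big(f(\kappa(x)) + 1\big)^d\,\big(p(|x|) + 1\big)^d ,
\]
so that $f'(k) = C(f(k)+1)^d$ (computable) and $p'(n) = (p(n)+1)^d$ (a polynomial $\N \to \N$) witness that $q(T)$ is degree-bounded.

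The hard part, such as it is, lies only in part~\ref{closure:polynomial}: since $q$ need not be monotone and may have negative coefficients, one cannot simply write ``$q(T(x)) \le q(f(\kappa(x))\,p(|x|))$'', so the fix is to first dominate $q$ by the monotone majorant $n \mapsto C(n+1)^d$ and only afterwards split the $\kappa(x)$-dependent factor from the $|x|$-dependent factor via the product inequality. In each part one also checks --- routinely --- that the assembled function $f$ (respectively $f'$) is still computable, as computable functions are closed under finite sums, finite products, and composition with integer polynomials.
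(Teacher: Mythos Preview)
Your proof is correct and follows essentially the same approach as the paper: parts~\ref{closure:sum} and~\ref{closure:product} are identical to the paper's (take $f = \sum f_i$, $p = \sum p_i$ and $f = \prod f_i$, $p = \prod p_i$ respectively), and part~\ref{closure:polynomial} differs only cosmetically. The paper simply invokes $q(f(k)p(n)) \le q(f(k))\,q(p(n))$ and sets $f' = q \circ f$, $p' = q \circ p$, tacitly treating $q$ as a polynomial with nonnegative integer coefficients (the standard convention for running-time bounds), whereas you first pass to the monotone majorant $C(n+1)^d$; your extra care is harmless and arguably cleaner, but not a genuinely different route.
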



\begin{proof}
For~(\ref{closure:sum}), one can use the fact that
$f_1(k) p_1(n) + \cdots + f_m(k) p_m(n)$ is bounded above by
$(f_1(k) + \cdots + f_m(k))(p_1(n) + \cdots + p_m(n))$.

For~(\ref{closure:product}), it suffices to observe that the product
$(f_1(k) p_1(n)) \cdots (f_m(k) p_m(n))$ can be grouped as
$(f_1(k) \cdots f_m(k))(p_1(n) \cdots p_m(n))$.

For~(\ref{closure:polynomial}), it suffices to observe that
$q(f(k)p(n))$ is bounded above by $q(f(k)) q(p(n))$;
note that $q(p)$ is the composition of two polynomials, and hence
itself a polynomial.
%
\end{proof}

\section{Case complexity}
\label{sect:case}

A number of previous works focus on a decision problem $Q$ where
each instance consists of two parts, 
and obtain restricted versions
of the problem
by taking sets $S \subseteq \Sigma^*$
and, for each such set, considering the restricted version
where one allows only instances where (say) the first of the parts
falls into $S$.
This is precisely the type of restriction that we will consider here.
We are interested in first-order model checking, which 
we view as the problem of deciding, given a first-order sentence and a
structure,
whether or not the sentence holds on the structure;
our particular interest is to study restricted versions of this
problem 
where the allowed sentences come from a set $S$.

It has been useful 
(see for instance the articles~\cite{ChenDalmau12-decomposingquantified,ChenMueller13-fineclassification})
and is useful in the present article to 
present reductions between such restricted versions of problems.
In order to facilitate our doing this, we present a framework
wherein we formalize this type of restricted version of problem
as a \emph{case problem}, and then present a notion of reduction
for comparing case problems.
We believe that our notion of reduction, called \emph{slice
  reduction},
faithfully abstracts out precisely the key useful properties 
that are typically present in such reductions in the literature.
Note that, in the existing literature, different articles imposed
different computability assumptions on the sets $S$ considered
(assumptions used include that of computable enumerability, 
of computability, and of no computability assumption).
One feature of our framework is that such reductions can be carried
out and discussed independently of whether or not any such 
computability assumption is placed on the sets $S$;
a general theorem (Theorem~\ref{thm:slice-reduction-gives-fpt-reduction}) 
allows one to derive
normal parameterized reductions from slice reductions,
where the exact computability of the reduction derivable
depends on the computability assumption placed on the sets $S$.

We now introduce our framework.
Suppose that $Q \subseteq \Sigma^* \times \Sigma^*$ 
is a language of pairs; for a set $T \subseteq \Sigma^*$,
we use $Q_T$ to denote the language
$Q \cap (T \times \Sigma^*)$
and for a single string $t \in \Sigma^*$,
we use $Q_t$ to denote the language
$Q \cap (\{ t \} \times \Sigma^*)$.

\begin{definition}
A \emph{case problem} consists of a language of pairs
$Q \subseteq \Sigma^* \times \Sigma^*$
and a subset $S \subseteq \Sigma^*$,
and is denoted $Q[S]$.
When $Q[S]$ is a case problem, 
we use $\param{Q[S]}$ to denote the parameterized problem
$(Q_S, \pi_1)$.
\end{definition}
Ultimately, our purpose in discussing a case problem $Q[S]$
is to understand the complexity of the associated parameterized problem
$\param{Q[S]}$.  As mentioned, formalizing the notion of a case
problem
allows us to cleanly present reductions between such problems.

\begin{remark}
Let $(Q, \kappa)$ be a parameterized problem.
Under the assumption that $\kappa$ is FPT-computable
with respect to itself,
the parameterized problem $(Q, \kappa)$
is straightforwardly verified to be equivalent, under FPT-reduction,
to the parameterized problem $(Q', \pi_1)$
where $Q' = \{ (\kappa(x), x) ~|~ x \in \Sigma^* \}$.
Hence, any such given parameterized problem $(Q, \kappa)$ 
may be canonically associated
to the case problem $Q'[\Sigma^*]$, as one has
$\param{Q'[\Sigma^*]} = (Q', \pi_1)$.
\end{remark}

We define $\caseclique$ to be the case problem 
$Q[\Sigma^*]$ where $Q$ contains a pair $(k, G)$
if and only if $G$ is a graph that contains a clique of size $k$
(we assume that both $G$ and $k$ are encoded as strings over
$\Sigma$);
we define $\casecoclique$ to be the problem $\overline{Q}[\Sigma^*]$.

We now present the notion of \emph{slice reduction},
which allows us to compare case problems.

\begin{definition}
A case problem $Q[S]$ \emph{slice reduces}
to a second case problem $Q'[S']$
if there exist:

\begin{itemize}

\item a computably enumerable language 
$U \subseteq \Sigma^* \times \Sigma^*$ 
and 

\item a partial function $r: \Sigma^* \times \Sigma^* \times \Sigma^* \to \Sigma^*$
that 
has $\dom(r) = U \times \Sigma^*$ and 
is FPT-computable with respect to the parameterization $(\pi_1, \pi_2)$

\end{itemize}
such that
the following conditions hold:
\begin{itemize}

\item \emph{(coverage)}
for each $s \in S$, 
there exists $s' \in S'$ such that $(s, s') \in U$, and

\item \emph{(correctness)} 
for each $(t, t') \in U$, it holds
(for each $y \in \Sigma^*$) that
$$(t, y) \in Q \Leftrightarrow (t', r(t, t', y)) \in Q'.$$

\end{itemize}
We call the pair $(U, r)$ a \emph{slice reduction} 
from $Q[S]$ to $Q'[S']$.
\end{definition}
%
%
%
In this definition, we understand the parameterization 
$(\pi_1, \pi_2)$ to be the mapping that, for all $(s, s') \in U$ and 
$y \in \Sigma^*$,
returns the pair $(s, s')$ given the triple $(s, s', y)$.

\begin{theorem}
\label{thm:slice-reduction-transitive}
(Transitivity of slice reducibility)
Suppose 
that $Q_1[S_1]$ slice reduces to $Q_2[S_2]$ 
and
that $Q_2[S_2]$ slice reduces to $Q_3[S_3]$. 
Then $Q_1[S_1]$ slice reduces to $Q_3[S_3]$.
\end{theorem}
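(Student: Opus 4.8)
The plan is to prove transitivity by composing the two given slice reductions in the natural way, taking care that the intermediate language $U$ of the composite reduction remains computably enumerable and that the composite function remains FPT-computable with respect to the appropriate parameterization. Let $(U_{12}, r_{12})$ be a slice reduction from $Q_1[S_1]$ to $Q_2[S_2]$ and $(U_{23}, r_{23})$ a slice reduction from $Q_2[S_2]$ to $Q_3[S_3]$. First I would define the candidate relation $U_{13} \subseteq \Sigma^* \times \Sigma^*$ by declaring $(t, t'') \in U_{13}$ if and only if there exists $t' \in \Sigma^*$ with $(t, t') \in U_{12}$ and $(t', t'') \in U_{23}$. Then I would define $r_{13}: \Sigma^* \times \Sigma^* \times \Sigma^* \to \Sigma^*$ so that, on a triple $(t, t'', y)$ with $(t,t'') \in U_{13}$, it first recovers a witnessing $t'$, then outputs $r_{23}(t', t'', r_{12}(t, t', y))$.

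The key verification steps are then the following. \emph{(i) Computable enumerability of $U_{13}$.} Since $U_{12}$ and $U_{23}$ are c.e., one can dovetail an enumeration: enumerate pairs from $U_{12}$ and pairs from $U_{23}$, and whenever a pair $(t,t') \in U_{12}$ and a pair $(t', t'') \in U_{23}$ sharing the middle component $t'$ have both appeared, output $(t, t'')$ together with the witness $t'$. It is cleanest to actually enumerate triples $(t, t', t'')$ with $(t,t') \in U_{12}$ and $(t', t'') \in U_{23}$; this gives, as a by-product, a computable way to obtain a witness $t'$ for any $(t, t'') \in U_{13}$ — namely, run this enumeration until a triple with the right first and third components appears. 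This makes $r_{13}$ well-defined on all of $U_{13} \times \Sigma^*$ as required. \emph{(ii) Coverage.} Given $s \in S_1$, coverage of the first reduction yields $s' \in S_2$ with $(s, s') \in U_{12}$; coverage of the second yields $s'' \in S_3$ with $(s', s'') \in U_{23}$; hence $(s, s'') \in U_{13}$ with $s'' \in S_3$. \emph{(iii) Correctness.} For $(t, t'') \in U_{13}$ with witness $t'$, and any $y$, applying correctness of the first reduction gives $(t, y) \in Q_1 \Leftrightarrow (t', r_{12}(t,t',y)) \in Q_2$, and applying correctness of the second (with input string $r_{12}(t,t',y)$) gives $(t', r_{12}(t,t',y)) \in Q_2 \Leftrightarrow (t'', r_{23}(t', t'', r_{12}(t,t',y))) \in Q_3$; chaining these yields $(t, y) \in Q_1 \Leftrightarrow (t'', r_{13}(t,t'',y)) \in Q_3$.

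The step I expect to require the most care is \emph{(iv) FPT-computability of $r_{13}$ with respect to the parameterization $(\pi_1, \pi_2)$ of the triples} — i.e.\ with parameter the pair $(t, t'')$. Here I would argue as follows. First, computing a witness $t'$ from $(t, t'')$ depends only on the parameter $(t, t'')$, so its cost is bounded by some computable function of $(t, t'')$; this contributes only an $f(\kappa)$-type term. Then $r_{12}$ is FPT-computable with parameter $(t, t')$, so on input $(t, t', y)$ its running time is at most $f_{12}(t, t') p_{12}(|y|)$, and crucially its output has length at most this same bound (an algorithm cannot write more than its running time); since $t'$ is a function of the parameter $(t,t'')$, both $f_{12}(t,t')$ and the output length bound are of the form ``computable function of $(t,t'')$ times polynomial in $|y|$''. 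Finally $r_{23}$ is FPT-computable with parameter $(t', t'')$, so applied to input $r_{12}(t,t',y)$ it runs in time $f_{23}(t',t'') p_{23}(|r_{12}(t,t',y)|)$, and composing with the polynomial length bound on $r_{12}(t,t',y)$ and using Proposition~\ref{prop:db-closure} (closure of degree-bounded functions under sums, products, and polynomial composition) shows the total running time of $r_{13}$ is degree-bounded with respect to $(\pi_1,\pi_2)$. Together with the remark following Definition~\ref{def:fpt-computable} characterizing FPT-computability via a degree-bounded time bound, this establishes that $r_{13}$ is FPT-computable. Hence $(U_{13}, r_{13})$ is a slice reduction from $Q_1[S_1]$ to $Q_3[S_3]$.
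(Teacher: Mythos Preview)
Your proposal is correct and follows essentially the same approach as the paper: define the composite relation $U$ via existential projection of a witness $t'$, recover the witness by enumeration, compose the two $r$-maps, and verify FPT-computability by bounding the intermediate output length and invoking the closure properties of degree-bounded functions (Proposition~\ref{prop:db-closure}). The paper's proof differs only in bookkeeping details (it writes the polynomial bounds in terms of the full input length $|(t_1,t_2,y)|$ rather than $|y|$, but this is immaterial since the extra terms depend only on the parameter).
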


The following theorem allows one to derive an FPT-reduction
or an nu-FPT-reduction from a slice reduction.

\begin{theorem}
\label{thm:slice-reduction-gives-fpt-reduction}
Suppose that a case problem $Q[S]$ slice reduces to 
another case problem $Q'[S']$.
Then, it holds that 
$\param{Q[S]}$ nu-FPT-reduces to $\param{Q'[S']}$;
if in addition $S$ and $S'$ are computable,
then 
$\param{Q[S]}$ FPT-reduces to $\param{Q'[S']}$.
\end{theorem}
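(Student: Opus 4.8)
The plan is to unwind the definitions and show that a slice reduction $(U, r)$ from $Q[S]$ to $Q'[S']$ supplies the two ingredients—an FPT-computable map $g$ on instances and a finite-fan-out map $h$ on parameters—required by the definition of an (nu-)FPT-reduction between $\param{Q[S]} = (Q_S, \pi_1)$ and $\param{Q'[S']} = (Q'_{S'}, \pi_1)$. The key point is that the enumerable language $U$ lets us, for each $s \in S$ appearing as a first coordinate of an instance, \emph{choose} a witness $s' \in S'$ with $(s,s') \in U$; coverage guarantees such an $s'$ exists, and correctness then tells us that $y \mapsto r(s,s',y)$ is the desired slice-to-slice instance map.

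First I would define the parameter map. Given a string viewed as an instance $(s,y)$ of $Q_S$, we want to produce a parameter $s'$ for $Q'_{S'}$. Run the enumerator for $U$ until the first pair with first coordinate $s$ appears—call its second coordinate $\sigma(s)$—and set $g(s,y) = (\sigma(s), r(s, \sigma(s), y))$. For the parameter function $h$ witnessing condition (2), note that $\pi_1(g(s,y)) = \sigma(s)$ depends only on $s = \pi_1(s,y)$, so we may take $h(s) = \{\sigma(s)\}$, a finite (singleton) set; this is exactly the slice-to-slice property. Correctness of the reduction, i.e. condition (1), is immediate from the (correctness) clause of the slice reduction applied to the pair $(s, \sigma(s)) \in U$: $(s,y) \in Q \iff (\sigma(s), r(s,\sigma(s),y)) \in Q'$, and since $\sigma(s) \in S'$ whenever $s \in S$ (again by coverage, having chosen the witness from $S'$), membership in $Q_S$ transfers to membership in $Q'_{S'}$. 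One should also fix behaviour of $g$ off the domain—on inputs not of the form $(s,y)$ with $s \in S$—in whatever trivial total way is convenient, since the reduction only needs to act correctly on the non-trivial slices and Assumption~\ref{assumption:non-trivial-languages} gives a fixed target string to map garbage to.

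It remains to check the complexity bookkeeping, which is where the two cases diverge. Computing $r(s, \sigma(s), y)$ takes time $f(s, \sigma(s)) \cdot p(|s| + |\sigma(s)| + |y|)$ since $r$ is FPT-computable with respect to $(\pi_1,\pi_2)$; bounding $|\sigma(s)|$ and the running time of the enumeration to produce it by some function of $s$ alone, and invoking Proposition~\ref{prop:db-closure} to combine the enumeration time and the $r$-evaluation time into a single degree-bounded function with respect to $\pi_1$, shows $g$ is nu-FPT-computable with respect to $\pi_1$; together with the computable $h(s) = \{\sigma(s)\}$ this gives the nu-FPT-reduction. If moreover $S$ and $S'$ are computable, I would argue that $\sigma$ can be taken computable: one can decide $s \in S$, and then search through $U$ (using computability of $S'$ to verify the witness lands in $S'$) for a genuine witness, so both $g$ becomes FPT-computable and $h$ stays computable, upgrading the reduction to an FPT-reduction. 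The main obstacle is the last point: making $\sigma$ \emph{computable} (not merely the enumeration-based nu-version) requires care, because a priori the enumeration of $U$ may first spit out a pair $(s, s')$ with $s' \notin S'$; one needs to keep searching, and this terminates precisely because coverage guarantees a good witness exists and computability of $S'$ lets us recognize it—this interplay is the crux of why the computability hypothesis is exactly what is needed.
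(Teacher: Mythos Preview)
Your approach is the same as the paper's, and the computable case is handled correctly. There is, however, a genuine gap in the non-uniform case. You define $\sigma(s)$ as the second coordinate of the \emph{first} pair with first coordinate $s$ produced by the enumerator for $U$, and then assert that $\sigma(s) \in S'$ whenever $s \in S$, ``by coverage, having chosen the witness from $S'$''. But coverage only guarantees that \emph{some} $s' \in S'$ satisfies $(s,s') \in U$; nothing forces the first enumerated such pair to have its second coordinate in $S'$. If it does not, a yes-instance $(s,y) \in Q_S$ maps to $(\sigma(s), r(s,\sigma(s),y))$, which by correctness lies in $Q'$ but lies outside $Q'_{S'}$, so condition~(1) of the reduction fails. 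You recognise exactly this pitfall in your final paragraph when treating the computable case---there you correctly use decidability of $S'$ to filter the enumeration---but that filter is unavailable in the non-uniform case.

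The fix is to abandon the enumeration for the non-uniform part and simply invoke coverage non-constructively: choose (possibly non-computably) a function $\sigma$ with $(s,\sigma(s)) \in U$ and $\sigma(s) \in S'$ for every $s \in S$, and then hard-code into each per-parameter algorithm $A_s$ both the bit ``is $s \in S$?'' and, if so, the value $\sigma(s)$. This is exactly how the paper proceeds; the enumeration of $U$ plays no role in the non-uniform argument.
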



\begin{proof}
Let $(U, r)$ be the slice reduction.
First, consider the case where both $S$ and $S'$ are computable.
Since $S$ and $S'$ 
are both computable and $U$ is computably enumerable,
there exists a computable function
 $f: \Sigma^* \to \Sigma^*$
such that, for all $s \in S$, it holds that $(s, f(s)) \in U$.
We define the reduction $g$ so that
$g(s, x)$ is equal to $(f(s), r(s, f(s), x))$
for all $(s, x) \in S \times \Sigma^*$,
and is otherwise defined to be a fixed string outside of $Q'$
(such a string exists by
Assumption~\ref{assumption:non-trivial-languages}).
The mapping $g$ has a natural algorithm,
namely, given $(s, x)$, check if $s \in S$ (via an algorithm for $S$);
if $s \notin S$, return a fixed string outside of $Q'$,
otherwise, compute the value of $g$ using an algorithm for $f$
and an algorithm witnessing FPT-computability of $r$.
Suppose that $s \in S$;
the value $f(s)$, viewed as a function of $(s, x)$,
is FPT-computable with respect to $\pi_1$;
since $r$ is FPT-computable with respect to $(\pi_1, \pi_2)$,
the value $r(s, f(s), x)$, viewed as a function of $(s, x)$,
is also FPT-computable with respect to $\pi_1$.
Thus, the mapping $g$ is FPT-computable with respect to $\pi_1$.

In the case that no computability assumptions are placed on $S$ and
$S'$,
there exists a function
 $f: \Sigma^* \to \Sigma^*$
such that, for all $s \in S$, it holds that $(s, f(s)) \in U$.
The reduction $g$ defined as above is readily verified to 
be nu-FPT-computable with respect to $\pi_1$,
via an ensemble of algorithms where $A_s$ contains as hard-coded
information whether or not $s \in S$ and (if so) the value of $f(s)$.
\end{proof}


For each parameterized class $C$, we define
$\case{C}$ to be
the set of case problems 
that contains
a case problem $Q[S]$ if and only if
there exists a case problem $Q'[S']$ such that
\begin{itemize}

\item $\param{Q'[S']}$ is in $C$,

\item $Q[S]$ slice reduces to $Q'[S']$, and

\item $S'$ is computable.

\end{itemize}





\begin{prop}
\label{prop:in-casec}
Suppose that 
$C$ is a parameterized class, and that
$Q[S]$ is a case problem in $\case{C}$.
Then, the problem $\param{Q[S]}$ is in $\nu{C}$;
if it is assumed additionally that $S$ is computable,
then the problem $\param{Q[S]}$ is in $C$.
\end{prop}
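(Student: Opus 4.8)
The plan is to unwind the definition of $\case{C}$ and then invoke Theorem~\ref{thm:slice-reduction-gives-fpt-reduction} together with the closure assumptions on parameterized classes. By hypothesis, since $Q[S] \in \case{C}$, there is a case problem $Q'[S']$ with $\param{Q'[S']} \in C$, with $Q[S]$ slice reducing to $Q'[S']$, and with $S'$ computable. This witness is exactly what I would feed into the earlier machinery.

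For the first claim, I would apply Theorem~\ref{thm:slice-reduction-gives-fpt-reduction} to the slice reduction from $Q[S]$ to $Q'[S']$ to obtain an nu-FPT-reduction from $\param{Q[S]}$ to $\param{Q'[S']}$. Since $\param{Q'[S']} \in C$, and $\nu{C}$ is by definition the collection of parameterized problems admitting an nu-FPT-reduction to a problem in $C$, we conclude $\param{Q[S]} \in \nu{C}$ directly.

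For the second claim, I additionally assume $S$ is computable; combined with the computability of $S'$ (which comes for free from membership in $\case{C}$), the ``in addition'' clause of Theorem~\ref{thm:slice-reduction-gives-fpt-reduction} upgrades the nu-FPT-reduction to an FPT-reduction from $\param{Q[S]}$ to $\param{Q'[S']}$. Then by Assumption~\ref{assumption:closure-under-reduction}, the membership $\param{Q'[S']} \in C$ transfers to $\param{Q[S]} \in C$.

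I do not expect a real obstacle here; the proposition is essentially a bookkeeping consequence of Theorem~\ref{thm:slice-reduction-gives-fpt-reduction} and the relevant definitions. The only points requiring a moment's care are (i) observing that the definition of $\case{C}$ already supplies a witness $Q'[S']$ in which $S'$ is computable, so the computability hypothesis that Theorem~\ref{thm:slice-reduction-gives-fpt-reduction} needs for its uniform (FPT) conclusion is met precisely when $S$ itself is computable, and (ii) recalling that $\nu{C}$ is defined via closure under nu-FPT-reductions rather than under FPT-reductions, so the first claim requires no computability assumption on $S$ at all.
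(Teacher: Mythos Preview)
Your proposal is correct and matches the paper's proof essentially step for step: unwind the definition of $\case{C}$ to obtain the witness $Q'[S']$, apply Theorem~\ref{thm:slice-reduction-gives-fpt-reduction} to get an nu-FPT-reduction (and, when $S$ is computable, an FPT-reduction), and then conclude via the definition of $\nu{C}$ and Assumption~\ref{assumption:closure-under-reduction} respectively.
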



\begin{proof}
There exists a case problem $Q'[S']$ satisfying the conditions
given in the definition of $\case{C}$.
Since $Q[S]$ slice reduces to $Q'[S']$, 
by Theorem~\ref{thm:slice-reduction-gives-fpt-reduction},
it holds 
that $\param{Q[S]}$ nu-FPT-reduces to $\param{Q'[S']}$.
Since $\param{Q'[S']}$ is in $C$, it follows that
$\param{Q[S]}$ is in $\nu{C}$.
If in addition it is assumed that $S$ is computable,
by Theorem~\ref{thm:slice-reduction-gives-fpt-reduction},
we obtain 
that $\param{Q[S]}$ FPT-reduces to $\param{Q'[S']}$,
and hence that $\param{Q[S]}$ is in $C$
(by appeal to Assumption~\ref{assumption:closure-under-reduction}).
\end{proof}

Let $Q[S]$ be a case problem and let $C$ be a parameterized class.
We say that $Q[S]$ is $\case{C}$-hard if there exists a case problem
$Q^-[S^-]$ such that
\begin{itemize}

\item $\param{Q^-[S^-]}$ is $C$-hard,

\item $Q^-[S^-]$ slice reduces to $Q[S]$, and

\item $S^-$ is computable.

\end{itemize}

\begin{prop}
\label{prop:casec-hard}
Suppose that $C$ is a parameterized class, and that
$Q[S]$ is a case problem that is $\case{C}$-hard.
Then, the problem $\param{Q[S]}$ is non-uniformly $C$-hard;
if it is assumed additionally that $S$ is computable,
then the problem $\param{Q[S]}$ is $C$-hard.
\end{prop}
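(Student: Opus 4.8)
The plan is to mirror the structure of the proof of Proposition~\ref{prop:in-casec}, which handles the ``membership'' side, and simply run the argument in the opposite direction for hardness. Concretely, suppose $Q[S]$ is $\case{C}$-hard, witnessed by a case problem $Q^-[S^-]$ with $\param{Q^-[S^-]}$ being $C$-hard, with $Q^-[S^-]$ slice reducing to $Q[S]$, and with $S^-$ computable. The goal is to show $\param{Q[S]}$ is non-uniformly $C$-hard, i.e.\ every problem in $C$ nu-FPT-reduces to $\param{Q[S]}$; and if $S$ is computable, that every problem in $C$ FPT-reduces to $\param{Q[S]}$.

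First I would apply Theorem~\ref{thm:slice-reduction-gives-fpt-reduction} to the slice reduction from $Q^-[S^-]$ to $Q[S]$. Here the subtlety is which computability clause of that theorem applies: the strong (FPT-reduction) conclusion of Theorem~\ref{thm:slice-reduction-gives-fpt-reduction} requires \emph{both} the source set and the target set of the slice reduction to be computable. On the source side, $S^-$ is computable by hypothesis; on the target side, $S$ is computable only under the additional assumption. So: in the unconditional case we obtain that $\param{Q^-[S^-]}$ nu-FPT-reduces to $\param{Q[S]}$, and in the case where $S$ is additionally computable we obtain that $\param{Q^-[S^-]}$ FPT-reduces to $\param{Q[S]}$.

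Next I would compose reductions. In the unconditional case: let $(R, \lambda)$ be an arbitrary problem in $C$. Since $\param{Q^-[S^-]}$ is $C$-hard, there is an FPT-reduction from $(R,\lambda)$ to $\param{Q^-[S^-]}$; by the last sentence of Proposition~\ref{prop:fpt-reductions-compose} this is in particular an nu-FPT-reduction. Composing it with the nu-FPT-reduction from $\param{Q^-[S^-]}$ to $\param{Q[S]}$ via the nu-case of Proposition~\ref{prop:fpt-reductions-compose} yields an nu-FPT-reduction from $(R,\lambda)$ to $\param{Q[S]}$. As $(R,\lambda)\in C$ was arbitrary, $\param{Q[S]}$ is non-uniformly $C$-hard. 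In the case where $S$ is additionally computable: again take an arbitrary $(R,\lambda)\in C$, take the FPT-reduction from $(R,\lambda)$ to $\param{Q^-[S^-]}$ guaranteed by $C$-hardness, and compose it with the FPT-reduction from $\param{Q^-[S^-]}$ to $\param{Q[S]}$ using the FPT-case of Proposition~\ref{prop:fpt-reductions-compose}; this gives an FPT-reduction from $(R,\lambda)$ to $\param{Q[S]}$, so $\param{Q[S]}$ is $C$-hard.

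There is no real obstacle here — every ingredient is already in place. The only point requiring care, and the one I would state explicitly, is the bookkeeping in the previous paragraph: making sure one invokes the correct clause of Theorem~\ref{thm:slice-reduction-gives-fpt-reduction} (which needs \emph{both} relevant sets computable for the strong conclusion, and it is $S$, not $S^-$, that is only conditionally computable), and correctly threading the ``FPT-reduction is an nu-FPT-reduction'' remark of Proposition~\ref{prop:fpt-reductions-compose} so that the composition in the unconditional case is legitimate. Everything else is a direct transitivity argument.
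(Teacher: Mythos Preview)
Your proposal is correct and follows essentially the same approach as the paper's proof: unpack the definition of $\case{C}$-hard to obtain $Q^-[S^-]$, apply Theorem~\ref{thm:slice-reduction-gives-fpt-reduction} to the slice reduction (getting an nu-FPT-reduction unconditionally, and an FPT-reduction when both $S^-$ and $S$ are computable), then compose with the FPT-reductions guaranteed by $C$-hardness of $\param{Q^-[S^-]}$ via Proposition~\ref{prop:fpt-reductions-compose}. Your explicit remark that the strong clause of Theorem~\ref{thm:slice-reduction-gives-fpt-reduction} needs \emph{both} sets computable, and that $S^-$ is already computable by hypothesis, is exactly the right bookkeeping and matches the paper.
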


\begin{proof}
There exists a case problem $Q^-[S^-]$ satisfying
the conditions given in the definition of $\case{C}$-hard.
Since $Q^-[S^-]$ slice reduces to $Q[S]$,
by Theorem~\ref{thm:slice-reduction-gives-fpt-reduction},
it holds that
$\param{Q^-[S^-]}$ nu-FPT-reduces to 
$\param{Q[S]}$.  
By the $C$-hardness of $\param{Q^-[S^-]}$,
each problem in $C$ FPT-reduces to $\param{Q^-[S^-]}$,
and hence, by Proposition~\ref{prop:fpt-reductions-compose},
each problem in $C$ nu-FPT-reduces to 
$\param{Q[S]}$.  
If in addition it is assumed that $S$ is computable,
by Theorem~\ref{thm:slice-reduction-gives-fpt-reduction},
the problem $\param{Q^-[S^-]}$ FPT-reduces to 
the problem $\param{Q[S]}$;
from the $C$-hardness of $\param{Q^-[S^-]}$
and from Proposition~\ref{prop:fpt-reductions-compose},
we obtain that $\param{Q[S]}$ is $C$-hard.
\end{proof}

\section{Thickness}
\label{sect:thickness}

In this section, we define a measure 
of first-order formulas that we call \emph{thickness},
which we will show is the crucial measure that determines
whether or not a graph-like set of sentences is tractable.
From a high-level viewpoint, the measure is defined in the following
way.
We first 
define a notion of \emph{organized formula} and
show that each formula is logically equivalent to a positive
combination of organized formulas;
we then define a notion of \emph{layered formula}
and show that each organized formula is logically equivalent
to a layered formula.
We then,
for each layered formula $\phi$, 
define its thickness  (denoted by $\thickl(\phi)$),
and then naturally extend this definition to positive combinations
of layered formulas, and hence to all formulas.
A key property of thickness, which we prove
(Theorem~\ref{thm:thick-many-variables}),
is that there exists an algorithm that, given a formula $\phi$,
outputs an equivalent formula that uses at most $\thick(\phi)$ many
variables.

\begin{definition}
We define the set of organized formulas inductively, as follows.
\begin{itemize}

\item Each atom and each negated atom is an organized formula.

\item If each of $\phi_1, \ldots, \phi_n$ is an organized formula and
$v \in \free(\phi_1) \cap \cdots \cap \free(\phi_n)$, then
$\exists v (\bigwedge_{i=1}^n \phi_i)$
and
$\forall v (\bigvee_{i=1}^n \phi_i)$
are organized formulas.

\end{itemize}
\end{definition}

\begin{theorem}
\label{thm:formula-to-pos-comb-organized}
There exists an algorithm $\org$ that, given a formula $\phi$ as input, outputs
a positive combination $\org(\phi)$ of organized formulas that is logically
equivalent to $\phi$ and that is in the syntactic closure of $\phi$.
\end{theorem}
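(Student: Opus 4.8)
The plan is to proceed by structural induction on the formula $\phi$, at each step maintaining the invariant that the output is a positive combination of organized formulas lying in the syntactic closure of $\phi$. First I would handle the base cases: if $\phi$ is an atom or a negated atom, then $\phi$ is already organized (hence a trivial positive combination of organized formulas), and it lies in its own syntactic closure. For a general negated formula $\neg\psi$, I would first push the negation inward using DeMorgan's laws $(\epsilon)$ until negations appear only in front of atoms; this is itself a syntactic transformation, so the result remains in the syntactic closure of $\phi$, and it reduces the negated case to the positive connective and quantifier cases applied to a formula with strictly fewer connectives above any given atom. (Care is needed to set up the induction measure so that this rewriting terminates and strictly decreases it — e.g. induct on the number of non-atomic subformulas, counting a negation applied to a non-atom as reducible.)

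Next I would treat the binary connective cases. If $\phi = \phi_1 \wedge \phi_2$ (dually for $\vee$), I recursively obtain $\org(\phi_1)$ and $\org(\phi_2)$, each a positive combination of organized formulas in the syntactic closure of the respective $\phi_i$; then $\org(\phi_1) \wedge \org(\phi_2)$ is a positive combination of organized formulas (the closure under conjunction and disjunction is exactly what ``positive combination'' means), and it lies in the syntactic closure of $\phi$ since syntactic transformations applied inside $\phi_1$ or $\phi_2$ are in particular applied inside $\phi$. This case needs no syntactic rewriting at all beyond what the recursion produces.

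The quantifier cases are where the real work is. Suppose $\phi = \exists v\, \psi$ (dually for $\forall$). Recursively, $\org(\psi)$ is a positive combination of organized formulas. I would first use distributivity $(\delta)$ to rewrite $\org(\psi)$ into a disjunction $\bigvee_j \bigwedge_i \psi_{i,j}$ of conjunctions of organized formulas — i.e. push it toward a DNF over the organized ``leaves'' — then apply $(\beta)$ to distribute $\exists v$ over the outer disjunction, yielding $\bigvee_j \exists v (\bigwedge_i \psi_{i,j})$. For each disjunct, I use $(\gamma)$ to pull out of the scope of $\exists v$ every conjunct $\psi_{i,j}$ with $v \notin \free(\psi_{i,j})$; what remains inside is $\exists v (\bigwedge_{i : v \in \free(\psi_{i,j})} \psi_{i,j})$, which is an organized formula by definition (every conjunct has $v$ free, and each conjunct is organized), and the pulled-out conjuncts form a positive combination of organized formulas around it. A subtlety: if the set $\{ i : v \in \free(\psi_{i,j})\}$ is empty, then $\exists v$ binds nothing and can simply be deleted — I would note that transformation $(\gamma)$ is explicitly allowed to apply with ``$\phi$ not present'' (per the paper's footnote), so this stays within the syntactic closure. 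Assembling over all $j$ gives the desired positive combination of organized formulas, and every rewriting step used — $(\delta)$, $(\beta)$, $(\gamma)$, and $(\alpha)$ for the bookkeeping of associativity/commutativity — is among the permitted syntactic transformations, so the final formula is in the syntactic closure of $\phi$.

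The main obstacle I anticipate is the termination and correctness bookkeeping in the quantifier case: applying $(\delta)$ to convert a positive combination into DNF form can blow up the formula, and one must verify both that this terminates and that the interleaving of $(\delta)$, $(\beta)$, $(\gamma)$ never reintroduces a quantifier over a subformula in which the bound variable fails to be free in every conjunct. I would address this by first fully DNF-normalizing (which terminates because each $(\delta)$ step strictly reduces a suitable measure of ``conjunctions above disjunctions''), and only then applying $(\beta)$ and $(\gamma)$ in a single outermost pass per disjunct, so that the organized-formula condition $v \in \bigcap_i \free(\phi_i)$ holds by construction at the moment the organized formula is formed. Logical equivalence throughout is immediate since each of $(\alpha)$–$(\epsilon)$ preserves it, and membership in the syntactic closure is immediate since the syntactic closure is by definition closed under applying these transformations to subformulas.
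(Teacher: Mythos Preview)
Your proposal is correct and follows essentially the same approach as the paper: push negations to atoms via $(\epsilon)$, recurse on structure, and in the quantifier case convert the recursive result to DNF (or CNF, dually) via $(\delta)$, distribute the quantifier via $(\beta)$, and separate conjuncts by whether they contain the bound variable via $(\gamma)$. The only organizational difference is that the paper performs the negation-pushing as a single preprocessing pass and then maintains \emph{both} a CNF and a DNF of organized formulas at every recursive step (converting between them with $(\delta)$ as needed), whereas you treat negation as a case in the recursion and convert to the needed normal form on demand; this is a cosmetic difference and your termination concern about the induction measure is exactly what the paper's preprocessing step sidesteps.
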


The algorithm of this theorem is defined recursively with respect to formula structure.


\begin{proof}
To give the algorithm,
we first recursively define a procedure that,
given a formula $\phi$ where negations appear only in front of atoms, 
outputs both a CNF of organized formulas
and a DNF of organized formulas, each of which is equivalent to
$\phi$
and in the syntactic closure of $\phi$.
We will make tacit use of transformation $(\alpha)$.
Observe that
it suffices to show that either such a CNF or such a DNF can be
computed,
since one can convert between such a CNF and such a DNF
by use of transformation $(\delta)$.
The procedure is defined as follows.
\begin{itemize}

\item When $\phi$ is an atom or a negated atom, the procedure returns $\phi$.

\item When $\phi$ has the form $\psi_1 \wedge \psi_2$, the procedure
  computes
a CNF for $\phi$ by taking the conjunction of CNFs for $\psi_1$ and
$\psi_2$,
which can be computed recursively.
The case where $\phi$ has the form $\psi_1 \vee \psi_2$ is defined
dually.

\item When $\phi$ has the form $\exists x \psi$, the procedure
  performs the following.
(The case where $\phi$ has the form $\forall y \psi$ is dual.)
First, it recursively computes a DNF for $\psi$; denote the DNF by
$\bigvee_i \psi_i$.  By appeal to transformation $(\beta)$,
the formula $\phi$ is logically equivalent to $\bigvee_i (\exists x \psi_i)$.
To obtain a DNF for $\phi$, it thus suffices to show that 
each formula of the form $\exists x \psi_i$ can be transformed to
an equivalent conjunction of organized formulas.
Each formula $\psi_i$ is a conjunction 
$\psi_i^1 \wedge \cdots \wedge \psi_i^k$ of organized formulas; 
write $\psi_i$ as
$\psi_i^x \wedge \psi_i^{-x}$, where $\psi_i^x$ is the conjunction of
the formulas $\psi_i^j$ such that $x \in \free(\psi_i^j)$,
and $\psi_i^{-x}$ is the conjunction of the remaining formulas $\psi_i^j$.
We have, by transformation $(\gamma)$, that
$\exists x \psi_i$ is equivalent to $(\exists x \psi_i^x) \wedge \psi_i^{-x}$.

\end{itemize}

The algorithm $\org$, given a formula $\phi$, computes an equivalent formula
$\phi'$ where negations appear only in front of atoms
(using transformations $(\epsilon)$),
and then invokes the described procedure on $\phi'$ and outputs either the CNF
or the DNF returned by the procedure.
\end{proof}

In what follows, when $V$ is a set of variables and 
$Q \in \{ \exists, \forall \}$
is a quantifier, we will use 
$Q V$ as shorthand for $Q v_1 \ldots Q v_n$, where $v_1, \ldots, v_n$
is a list of the elements of $V$.  
Our discussion will always be independent of the particular ordering chosen.
Relative to a hypergraph $H$ and a subset $S \subseteq V(H)$,
we consider a set of edges $\{ e_1, \ldots, e_k \}$ to be 
\emph{$S$-connected} if 
one has connectedness of the graph with vertices $\{ e_1, \ldots, e_k
\}$
and having an edge between $e_i$ and $e_j$ if and only if
$S \cap e_i \cap e_j \neq \emptyset$; we say that the hypergraph $H$
is itself \emph{$S$-connected} if $E(H)$ is $S$-connected.

\begin{definition}
We define the sets of 
\emph{$\exists$-layered formulas} 
and of \emph{$\forall$-layered formulas}
to be the variable-loose formulas
that can be constructed inductively, as follows:
\begin{itemize}

\item Each atom and each negated atom is both
an $\exists$-layered formula and a $\forall$-layered formula.

\item If each of $\phi_1, \ldots, \phi_n$ is a $\forall$-layered
  formula, and\\
$X \subseteq \free(\phi_1) \cup \cdots \cup \free(\phi_n)$ is such that
the hypergraph $\{ \free(\phi_1), \ldots, \free(\phi_n) \}$
is $X$-connected,
then 
$\exists X (\bigwedge_{i=1}^n \phi_i)$
is an $\exists$-layered formula.

\item If each of $\phi_1, \ldots, \phi_n$ is an $\exists$-layered
  formula, and\\
$Y \subseteq \free(\phi_1) \cup \cdots \cup \free(\phi_n)$ is such that
the hypergraph $\{ \free(\phi_1), \ldots, \free(\phi_n) \}$
is $Y$-connected,
then 
$\forall Y (\bigvee_{i=1}^n \phi_i)$
is a $\forall$-layered formula.

\end{itemize}
\end{definition}

\begin{theorem}
\label{thm:organized-to-layered}
There exists an algorithm that, given an organized formula $\phi$ as
input,
outputs a layered formula
that is logically equivalent to $\phi$.
\end{theorem}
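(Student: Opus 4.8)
The plan is to give a recursive algorithm following the inductive structure of the organized formula $\phi$. As preprocessing, rename bound variables so that $\phi$ is variable-loose; this preserves logical equivalence and the property of being organized, and it is easily checked by induction. Throughout the recursion we only apply the quantifier-distribution equivalences $(\alpha)$ and $(\gamma)$ and further bound-variable renamings, so variable-looseness is maintained. If $\phi$ is an atom or a negated atom, it is already both $\exists$-layered and $\forall$-layered, and we return it. Otherwise, by symmetry it suffices to treat the case $\phi = \exists v (\bigwedge_{i=1}^n \phi_i)$ with each $\phi_i$ organized and $v \in \free(\phi_1) \cap \cdots \cap \free(\phi_n)$; the case $\phi = \forall v (\bigvee_{i=1}^n \phi_i)$ is dual.

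Recursively compute layered formulas $\phi_i'$ equivalent to $\phi_i$, renamed to have pairwise-disjoint bound variables that are also disjoint from $v$. By induction, each $\phi_i'$ is either an atom or negated atom, a $\forall$-layered formula, or an $\exists$-layered formula of the form $\exists X_i (\bigwedge_j \psi_{ij})$ with $X_i \neq \emptyset$ and each $\psi_{ij}$ being $\forall$-layered (note the recursion never emits an empty leading quantifier block). Let $A$ collect the indices of the first two kinds and $B$ the indices of the third. Using variable-looseness together with $(\alpha)$ and $(\gamma)$ to pull each block $\exists X_i$ ($i \in B$) to the front past the other conjuncts, $\phi$ is logically equivalent to $\exists Z (\bigwedge_k \chi_k)$, where $Z = \{ v \} \cup \bigcup_{i \in B} X_i$ and the conjuncts $\chi_k$ are the formulas $\phi_i'$ for $i \in A$ together with all the $\psi_{ij}$ for $i \in B$; each $\chi_k$ is $\forall$-layered. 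This is the output of the algorithm on $\phi$.

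It remains to verify that $\exists Z (\bigwedge_k \chi_k)$ is genuinely an $\exists$-layered formula, i.e., that $Z \subseteq \bigcup_k \free(\chi_k)$ and that the hypergraph with edges $\free(\chi_k)$ is $Z$-connected; this is the crux. Since $v \in \free(\phi_i') = \free(\phi_i)$ for every $i$, the variable $v$ is free in every $\chi_k$ coming from $A$, and (whether or not $A$ is empty) is free in some $\psi_{ij}$ for each $i \in B$; moreover each $x \in X_i$ is free in some $\psi_{ij}$ because $\phi_i'$ being $\exists$-layered forces $X_i \subseteq \bigcup_j \free(\psi_{ij})$. Hence $Z \subseteq \bigcup_k \free(\chi_k)$. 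For $Z$-connectedness: the $\chi_k$ coming from $A$ pairwise share the variable $v \in Z$ and so form one $Z$-connected cluster; for each $i \in B$ the family $\{ \free(\psi_{ij}) \}_j$ is $X_i$-connected, and since $X_i \subseteq Z$ it is therefore $Z$-connected; and each such $B$-cluster attaches to the rest because it contains some $\psi_{ij}$ with $v$ free, with $v \in Z$, yielding a $Z$-edge to the $A$-cluster when $A \neq \emptyset$, and otherwise to any other $B$-cluster (each of which likewise contains a conjunct with $v$ free).

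The main obstacle is exactly this connectedness check: one must combine the $X_i$-connectedness that the recursion hands back for each existential sub-block with the fact that, after merging, the variable $v$ acts as a global connector for the whole new block --- and the reason $v$ can play this role is precisely the organized-formula condition $v \in \bigcap_i \free(\phi_i)$ together with the inclusion $X_i \subseteq Z$. The remaining points --- the bound-variable bookkeeping that keeps everything variable-loose, and the justification of the block merging by $(\gamma)$ --- are routine.
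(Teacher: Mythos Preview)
Your proof is correct and follows essentially the same approach as the paper's: recurse on the organized-formula structure, separate the recursively computed subformulas into those beginning with $\exists$ and the rest, merge the existential blocks with $v$ into a single block $Z$, and verify $Z$-connectedness using the fact that $v$ lies in every $\free(\phi_i')$ together with the $X_i$-connectedness handed back by the recursion. Your connectedness argument is in fact spelled out in more detail than the paper's; the only cosmetic point is that when you rename the bound variables of the $\phi_i'$ you should also make them disjoint from all free variables of $\phi$ (not just $v$), which is clearly what you intend and what is needed for the applications of $(\gamma)$.
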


The intuitive idea behind the algorithm of
Theorem~\ref{thm:organized-to-layered}
is to combine together (into a set quantification $Q V$) quantifiers of the same type
that occur adjacently in $\phi$.


\begin{proof}
We define the algorithm recursively.
\begin{itemize}

\item If $\phi$ is an atom or a negated atom, the algorithm returns
  $\phi$.

\item 
If $\phi$ is an organized formula 
of the form
$\exists v (\bigwedge_{i=1}^{\ell} \theta_i)$ with
$v \in \free(\theta_1) \cap \cdots \cap \free(\theta_{\ell})$, the algorithm
proceeds as follows.
By renaming variables if necessary, the algorithm ensures that no variable is
quantified twice in $\phi$, and also that no variable is both free and
quantified in $\phi$.
For each $i$, the algorithm recursively computes, from $\theta_i$,
a logically equivalent layered formula $\theta'_i$.
The algorithm then writes $\phi$ as the formula
$$\exists v ((\exists X_1 \phi_1) \wedge \cdots \wedge (\exists X_m
\phi_m)) \wedge (\psi_1 \wedge \cdots \wedge \psi_n)$$
where 
$\exists X_1 \phi_1, \ldots, \exists X_m \phi_m$ is a list of the
formulas
$\theta'_i$ that begin with existential quantification,
and where $\psi_1 \wedge \cdots \wedge \psi_n$ is a list of the
remaining formulas $\theta'_i$.
Note that 
each $\phi_j$ is the conjunction of $\forall$-layered formulas, and 
each $\psi_j$ is a $\forall$-layered formula.

For each $j \in \und{m}$, let $H_j$ denote the hypergraph of $\phi_j$,
that is, the hypergraph where an edge is present if it is
the set of free variables of a conjunct of $\phi_j$.
We have that the hypergraph $H_j$ is $X_j$-connected.
Since each $H_i$ contains $v$ in an edge and it holds that
$v \in \free(\psi_1) \cap \cdots \cap \free(\psi_n)$,
we have that the hypergraph $H$ with edge set
$$E(H_1) \cup \cdots \cup E(H_m) \cup 
\{ \free(\psi_1) \} \cup \cdots \cup \{ \free(\psi_n) \}$$
is $(X_1 \cup \cdots \cup X_m \cup \{ v \})$-connected.
Consider the conjunction of the $\phi_j$ and of the $\psi_j$.
This conjunction
can be viewed as the conjunction of
$\forall$-layered formulas whose free variable sets are exactly the
edges of $H$; denote this conjunction by $\chi$.
The $\exists$-layered formula 
$\exists (X_1 \cup \cdots \cup X_m \cup \{ v \}) \chi$
is logically equivalent to $\phi$; this is because, by the variable
renaming
done initially,
the variables in a set $X_i$ do not appear in a formula 
$\phi_j$ when $j \neq i$,
nor do they appear in a formula $\psi_j$.

\item 
If $\phi$ is an organized formula 
of the form
$\forall v (\bigvee_{i=1}^{\ell} \theta_i)$ with
$v \in \free(\theta_1) \cap \cdots \cap \free(\theta_{\ell})$, the algorithm
proceeds dually to the previous case.

\end{itemize}
\end{proof}

\begin{theorem}
\label{thm:formula-to-pos-comb-layered}
There exists an algorithm $\lay$ that, given a formula $\phi$ as input,
outputs a positive combination $\lay(\phi)$ of layered formulas 
that is logically equivalent to $\phi$.
\end{theorem}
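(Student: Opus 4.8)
The plan is to compose the two algorithms already established. By Theorem~\ref{thm:formula-to-pos-comb-organized}, the algorithm $\org$ transforms an arbitrary formula $\phi$ into a positive combination $\org(\phi)$ of organized formulas that is logically equivalent to $\phi$. By Theorem~\ref{thm:organized-to-layered}, there is an algorithm, call it $L$, that takes any organized formula and returns a logically equivalent layered formula. So the algorithm $\lay$ would, on input $\phi$, first compute $\org(\phi)$, then locate each maximal organized subformula occurring within the positive combination, and replace it in place by the layered formula produced by $L$. Since $\org(\phi)$ is built from organized formulas using only $\wedge$ and $\vee$, and since $L$ preserves logical equivalence, the substitution of each organized constituent by an equivalent layered one yields a positive combination of layered formulas logically equivalent to $\org(\phi)$, hence to $\phi$.

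The key steps, in order, are: (1) run $\org$ on $\phi$ to obtain the positive combination of organized formulas; (2) parse the resulting formula tree, identifying the organized formulas sitting at the leaves of the positive-combination structure (these are precisely the positively combined subformulas that are themselves organized and maximal with this property); (3) apply $L$ to each such organized formula independently; (4) reassemble using the same $\wedge/\vee$ skeleton. Correctness follows because replacing a subformula by a logically equivalent one inside a larger formula preserves the meaning of the whole, applied once per organized constituent. The output is a positive combination (the skeleton is untouched) of layered formulas (each leaf has been replaced by a layered formula), equivalent to $\phi$ by transitivity of logical equivalence.

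The only point requiring a little care is the bookkeeping in step~(2): one must argue that $\org(\phi)$ genuinely decomposes as a $\{\wedge,\vee\}$-combination whose atoms-of-the-combination are organized formulas, so that the substitution is well-defined and the skeleton after substitution is still a positive combination. This is immediate from the statement of Theorem~\ref{thm:formula-to-pos-comb-organized}, which asserts $\org(\phi)$ is by definition a positive combination of organized formulas. I do not expect any real obstacle here: the theorem is essentially a corollary obtained by plugging Theorem~\ref{thm:organized-to-layered} into the output of Theorem~\ref{thm:formula-to-pos-comb-organized}, and the mild subtlety is only the routine observation that substitution into a positive combination preserves being a positive combination and preserves logical equivalence.
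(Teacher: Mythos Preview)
Your proposal is correct and matches the paper's approach exactly: the paper's proof is the single line ``Immediate from Theorems~\ref{thm:formula-to-pos-comb-organized} and~\ref{thm:organized-to-layered},'' and you have simply spelled out the composition of $\org$ with the organized-to-layered algorithm in the natural way.
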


\begin{proof}
Immediate from Theorems~\ref{thm:formula-to-pos-comb-organized}
and~\ref{thm:organized-to-layered}.
\end{proof}

\begin{definition} 
We define the following measures on layered formulas.
\begin{itemize}

\item When $\phi$ is an atom or a negated atom,
we define $\thickl(\phi) = |\free(\phi)|$.

\item Suppose that $\phi$ is a layered formula of the form
$\exists U (\bigwedge_{i=1}^n \phi_i)$ or
$\forall U (\bigvee_{i=1}^n \phi_i)$.

We define the \emph{local thickness} of $\phi$ as 
%
$$\localthickl(\phi) = 1 + \tw(\{ \free(\phi_i)  ~|~ i \in \und{n} \} \cup \{ \free(\phi) \})$$
%
where the object to which the treewidth is applied is a hypergraph
specified by its edge set, which hypergraph has vertex set
$\cup_{i=1}^n \free(\phi_i)$.

We define the \emph{thickness} of $\phi$ inductively as
$$\thickl(\phi) = \max (\{\localthickl(\phi) \} \cup \{ \thickl(\phi_i)
~|~ i \in \und{n} \}).$$

We define the \emph{quantified thickness} of $\phi$ as
$$\quantthickl(\phi) = 1 + \tw(\{ \free(\phi_i) \cap U ~|~ i \in
\und{n} \}).$$
\end{itemize}
\end{definition}

\begin{definition}
The \emph{thickness} of an arbitrary formula $\phi$ is defined as
follows:
let $\Psi$ be the set of layered formulas 
that are positively combined subformulas of
$\lay(\phi)$
such that $\lay(\phi)$ is a positive combination over
$\Psi$; then,
$$\thick(\phi) = \max_{\psi \in \Psi} \thickl(\psi).$$
\end{definition}

\begin{prop}
\label{prop:thickness-computable}
The function $\thick(\cdot)$ is computable.
\end{prop}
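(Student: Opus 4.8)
The proof plan is to trace through the definition of $\thick(\cdot)$ and observe that each ingredient is computable, composing them. The definition has three layers: (1) the algorithm $\lay$ from Theorem~\ref{thm:formula-to-pos-comb-layered}, which given $\phi$ produces a positive combination $\lay(\phi)$ of layered formulas; (2) identifying the set $\Psi$ of maximal positively combined subformulas of $\lay(\phi)$ that are layered, i.e.\ the ``leaves'' of the positive-combination structure; and (3) computing $\thickl(\psi)$ for each $\psi \in \Psi$ and taking the maximum. The first layer is handled directly by Theorem~\ref{thm:formula-to-pos-comb-layered}, which asserts the existence of the algorithm $\lay$. For the second layer, note that given the formula $\lay(\phi)$ written as a syntax tree, one can effectively find $\Psi$: walk down from the root through the conjunction/disjunction nodes, and whenever one reaches a node that is not a conjunction or disjunction (equivalently, a node whose topmost connective is a quantifier, or an atom/negated atom), that subformula is an element of $\Psi$; this is a finite tree traversal and hence computable.

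For the third layer, I would argue that $\thickl(\cdot)$ is computable on layered formulas by structural induction following its inductive definition. The base case, $\thickl(\phi) = |\free(\phi)|$ for an atom or negated atom, is trivially computable since $\free(\phi)$ is read off syntactically. For the inductive case, a layered formula $\phi$ of the form $\exists U (\bigwedge_{i=1}^n \phi_i)$ or $\forall U (\bigvee_{i=1}^n \phi_i)$ can be parsed to recover $U$, the $\phi_i$, and their free-variable sets; one then forms the hypergraph with edge set $\{ \free(\phi_i) \mid i \in \und{n} \} \cup \{ \free(\phi) \}$ and computes its treewidth. Treewidth of a finite hypergraph (equivalently, of its associated graph) is computable — indeed it is decidable whether $\tw(H) < k$ for each fixed $k$, so one can compute $\tw(H)$ by trying $k = 0, 1, 2, \ldots$ — so $\localthickl(\phi)$ is computable. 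Then $\thickl(\phi) = \max(\{\localthickl(\phi)\} \cup \{\thickl(\phi_i) \mid i \in \und{n}\})$ is computable by the induction hypothesis applied to each $\phi_i$, since the recursion terminates (each $\phi_i$ is a proper subformula). Putting the three layers together: on input $\phi$, compute $\lay(\phi)$, extract $\Psi$, compute $\thickl(\psi)$ for each $\psi \in \Psi$, and return the maximum; each step is computable and there are finitely many $\psi$, so $\thick(\cdot)$ is computable.

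I do not anticipate a serious obstacle here; the statement is essentially a bookkeeping consequence of the constructive nature of everything developed in Section~\ref{sect:thickness}. The only point requiring a moment's care is the appeal to computability of treewidth of a finite (hyper)graph, which is standard (via the known fixed-$k$ recognition algorithms, or simply by brute-force search over tree decompositions of bounded width), and the observation that the recursion defining $\thickl$ is well-founded because it descends along the subformula relation. If a fully detailed proof were desired, one would also spell out that the decomposition of $\lay(\phi)$ as a positive combination over $\Psi$ is unique and effectively computable, but this follows immediately from inspecting the syntax tree as described above.
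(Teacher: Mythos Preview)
Your proposal is correct and follows the same approach as the paper, which simply notes that computability follows from Theorem~\ref{thm:formula-to-pos-comb-layered} and the definition of thickness; you have just unpacked in detail the steps (extracting $\Psi$, computing $\thickl$ via treewidth) that the paper leaves implicit.
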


\begin{proof}
This follows from the computability of $\lay(\phi)$ from $\phi$
(Theorem~\ref{thm:formula-to-pos-comb-layered})
and the definition of thickness.
\end{proof}

We indeed now demonstrate a principal property of thickness,
namely, that this measure provides an upper bound on 
the number of variables needed to express a formula;
this upper bound is effective in that there is an algorithm 
that computes an equivalent formula using a bounded number of
variables.
When $k \geq 1$,
let us say that a formula \emph{uses $k$ many variables}
if the set containing all variables
that occur in the formula has size
less than or equal to $k$.

\begin{theorem}
\label{thm:thick-many-variables}
There exists an algorithm that,
given as input a formula $\phi$, 
outputs an equivalent formula that uses $\thick(\phi)$ many variables.
\end{theorem}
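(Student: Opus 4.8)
The plan is to reduce the theorem, via Theorem~\ref{thm:formula-to-pos-comb-layered}, to the case of a single layered formula, and then proceed by induction on the structure of layered formulas. Given $\phi$, I would first compute $\lay(\phi)$, a positive combination of layered formulas $\psi \in \Psi$. Since a positive combination (built from $\wedge$ and $\vee$) does not introduce any new variables beyond those of its constituents, and since variable sets of distinct $\psi$'s may be taken disjoint only where convenient—actually here we want to \emph{reuse} variables—it suffices to show that each $\psi \in \Psi$ can be rewritten to an equivalent formula using at most $\thickl(\psi)$ many variables, \emph{and} that these rewritings can be performed so that the positive combination of the results uses at most $\max_{\psi} \thickl(\psi) = \thick(\phi)$ variables. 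The latter point requires care: after rewriting each $\psi$ to use few variables, one must rename the variables of the various rewritten $\psi$'s to share a common pool of $\thick(\phi)$ variables. This is legitimate because in a positive combination the $\psi$'s are only combined by $\wedge$ and $\vee$ with all the action of each $\psi$ happening under its own quantifiers—so their bound variables can be made to coincide freely, and their free variables are free variables of $\phi$ itself (hence already shared). So the crux is the single-layered-formula case.

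For a single layered formula $\psi$, I would induct on its construction. In the base case $\psi$ is an atom or negated atom and already uses $|\free(\psi)| = \thickl(\psi)$ variables. For the inductive step, say $\psi = \exists U (\bigwedge_{i=1}^n \psi_i)$ (the $\forall$ case being dual). By induction, each $\psi_i$ is equivalent to a formula $\psi_i'$ using at most $\thickl(\psi_i) \leq \thickl(\psi)$ variables. Now I would invoke Proposition~\ref{prop:tw-gives-elimination-ordering} on the hypergraph $H$ with edge set $\{ \free(\psi_i) \mid i \in \und{n} \} \cup \{ \free(\psi) \}$ and distinguished edge $f = \free(\psi)$: since $\tw(H) = \localthickl(\psi) - 1$, we get an elimination ordering $e = ((v_1,\ldots,v_m), E)$ with $\lowerdeg(e) = \localthickl(\psi)-1$ and $\{v_1,\ldots,v_{|f|}\} = f = \free(\psi)$. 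The standard bounded-variable trick for conjunctive structure then applies: process the quantified variables $U$ in reverse elimination order, quantifying $v_k$ as soon as all edges of $H$ containing $v_k$ as their highest-indexed vertex have been "assembled," and at each point the set of free variables of the partially-built formula is contained in a single "bag" of the elimination ordering, hence has size at most $\localthickl(\psi)$. This shows the outer quantifier block can be realized with at most $\localthickl(\psi)$ free variables active at any subformula of the "skeleton" built over the $\psi_i'$; combined with the inductive bound $\thickl(\psi_i) \leq \thickl(\psi)$ on the variables internal to each $\psi_i'$, and recycling the variable names of each $\psi_i'$ into the same pool once it is "done," the total number of variables used is at most $\max(\localthickl(\psi), \max_i \thickl(\psi_i)) = \thickl(\psi)$.

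The main obstacle I expect is making the bookkeeping of variable recycling fully rigorous. The elimination-ordering argument gives a way to nest the quantifications so that at any moment only $\localthickl(\psi)$ of the variables in $U \cup \free(\psi)$ are "live"; but one must simultaneously argue that the variables occurring \emph{inside} each already-processed $\psi_i'$ can be renamed to reuse names, which is sound precisely because once $\psi_i'$ has been placed under its governing quantifier and absorbed into the skeleton, its bound variables no longer interact with anything else. Formalizing "live variable" and the renaming so that the final count is exactly $\thickl(\psi)$—rather than, say, $\localthickl(\psi) + \max_i \thickl(\psi_i)$—is the delicate part, and it is where one pays attention to the fact that the elimination ordering places $\free(\psi)$ as the initial segment $\{v_1,\ldots,v_{|f|}\}$, so the free variables of $\psi$ are always among the live ones and the internal variables of the $\psi_i'$ can be drawn from the remaining $\thickl(\psi) - $ (number of currently-live skeleton variables) names. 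I would present this as a careful recursive construction with an explicit invariant on the set of variables in use, rather than an existence argument, to match the "there exists an algorithm" phrasing of the statement.
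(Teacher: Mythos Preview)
Your approach is essentially the paper's: reduce to layered formulas via $\lay$, then induct, invoking Proposition~\ref{prop:tw-gives-elimination-ordering} on exactly the hypergraph you name (with distinguished edge $\free(\psi)$) and processing the quantified variables along the resulting elimination ordering. The one organisational difference is that the paper sidesteps the variable-recycling bookkeeping you flag as delicate by inserting an intermediate step: it first proves a lemma (Lemma~\ref{lemma:formula-and-elimination-ord}) showing that the elimination-ordering processing yields a formula $\phi''$ with $\width(\phi'') \leq \max(\{1+\lowerdeg(e)\} \cup \{\width(\phi_i) \mid i\})$, and only \emph{then} invokes the standard observation that a width-$k$ formula can be variable-renamed to use $k$ variables. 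Going through $\width$ rather than tracking live variable names directly is what makes the $\max$ rather than the sum fall out without pain, and it is exactly the clean formulation of the invariant you were looking for.
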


\begin{definition}
\label{def:elim-ordering-of-formula}
Let $\phi$ be a formula of the form
$\exists V (\bigwedge_{i=1}^n \phi_i)$ or
$\forall V (\bigvee_{i=1}^n \phi_i)$.
Let us say that a pair
$e = ((v_1, \ldots, v_m), E)$
consisting of an ordering $v_1, \ldots, v_m$ of the elements
in $\bigcup_{i=1}^n \free(\phi_i)$ 
and a subset $E \subseteq K( \{ v_1, \ldots, v_m \} )$
is an \emph{elimination ordering of $\phi$} if:
\begin{itemize}

\item 
the variables in $\free(\phi)$ occur first in the ordering, that is,
$\{ v_1, \ldots, v_{|\free(\phi)|} \} = \free(\phi)$; and,

\item $e$ is an elimination ordering of the hypergraph
$\{ \free(\phi) \} \cup \{ \free(\phi_i) ~|~ i \in \und{n} \}$.

\end{itemize}
\end{definition}

The following lemma can be taken as 
a variation of a lemma of 
Kolaitis and Vardi~\cite[Lemma 5.2]{KolaitisVardi00-containment}.

\begin{lemma}
\label{lemma:formula-and-elimination-ord}
There exists a polynomial-time algorithm that, 
given a formula $\phi$ of the form
$\exists V (\bigwedge_{i=1}^n \phi_i)$ or
$\forall V (\bigvee_{i=1}^n \phi_i)$
and an elimination ordering $e$ of $\phi$,
outputs a formula $\phi'$ that is logically equivalent to $\phi$
such that
$\width(\phi') \leq 
\max (\{ 1 + \lowerdeg(e) \} \cup \{ \width(\phi_i) ~|~ i \in \und{n} \})$.
\end{lemma}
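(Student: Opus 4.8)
The plan is to process the quantified variables one at a time, starting from the end of the given elimination ordering and moving toward the front, at each stage pulling the innermost quantifier inward so that it binds only those conjuncts that actually mention its variable. I treat the existential case $\phi = \exists V (\bigwedge_{i=1}^n \phi_i)$; the universal/disjunctive case is entirely dual. Write $e = ((v_1, \ldots, v_m), E)$ and $f = |\free(\phi)|$, so that $\{ v_1, \ldots, v_f \} = \free(\phi)$ and, after discarding any vacuous quantifiers, $\{ v_{f+1}, \ldots, v_m \} = V$. Since $\free(\phi)$ is an edge of the hypergraph $H = \{ \free(\phi) \} \cup \{ \free(\phi_i) \mid i \in \und{n} \}$, it is a clique in $E$, so $v_1, \ldots, v_{f-1}$ are all lower neighbors of $v_f$ and hence $f \leq 1 + \lowerdeg(e)$; likewise each $\free(\phi_i)$ is a clique in $E$. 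In the base case $m = f$ there are no quantified variables, and I output $\bigwedge_i \phi_i$, whose width is $\max(f, \max_i \width(\phi_i))$, which already lies within the target bound.

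For the recursive step $m > f$, let $N$ be the set of lower neighbors of $v_m$, so $|N| = \lowerdeg(e, v_m) \leq \lowerdeg(e)$ and $N$ is a clique in $E$. Split $\und{n}$ into $I = \{ i \mid v_m \in \free(\phi_i) \}$ and $J = \und{n} \setminus I$. Because $v_m$ occurs last in the ordering and each $\free(\phi_i)$ is a clique in $E$, for $i \in I$ every vertex of $\free(\phi_i)$ other than $v_m$ lies in $N$, so $\free(\phi_i) \subseteq N \cup \{ v_m \}$. Setting $\psi = \exists v_m \bigwedge_{i \in I} \phi_i$, I obtain $\free(\psi) \subseteq N$, and inspecting the subformulas of $\psi$---its top node has free set $\subseteq N$ of size $\leq \lowerdeg(e)$, the intermediate conjunctions have free sets $\subseteq N \cup \{ v_m \}$ of size $\leq 1 + \lowerdeg(e)$, and everything else sits inside some $\phi_i$---yields $\width(\psi) \leq \max(\{ 1 + \lowerdeg(e) \} \cup \{ \width(\phi_i) \mid i \in I \})$. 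Using transformations $(\alpha)$ and $(\gamma)$, the latter applicable since $v_m$ is free in no $\phi_i$ with $i \in J$, the formula $\phi$ is logically equivalent to $\phi^* = \exists (V \setminus \{ v_m \}) (\psi \wedge \bigwedge_{i \in J} \phi_i)$, which has the shape required by the lemma and one fewer quantified variable.

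The crux is to hand the recursion an elimination ordering of $\phi^*$; the natural candidate is $e' = ((v_1, \ldots, v_{m-1}), E')$ with $E' = E \cap K(\{ v_1, \ldots, v_{m-1} \})$. That $e'$ has the elimination property, and that $\free(\phi^*) = \free(\phi)$ still occupies the first $f \leq m-1$ positions, are inherited directly from $e$. What genuinely uses the hypotheses is checking that $E'$ contains the graph of the hypergraph $\{ \free(\phi^*) \} \cup \{ \free(\psi) \} \cup \{ \free(\phi_i) \mid i \in J \}$: the edges $\free(\phi^*) = \{ v_1, \ldots, v_f \}$ and $\free(\phi_i)$ for $i \in J$ are cliques in $E$ already supplied by $H$ (and the latter avoid $v_m$), while the edge $\free(\psi)$ is a clique in $E$ precisely because $\free(\psi) \subseteq N$ and $N$---as the lower-neighbor set of $v_m$ under an elimination ordering---is a clique. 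This single observation is where the notion of elimination ordering does its work, and I expect it to be the main point to get exactly right. Recursing on $\phi^*$ and $e'$, and noting $\lowerdeg(e') \leq \lowerdeg(e)$, returns $\phi'$ equivalent to $\phi$ with $\width(\phi') \leq \max(\{ 1 + \lowerdeg(e') \} \cup \{ \width(\psi) \} \cup \{ \width(\phi_i) \mid i \in J \})$; substituting the bounds on $\lowerdeg(e')$ and $\width(\psi)$ collapses the right-hand side to exactly $\max(\{ 1 + \lowerdeg(e) \} \cup \{ \width(\phi_i) \mid i \in \und{n} \})$. Each of the at most $m$ stages performs polynomial work, so the procedure runs in polynomial time; besides the clique observation, the only delicate bit is the bookkeeping that the single quantity $1 + \lowerdeg(e)$ absorbs both $|\free(\psi)| \leq \lowerdeg(e)$ and the sizes $\leq 1 + \lowerdeg(e)$ of the intermediate conjunctions appearing inside $\psi$.
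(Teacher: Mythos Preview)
Your proof is correct and follows essentially the same approach as the paper's: both eliminate the last variable $v_m$ of the ordering by grouping the conjuncts containing $v_m$ under a local $\exists v_m$, then verify that the truncated pair $((v_1,\ldots,v_{m-1}),\,E\cap K(\{v_1,\ldots,v_{m-1}\}))$ remains an elimination ordering for the resulting formula---the key step being that the free variables of the new conjunct lie among the lower neighbors of $v_m$ and hence form a clique in $E$. Your write-up is a bit more explicit about the base case and the width bookkeeping, but the argument is the same.
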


\begin{proof}
Given a formula $\phi$ of the described form 
having $|V| \geq 1$
and an elimination ordering $e = ((v_1, \ldots, v_m), E)$
of it,
we explain how eliminate the last variable;
precisely speaking, 
we explain how to compute a logically equivalent $\psi$
of the form $\exists \{ v_1, \ldots, v_{m-1} \} \bigwedge \psi_j$
and having elimination ordering
$((v_1, \ldots, v_{m-1}), E \cap K(\{ v_1, \ldots, v_{m-1}\}))$.
The desired algorithm iterates this variable elimination.

Let $\phi^{v_m}$ denote the conjunction of the formulas of the 
form $\phi_i$ having $v_m \in \free(\phi_i)$,
and let $I$ denote the set containing 
the indices of the remaining formulas $\phi_i$.
The formula $\psi$ is defined as 
$\exists \{ v_1, \ldots, v_{m-1} \}$ followed by 
the conjunction of all formulas in 
$\Psi = \{ \phi_i ~|~ i \in I \} \cup \{ \exists v_m \phi^{v_m}\}$.
It is clear that $\psi$ and $\phi$ are logically equivalent,
and we have $|\free(\phi^{v_m})| = 1 + \lowerdeg(e, v_m)$.
We verify that
$e^{\psi} = ((v_1, \ldots, v_{m-1}), E \cap K(\{ v_1, \ldots, v_{m-1}\}))$
is an elimination ordering of $\psi$ as follows.
Since $\free(\psi) = \free(\phi)$, it is clear that the variables
in $\free(\psi)$ occur first in the ordering
$(v_1, \ldots, v_{m-1})$.
To verify that $e^{\psi}$ is an elimination ordering
of the hypergraph named in
Definition~\ref{def:elim-ordering-of-formula},
we need to verify that, for each edge $f$ of that hypergraph,
one has the containment $K(f) \subseteq E$.  For each such edge $f$,
other than the edge $\free(\exists v_m \phi^{v_m})$,
this containment holds because it held for the
original elimination ordering for $\phi$.
For the edge $\free(\exists v_m \phi^{v_m})$,
the containment holds due to the lower neighbor property
and due to $v_m$ being a neighbor of each element of
$\free(\phi^{v_m}) \setminus \{ v_m \}$ in $E$.

We now confirm 
that we can apply the algorithm to $\psi$ to obtain the desired formula
$\phi'$.
We have 
$\lowerdeg(e^{\psi}) \leq \lowerdeg(e)$ 
and 
$\width(\phi^{v_m}) \leq 
\max( \{ \width(\phi_i) ~|~ i \in \und{n} \setminus I \} \cup 
  \{ 1 + \lowerdeg(e) \})$.
From this, it follows that 
$$\max (\{ 1 + \lowerdeg(e^{\psi}) \} \cup \{ \width(\psi_j) ~|~
\psi_j \in \Psi \})$$
$$\leq
\max (\{ 1 + \lowerdeg(e) \} \cup \{ \width(\phi_i) ~|~ i \in \und{n} \}).$$
\end{proof}

\begin{proof}
(Theorem~\ref{thm:thick-many-variables})
By definition of $\thick(\phi)$ and by the computability of
$\lay(\phi)$ from $\phi$ (Theorem~\ref{thm:formula-to-pos-comb-layered}), 
it suffices to prove that there is an algorithm that,
given a layered formula $\phi$, returns an equivalent formula
that uses $\thickl(\phi)$ many variables.

Consider the following algorithm $A$ defined recursively on 
layered formulas.
When $\phi$ is an atom or a negated atom,
$A(\phi) = \phi$.
When $\phi$ is of the form
$\exists V (\bigwedge_{i=1}^n \phi_i)$ or
$\forall V (\bigvee_{i=1}^n \phi_i)$,
the algorithm proceeds as follows.
Set $H$ to be the hypergraph
$$\{ \free(\phi_i)  ~|~ i \in
\und{n} \} \cup \{ \free(\phi) \}.$$
We have $1 + \tw(H) = \localthickl(\phi) \leq \thickl(\phi)$.
By calling the algorithm of
Proposition~\ref{prop:tw-gives-elimination-ordering}
on the hypergraph $H$ and the distinguished edge $\free(\phi)$,
we can obtain
an elimination ordering $e$ of $\phi$
having $1 + \lowerdeg(e) = 1 + \tw(H) \leq \thickl(\phi)$.
Let $\phi'$ be the formula obtained
from $\phi$ by replacing each formula $\phi_i$ by $A(\phi_i)$;
for each $i \in \und{n}$, we have $\width(\phi_i) \leq \thickl(\phi_i)$.
By applying the algorithm of Lemma~\ref{lemma:formula-and-elimination-ord} 
to $\phi'$ and $e$,
we obtain a formula $\phi''$ having
$\width(\phi'') \leq \thickl(\phi)$.
By renaming variables in $\phi''$, 
we can obtain
an equivalent formula
where the number of variables used is equal to $\width(\phi'')$.
The output $A(\phi)$ of the algorithm is this equivalent formula.
\end{proof}

\section{Graph-like queries}
\label{sect:main}

In this section, we state our main theorem 
and two corollaries thereof, which describe the tractable
graph-like sentence sets.
In the following two sections, we prove 
the hardness portion of the main theorem.

\begin{definition}
Define $\mc$ to be the language of pairs
$$\{ (\phi, \relb) ~|~ \relb \models \phi \}$$
where $\phi$ denotes a first-order sentence, and $\relb$
denotes a relational structure.
\end{definition}

\begin{theorem} (Main theorem)
\label{thm:main}
Let $\Phi$ be a graph-like set of sentences having bounded arity.
If $\Phi$ has bounded thickness, then 
the case problem $\mc[\Phi]$ is in $\case{\fpt}$;
otherwise, the case problem $\mc[\Phi]$ is 
$\case{\wone}$-hard or $\case{\cowone}$-hard.
\end{theorem}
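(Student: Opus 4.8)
The argument splits into the tractability direction (bounded thickness $\Rightarrow \mc[\Phi] \in \case{\fpt}$) and the hardness direction (unbounded thickness $\Rightarrow$ $\case{\wone}$- or $\case{\cowone}$-hardness), and I expect the tractability direction to follow quickly from Theorem~\ref{thm:thick-many-variables} while essentially all the difficulty lies in the hardness direction. For tractability, suppose $\Phi$ has bounded thickness, fixed by a $k$ with $\thick(\phi) \leq k$ for all $\phi \in \Phi$. By Theorem~\ref{thm:thick-many-variables} there is an algorithm $V$ producing, from each sentence $\phi$, a logically equivalent sentence $V(\phi)$ using at most $\thick(\phi)$ many variables. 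The set $\fok$ of first-order sentences using at most $k$ variables is computable, and bottom-up evaluation of a sentence $\psi \in \fok$ on a structure $\relb$ runs in time $|\psi| \cdot |B|^{O(k)}$ (each subformula has at most $k$ free variables, so its set of satisfying assignments can be tabulated), which, $k$ being a constant, is $|\psi| \cdot |B|^{O(1)}$; hence $\param{\mc[\fok]} \in \fpt$. I would then write down the slice reduction $(U, r)$ from $\mc[\Phi]$ to $\mc[\fok]$ given by $U = \{ (\phi, V(\phi)) ~|~ \phi \textup{ a first-order sentence} \}$ (computable, hence computably enumerable) and $r(\phi, \phi', \relb) = \relb$ (polynomial time, hence FPT-computable with respect to $(\pi_1, \pi_2)$): \emph{coverage} holds because $\phi \in \Phi$ forces $V(\phi) \in \fok$, and \emph{correctness} holds because $\phi \equiv V(\phi)$. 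Since $\fok$ is computable, this places $\mc[\Phi]$ in $\case{\fpt}$.

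For hardness, suppose $\Phi$ has unbounded thickness. Unwinding the definition of $\thick$ through $\lay$, and using that atoms and negated atoms have $\thickl$ bounded by the arity bound, I would extract an infinite family $(\phi_j)$ in $\Phi$ together with, for each $j$, a layered formula that is a positively combined subformula of $\lay(\phi_j)$ and that contains a sublayer $\psi_j^\ast$ of one of the forms $\exists U_j (\bigwedge_i \alpha_{j,i})$ or $\forall U_j (\bigvee_i \alpha_{j,i})$ whose local thickness $\localthickl(\psi_j^\ast)$ is unbounded over $j$; equivalently, the treewidth of the hypergraph $\{ \free(\alpha_{j,i}) \}_i \cup \{ \free(\psi_j^\ast) \}$ is unbounded. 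Splitting the family by the quantifier type of $\psi_j^\ast$, at least one of the two subcollections still witnesses unbounded treewidth; by the symmetry between $\wedge/\vee$, $\exists/\forall$, true/false and $\caseclique/\casecoclique$ it suffices to treat the case where every $\psi_j^\ast$ is an $\exists$-layer $\exists U_j (\bigwedge_i \alpha_{j,i})$, aiming for $\case{\wone}$-hardness.

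Given this critical family, I would build a slice reduction from $\caseclique$ to $\mc[\Phi]$. For a target clique size $t$, choose $j$ with $\tw(\{ \free(\alpha_{j,i}) \}_i \cup \{ \free(\psi_j^\ast) \})$ large relative to $t$; by the Excluded Grid Theorem and the conjunctive-query hardness machinery it underlies (as in Grohe's classification~\cite{Grohe07-otherside}), the high-treewidth conjunctive object $\exists U_j (\bigwedge_i \alpha_{j,i})$ carries a grid-shaped substructure into which the $t$-clique problem can be encoded through the interpretations of the relation symbols of the critical conjuncts. The real work — and the reason for the accordion-reduction framework of Section~\ref{sect:accordion} — is to realize this hard gadget inside an actual sentence of $\Phi$: I would use replacement closure to pass to a symbol-loose variant so that the critical conjuncts carry fresh, mutually independent relation symbols; choose the interpretations of all relation symbols outside the critical layer so that the Boolean structure of $\lay(\phi_j)$ and the quantifiers above $\psi_j^\ast$ collapse (siblings along the root-to-$\psi_j^\ast$ path made identically true under $\wedge$ and identically false under $\vee$, the outer quantifiers ranged over a singleton); and then encode the clique instance into the fresh relation symbols of the $\alpha_{j,i}$. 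Because $\Phi$ is closed only under $(\alpha)$--$(\epsilon)$ and replacement, and not under the variable renamings used inside $\lay$, this realization must be routed through a complexity-equivalent closure of $\Phi$ supplied by the accordion reduction, one that does contain sufficiently canonical hard sentences. Chaining via transitivity of slice reducibility (Theorem~\ref{thm:slice-reduction-transitive}), and using that $\param{\caseclique}$ is $\wone$-hard over the computable index set $\Sigma^\ast$, I would conclude $\case{\wone}$-hardness of $\mc[\Phi]$, and $\case{\cowone}$-hardness in the dual case from $\casecoclique$.

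The tractability direction is routine. The crux is the last step: converting the qualitative "unbounded thickness" into a uniform, computably indexed family of hard instances that genuinely lie in a complexity-preserving closure of $\Phi$, and combining the excluded-grid encoding of $\clique$ (resp. $\coclique$) with the bookkeeping that neutralizes the connectives and quantifiers surrounding the critical high-treewidth layer — while tracking throughout the asymmetry whereby a critical $\exists$-layer yields $\wone$-hardness and a critical $\forall$-layer yields $\cowone$-hardness (the disjunction in the statement recording that at least one type of critical layer must recur with unbounded treewidth). This is exactly what Sections~\ref{sect:accordion} and~\ref{sect:hardness} are designed to carry out.
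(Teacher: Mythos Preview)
Your tractability argument is correct and is a minor variant of the paper's: the paper reduces via the identity to the computable set $S'$ of all sentences of thickness at most $k$ and performs the translation of Theorem~\ref{thm:thick-many-variables} inside the $\fpt$ algorithm for $\mc[S']$, whereas you push that translation into the slice reduction and target $\fok$ directly --- both work. For hardness the paper, like you, simply invokes Theorem~\ref{thm:hardness}.

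Your sketch of what Theorem~\ref{thm:hardness} does, however, contains a genuine misconception that would derail an attempt to fill in the details. You assert that the quantifier type of the critical layer $\psi_j^\ast$ determines the hardness type --- $\exists$-layers give $\wone$, $\forall$-layers give $\cowone$ --- and that by duality it suffices to treat one side. This is not how the argument can go. An $\exists$-layer $\exists U_j(\bigwedge_i \alpha_{j,i})$ with high $\localthickl$ may owe its high treewidth entirely to $\free(\psi_j^\ast)$, and those free variables are quantified in an enclosing $\forall$-layer; in that situation the paper's analysis (proof of Theorem~\ref{thm:red-clique-to-sorted-friendly}) extracts a \emph{universal} $k$-clique and concludes $\cowone$-hardness, not $\wone$-hardness. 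Relatedly, your plan to ``range the outer quantifiers over a singleton'' collapses precisely the variables that may carry the hard content. The paper's actual case split is not on the quantifier of $\psi_j^\ast$ but on whether $\quantthickl(\phi)$ (treewidth restricted to the quantified block) or $|\free(\phi)|$ is unbounded over relevant subformulas; only in the first case does the layer's own quantifier decide between $\clique$ and $\coclique$ via the Grohe--Schwentick--Segoufin reduction, while in the second case one must further inspect which quantifier governs the unboundedly many free variables. The three clauses of the accordion closure $C$ are engineered exactly to navigate this: clause~(1) strips the inner quantified block to expose a simple sentence on the outer variables, and clauses~(2),~(3) flatten inner layers and manufacture cliques at whichever quantifier level turns out to be decisive. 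So your instinct to lean on Sections~\ref{sect:accordion} and~\ref{sect:hardness} is right, but the clean inner/outer duality you describe is not the one those sections establish.
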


We give a proof of this theorem that makes a single forward reference to
the main theorem of Section~\ref{sect:hardness}.

\begin{proof}
Suppose $\Phi$ has thickness bounded above by $k$.
Define $S'$ to be the set of sentences having
thickness less than or equal to $k$.
The set $S'$ is computable by
Proposition~\ref{prop:thickness-computable}.
We have that $\param{\mc[S']}$ is in $\fpt$
via the algorithm that, given an instance
$(\phi, \relb)$, first checks if $\phi \in S'$, and if so,
invokes Theorem~\ref{thm:thick-many-variables} to obtain $\phi'$,
and then performs the natural bottom-up, polynomial-time evaluation
of $\phi'$ on $\relb$ (\`{a} la Vardi~\cite{Vardi95-boundedvariable}).
The case problem $\mc[\Phi]$ slice reduces to
$\mc[S']$ via the slice reduction
$(\{ (s', s') ~|~ s' \in S' \}, \pi_3)$.

Suppose that $\Phi$ has unbounded thickness.
Theorem~\ref{thm:hardness} yields that
either $\caseclique$ or $\casecoclique$ slice reduces to
$\mc[\Phi]$.
It then follows by definition that
$\mc[\Phi]$ is 
$\case{\wone}$-hard or $\case{\cowone}$-hard,
since  the problems
$\clique$ and $\coclique$ are $\wone$-hard and $\cowone$-hard,
respectively.
\end{proof}

We now provide two corollaries that describe the complexity
of the problems $\param{\mc[\Phi]}$ addressed by the main theorem;
the first corollary assumes that $\Phi$ is computable,
while the second corollary makes no computability assumption on $\Phi$.
Both of these corollaries follow directly from Theorem~\ref{thm:main}
via use of Propositions~\ref{prop:in-casec} and~\ref{prop:casec-hard}.

\begin{corollary}
\label{cor:main-computable}
Let $\Phi$ be a computable, graph-like set of sentences having bounded arity.
If $\Phi$ has bounded thickness,
then the problem 
$\param{\mc[\Phi]}$ 
is in $\fpt$;
otherwise, the problem 
$\param{\mc[\Phi]}$ 
is not in $\fpt$, unless $\wone \subseteq \fpt$.
\end{corollary}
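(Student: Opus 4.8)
The plan is to prove the corollary by invoking Theorem~\ref{thm:main} and then translating its case-complexity conclusions into statements about the parameterized problem $\param{\mc[\Phi]}$ using Propositions~\ref{prop:in-casec} and~\ref{prop:casec-hard}; the hypothesis that $\Phi$ is computable is precisely the additional assumption those propositions require in order to pass from $\nu{\fpt}$ (respectively, non-uniform hardness) down to $\fpt$ (respectively, ordinary hardness). So the proof is a short case analysis on whether $\Phi$ has bounded thickness, feeding the appropriate half of Theorem~\ref{thm:main} into the appropriate bridging proposition.

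First, suppose $\Phi$ has bounded thickness. Since $\Phi$ is graph-like and of bounded arity, Theorem~\ref{thm:main} gives that $\mc[\Phi]$ is in $\case{\fpt}$. Applying Proposition~\ref{prop:in-casec} with $C = \fpt$ and $Q[S] = \mc[\Phi]$, and using that $S = \Phi$ is computable, we conclude $\param{\mc[\Phi]} \in \fpt$, as claimed.

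Next, suppose $\Phi$ has unbounded thickness; since the stated implication is ``$\param{\mc[\Phi]}$ is not in $\fpt$ unless $\wone \subseteq \fpt$'', it suffices to show that $\param{\mc[\Phi]} \in \fpt$ implies $\wone \subseteq \fpt$. Theorem~\ref{thm:main} gives that $\mc[\Phi]$ is $\case{\wone}$-hard or $\case{\cowone}$-hard. In either alternative, since $\Phi$ is computable, Proposition~\ref{prop:casec-hard} upgrades this to the corresponding ordinary hardness: $\param{\mc[\Phi]}$ is $\wone$-hard, or $\param{\mc[\Phi]}$ is $\cowone$-hard. Now assume $\param{\mc[\Phi]} \in \fpt$. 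In the $\wone$-hard alternative, every problem in $\wone$ FPT-reduces to $\param{\mc[\Phi]}$, so by closure of $\fpt$ under FPT-reductions (Assumption~\ref{assumption:closure-under-reduction}) we get $\wone \subseteq \fpt$. In the $\cowone$-hard alternative, the same reasoning yields $\cowone \subseteq \fpt$; and since negating the output of an algorithm witnessing FPT-computability of a characteristic function (Definition~\ref{def:fpt-computable}) shows $\fpt$ to be closed under complementation, $\cowone \subseteq \fpt$ is equivalent to $\wone \subseteq \fpt$. Hence in both alternatives $\param{\mc[\Phi]} \in \fpt$ forces $\wone \subseteq \fpt$, completing the argument.

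There is essentially no genuine obstacle here: all the difficulty is already discharged inside Theorem~\ref{thm:main} and the bridging Propositions~\ref{prop:in-casec} and~\ref{prop:casec-hard}, and the role of this corollary is merely to record the consequence under the extra hypothesis of computability. The only point meriting a sentence of care is the second alternative of the hard direction, where one must observe that a collapse of $\cowone$ into $\fpt$ is the same thing as a collapse of $\wone$ into $\fpt$, so that both alternatives funnel into the single stated conclusion $\wone \subseteq \fpt$.
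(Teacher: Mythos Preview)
Your proof is correct and follows essentially the same route as the paper: invoke Theorem~\ref{thm:main}, then translate via Propositions~\ref{prop:in-casec} and~\ref{prop:casec-hard} using the computability of $\Phi$. The only difference is that you spell out explicitly why $\cowone$-hardness also yields the conclusion $\wone \subseteq \fpt$ (via closure of $\fpt$ under complementation), whereas the paper leaves this implicit in the phrase ``from which the claim follows.''
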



\begin{proof}
When $\Phi$ has bounded thickness, the claim follows directly from
Theorem~\ref{thm:main} and Proposition~\ref{prop:in-casec}.
When $\Phi$ does not have bounded thickness,
Theorem~\ref{thm:main} and Proposition~\ref{prop:casec-hard}
imply that $\param{\mc[\Phi]}$ is either
$\wone$-hard or $\cowone$-hard, from which the claim follows.
\end{proof}

\begin{corollary}
\label{cor:main-noncomputable}
Let $\Phi$ be a graph-like set of sentences having bounded arity.
If $\Phi$ has bounded thickness,
then the problem
$\param{\mc[\Phi]}$ 
is in $\nu{\fpt}$;
otherwise, the problem 
$\param{\mc[\Phi]}$ 
is not in $\nu{\fpt}$, unless $\wone \subseteq \nu{\fpt}$.
\end{corollary}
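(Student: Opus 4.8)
The plan is to follow the template of the proof of Corollary~\ref{cor:main-computable}, systematically replacing each invocation of the ``computable'' conclusion of Propositions~\ref{prop:in-casec} and~\ref{prop:casec-hard} by the corresponding ``non-uniform'' conclusion, which carries no computability hypothesis on the set $\Phi$ (which plays the role of the set $S$ in those propositions).

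Concretely, if $\Phi$ has bounded thickness, then Theorem~\ref{thm:main} places $\mc[\Phi]$ in $\case{\fpt}$, and the first clause of Proposition~\ref{prop:in-casec}---applied without any computability assumption---gives $\param{\mc[\Phi]} \in \nu{\fpt}$ immediately. If $\Phi$ has unbounded thickness, then Theorem~\ref{thm:main} tells us $\mc[\Phi]$ is $\case{\wone}$-hard or $\case{\cowone}$-hard, and the first clause of Proposition~\ref{prop:casec-hard} then yields that $\param{\mc[\Phi]}$ is non-uniformly $\wone$-hard or non-uniformly $\cowone$-hard.

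To close out the hardness side I would isolate two elementary closure properties of $\nu{\fpt}$. First, $\nu{\fpt}$ is closed under nu-FPT-reductions: this is immediate from the definition of $\nu{C}$ together with the transitivity of nu-FPT-reductions recorded in Proposition~\ref{prop:fpt-reductions-compose}. Hence, if $\param{\mc[\Phi]}$ is non-uniformly $\wone$-hard and also lies in $\nu{\fpt}$, then every problem in $\wone$ nu-FPT-reduces to a problem in $\nu{\fpt}$ and therefore belongs to $\nu{\fpt}$, i.e.\ $\wone \subseteq \nu{\fpt}$. Second, $\nu{\fpt}$ is closed under complementation, by the remark characterizing membership in $\nu{\fpt}$ via nu-FPT-computability of the characteristic function (a property symmetric under negating outputs). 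So in the non-uniformly $\cowone$-hard case the analogous argument first gives $\cowone \subseteq \nu{\fpt}$, and closure under complement---equivalently, the chain in which $\coclique$ is $\cowone$-hard while its complement $\clique$ is $\wone$-hard---upgrades this to $\wone \subseteq \nu{\fpt}$. In either case, $\param{\mc[\Phi]} \notin \nu{\fpt}$ unless $\wone \subseteq \nu{\fpt}$, as required.

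Since Theorem~\ref{thm:main} and Propositions~\ref{prop:in-casec} and~\ref{prop:casec-hard} already carry out the substantive work, I do not anticipate any real obstacle: the only content beyond a transcription of the proof of Corollary~\ref{cor:main-computable} is the pair of closure facts for $\nu{\fpt}$ described above, and both are essentially definitional given Proposition~\ref{prop:fpt-reductions-compose} and the stated characterization of $\nu{\fpt}$.
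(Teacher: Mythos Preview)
Your proposal is correct and follows essentially the same route as the paper: apply Theorem~\ref{thm:main} together with the non-uniform clauses of Propositions~\ref{prop:in-casec} and~\ref{prop:casec-hard}, then invoke closure of $\nu{\fpt}$ under nu-FPT-reductions. The only difference is that you spell out more carefully the passage from non-uniform $\cowone$-hardness to $\wone \subseteq \nu{\fpt}$ via closure of $\nu{\fpt}$ under complementation, a step the paper leaves implicit.
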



\begin{proof}
When $\Phi$ has bounded thickness, the claim follows directly from
Theorem~\ref{thm:main} and Proposition~\ref{prop:in-casec}.
When $\Phi$ does not have bounded thickness,
Theorem~\ref{thm:main} and Proposition~\ref{prop:casec-hard}
imply that $\param{\mc[\Phi]}$ is either
non-uniformly $\wone$-hard or non-uniformly $\cowone$-hard,
so, invoking the fact that $\nu{\fpt}$ is closed under
nu-FPT-reductions,
the claim follows.
\end{proof}

In Section~\ref{sect:discussion},
we provide a discussion
of how the main theorem of the present paper can be used
to readily derive the dichotomy theorem of graphical sets of
quantified conjunctive
queries~\cite{ChenDalmau12-decomposingquantified};
a dual argument yields the corresponding theorem on graphical sets
of quantified disjunctive queries.
Note that it follows immediately from this discussion and
\cite[Example 3.5]{ChenDalmau12-decomposingquantified}
that there exists a set of graph-like sentences having
bounded arity
that is tractable, but not contained in one of the tractable classes
identified by Adler and Weyer~\cite{AdlerWeyer12-treewidth}.

Let us also note here that, as pointed out by Adler and Weyer,
it is undecidable, given a first-order sentence $\phi$
and a value $k \geq 1$, 
whether or not $\phi$ is logically equivalent to a $k$-variable
sentence,
and hence one cannot expect an algorithm
that takes a first-order sentence and outputs an equivalent one
that minimizes the number of variables.
(This undecidability result holds even for 
positive first-order logic~\cite{BovaChen14-width-ep}.)

\section{Accordion reductions}
\label{sect:accordion}

In this section, we introduce a notion that we call 
\emph{accordion reduction}.
When $C \subseteq \Sigma^* \times \Sigma^*$ is a set of string pairs
and $S \subseteq \Sigma^*$ is a set of strings, we use
$\clo_C(S)$ to denote the intersection of all sets $T$ containing $S$
and
having the closure property that,
if $(u, u') \in C$ and $u' \in T$, then $u \in T$.
When an accordion reduction exists for such a $C$
(with respect to a language $Q$),
the theorem of this section 
(Theorem~\ref{thm:accordion-reduction-gives-slice-reduction})
yields that
the problem $Q[\clo_C(S)]$ slice reduces to $Q[S]$.
Hence, an accordion reduction is not itself a slice reduction,
but its existence provides a sufficient condition
for the existence of a class of slice reductions.

How is this section's theorem proved?
One component of an accordion reduction is 
an FPT-computable mapping $r$ that,
for each $(u, u') \in C$,
maps a $Q$-instance $(u, y)$ to a $Q$-instance $(u', y')$.
Intuitively, to give a slice reduction from
$Q[\clo_C(S)]$ to $Q[S]$,
one needs to reduce, for any $s \in S$ and
$s_1 \in \clo_C(S)$,
instances of the form $(s_1, \cdot)$ to
instances of the form $(s, \cdot)$.
The containment $s_1 \in \clo_C(S)$
implies the existence of a sequence
$s_1, \ldots, s_k = s$ such that every pair
$(s_i, s_{i+1})$ is in $C$.
This naturally suggests applying the map $r$
repeatedly, but note that there is no constant bound
on the length $k$ of the sequence.
Hence, $r$ needs to be sufficiently well-behaved so that,
when composed with itself arbitrarily many times
(in the described way), 
the end effect is that of an FPT-computable function
that may serve as the map in the definition of slice reduction.
(The author is mentally reminded of the closing of an accordion in 
thinking that this potentially long sequence of compositions
yields a single well-behaved map.)
To ensure this well-behavedness,
we impose a condition that we call \emph{measure-linearity}.

In the context of accordion reductions,
a \emph{measure} is a mapping $m: \Sigma^* \times \Sigma^* \to
\N$
such that 
there exist a computable function $f: \Sigma^* \to \N$
and a polynomial $p: \N \to \N$ 
whereby,
for all pairs $(s, y) \in \Sigma^* \times \Sigma^*$,
it holds that 
$m(s,y) \leq |y| \leq f(s) p(m(s,y))$.

\begin{definition}
Let $Q \subseteq \Sigma^* \times \Sigma^*$ be a language of pairs.
With respect to $Q$, an \emph{accordion reduction}
consists of:
\begin{itemize}

\item a computably enumerable language $C \subseteq \Sigma^* \times \Sigma^*$,

\item a measure $m: \Sigma^* \times \Sigma^* \to \N$, 

\item a partial computable function 
$B: \Sigma^* \times \Sigma^* \to \N$
with $\dom(B) = C$, and

\item a mapping $r: \Sigma^* \times \Sigma^* \times \Sigma^* \to
  \Sigma^*$
that has $\dom(r) = C \times \Sigma^*$,
that is FPT-computable with respect to 
the parameterization $(\pi_1, \pi_2)$,
and that is \emph{measure-linear} in that,
for each pair $(u, u') \in C$
and
for each $y \in \Sigma^*$, it holds that
$m(u', r(u, u', y)) \leq B(u, u') m(u, y)$,

\end{itemize}
such that the following condition holds:
\begin{itemize}

\item \emph{(correctness)} 
for each $(u, u') \in C$, it holds
(for each $y \in \Sigma^*$) that
$$(u, y) \in Q \Leftrightarrow (u', r(u, u', y)) \in Q.$$

\end{itemize}
\end{definition}

\begin{theorem}
\label{thm:accordion-reduction-gives-slice-reduction}
Suppose that $Q[S]$ is a case problem,
and that $(C, m, r)$ is an accordion reduction with respect to $Q$.
Then, the case problem $Q[\clo_C(S)]$ 
slice reduces to the case
problem $Q[S]$.
\end{theorem}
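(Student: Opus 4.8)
The plan is to build a slice reduction $(U, r')$ from $Q[\clo_C(S)]$ to $Q[S]$ directly from the accordion data --- the computably enumerable $C$, the measure $m$, the partial computable bound $B$, and the measure-linear map $r$ --- with no appeal to the transitivity theorem. For $U$ I take the set of pairs $(t, t')$ admitting a \emph{$C$-chain} from $t$ to $t'$, i.e.\ a sequence $t = s_1, \ldots, s_k = t'$ with $k \geq 1$ and $(s_i, s_{i+1}) \in C$ for all $i < k$; the length-one sequence $(t)$ shows the diagonal lies in $U$. Since $C$ is computably enumerable, so is $U$: dovetail over candidate sequences and over an enumeration of $C$, emitting a pair once a witnessing chain is confirmed. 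Coverage is then immediate from the definition of $\clo_C(S)$, since every $s \in \clo_C(S)$ either lies in $S$ (giving $(s,s) \in U$) or starts a $C$-chain ending in some $s' \in S$ (giving $(s,s') \in U$).

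For $r'$, on an input $(t, t', y)$ with $(t, t') \in U$, I first run the above search to obtain the first confirmed $C$-chain $w = (s_1, \ldots, s_k)$ from $t$ to $t'$, then set $z_1 = y$ and iterate $z_{j+1} = r(s_j, s_{j+1}, z_j)$ for $j = 1, \ldots, k-1$ (each call legitimate since $(s_j, s_{j+1}) \in C$ and $\dom(r) = C \times \Sigma^*$), outputting $z_k$; this makes $\dom(r') = U \times \Sigma^*$, the output on other inputs being immaterial. Correctness follows by iterating the correctness clause of the accordion reduction along $w$: for each $j$ one has $(s_j, z_j) \in Q \Leftrightarrow (s_{j+1}, z_{j+1}) \in Q$, and chaining these equivalences yields $(t,y) = (s_1, z_1) \in Q \Leftrightarrow (s_k, z_k) = (t', r'(t,t',y)) \in Q$.

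The substantive step, which I expect to be the main obstacle, is that $r'$ is FPT-computable with respect to $(\pi_1, \pi_2)$: one must show that the a priori unbounded composition of $r$ along $w$ ``closes up'' into a single FPT-computable map, which is exactly the point of measure-linearity. Fix $(t,t') \in U$ and the chain $w = (s_1, \ldots, s_k)$ found by the algorithm. Writing $D = \prod_{i=1}^{k-1}(1 + B(s_i, s_{i+1}))$, measure-linearity and induction on $j$ give $m(s_j, z_j) \leq D \cdot m(s_1, z_1) \leq D \cdot |y|$, so by the defining inequality of the measure $m$ applied at $s_j$ one gets $|z_j| \leq f_m(s_j)\, p_m(D \cdot |y|)$, where $f_m, p_m$ are the computable function and (monotone) polynomial witnessing that $m$ is a measure. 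As $D$, $\max_j f_m(s_j)$, and $\max_j |s_j|$ are computable from $w$, Proposition~\ref{prop:db-closure} lets me rewrite this as $|z_j| \leq \Lambda(w) \cdot q(|y|)$ with $q$ a fixed polynomial and $\Lambda(w)$ computable from $w$; feeding this into the FPT-computability of $r$ bounds the cost of each step $z_j \mapsto z_{j+1}$ (and of writing it) by $\Theta(w) \cdot q'(|y|)$ with $q'$ a fixed polynomial and $\Theta(w)$ computable from $w$. Summing the at most $|w|$ steps and adding the one-time chain-search cost $\sigma(t,t')$ (a function of $(t,t')$ alone), the running time of $r'$ on $(t,t',y)$ is at most $F(t,t') \cdot P(|(t,t',y)|)$ for a fixed polynomial $P$, where $F(t,t')$ is computed by the algorithm itself: it performs the search, reads off $w$, and evaluates the above quantities. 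Since this is all that Definition~\ref{def:fpt-computable} demands on $\dom(r') = U \times \Sigma^*$, $r'$ is FPT-computable with respect to $(\pi_1, \pi_2)$, and $(U, r')$ is the desired slice reduction.

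The delicate points will be (i) confirming that the cumulative blowup factor $\prod_i B(s_i, s_{i+1})$ is genuinely finite and computable from the chain --- this is where measure-linearity together with $\dom(B) = C$ is used --- and (ii) keeping the bookkeeping of Proposition~\ref{prop:db-closure} honest, in particular exploiting that passing through the measure $m$ rather than through $|y|$ directly is what makes this factor multiply cleanly across the composition while staying, by the measure axioms, polynomially tied to string length. Everything else (computable enumerability of $U$, coverage, correctness) is routine.
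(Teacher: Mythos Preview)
Your proposal is correct and follows essentially the same approach as the paper: define $U$ as the $C$-chain relation, build $r'$ by iterating $r$ along a computably found chain, verify correctness by chaining the accordion's correctness clause, and establish FPT-computability by using measure-linearity to bound $m(s_j,z_j)$ by a product of the $B$-values times $|y|$, then invoking the measure axiom to control $|z_j|$. The paper's argument is organized identically (its $r^+$ is your $r'$), with only cosmetic differences such as your use of $1+B$ in place of $B$ and your explicit mention of the diagonal in $U$.
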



\begin{proof}
We define a slice reduction $(U, r^+)$ from
$Q[\clo_C(S)]$ to $Q[S]$.
Set $U$ to be the set containing the pairs $(u, u')$
such that 
there exists $k \geq 1$ and a sequence $u_1, \ldots, u_k$
such that $u = u_1$, $u_k = u'$, and
$(u_i, u_{i+1}) \in C$ when $1 \leq i < k$.
It is straightforwardly verified that
$U$ is computably enumerable and that
the coverage criterion is satisfied, that is, 
if $u \in \clo_C(S)$, then there exists $u' \in S$
such that $(u, u') \in U$.
Fix $A_U$ to be an algorithm that, given a pair $(u, u')$,
returns a sequence $u = u_1, u_2, \ldots, u_k = u'$ of the
just-described form whenever $(u, u') \in U$.
Consider the algorithm $A_{r^+}$ that
does the following: given a triple $(u, u', y)$, it
invokes $A_U(u, u')$; if this computation halts
with output $u = u_1, u_2, \ldots, u_k = u'$,
the algorithm sets $y_1 = y$;
the algorithm computes
\begin{center}
$y_2 = r(u_1, u_2, y_1)$,
$y_3 = r(u_2, u_3, y_2)$, $\ldots$,
$y_k = r(u_{k-1}, u_k, y_{k-1})$,
\end{center}
and outputs $y_k$.
We define $r^+$ as the partial mapping computed by this algorithm $A_{r^+}$.
In order to compute each of the strings $y_2, \ldots, y_k$, the
algorithm
$A_{r^+}$
uses the algorithm $A_r$ for $r$ provided by the definition 
of an accordion reduction; let $f_r$ and $p_r$ be a computable function and 
a polynomial, respectively, such that the running time of $A_r$,
on an input $(u, u', y)$,
is bounded above by $f_r(u, u') p_r(|(u, u', y)|)$.

On an input $(u, u', y)$ where $r^+$ is defined, 
the running time of $A_{r^+}$ can be bounded above by 
the running time of $A_U(u, u')$ plus
$$f_r(u_1, u_2) p_r(|(u_1, u_2, y_1)|) + \cdots
+ f_r(u_{k-1}, u_k) p_r(|(u_{k-1}, u_k, y_{k-1})|).$$
Note that the running time of $A_U(u, u')$ and $k$ are both functions
of $(u, u')$.
So, in order to show that the running time of
$A_{r^+}(u, u', y)$ is FPT-computable with respect to $(\pi_1,
\pi_2)$,
it suffices to show that, for each $i \geq 1$, the term
$f_r(u_i, u_{i+1}) p_r(|(u_i, u_{i+1}, y_i)|)$ is degree-bounded, when defined
(hereon, when discussing degree-boundedness,
this is with respect to $(\pi_1, \pi_2)$).
So consider such a term; the strings $u_i$ and $u_{i+1}$ 
are computable functions of $(u, u')$,
and so
by appeal to Proposition~\ref{prop:db-closure}, it suffices to show that 
the string length $|y_i|$ is degree-bounded.
This is clear in the case that $i = 1$, that is,
we have that $|y_1|$ is degree-bounded.
Let us consider the case that $i > 1$.
By the measure-linearity of $r$, we have that
$$m(u_i, y_i) \leq B(u_1, u_2) \cdots B(u_{i-1}, u_i) m(u_1,y_1).$$
By the definition of measure, it follows that
$$m(u_i, y_i) \leq B(u_1, u_2) \cdots B(u_{i-1}, u_i) |y_1|.$$
Since the constants $B(u_j, u_{j+1})$ depend only on $(u, u')$,
we obtain that the value 
$m(u_i, y_i)$ is degree-bounded.
Letting $f_m$ and $p_m$ be the computable function 
and polynomial  (respectively)
provided
by the definition of measure,
we then have $|y_i| \leq f_m(u_i) p_m(m(u_i, y_i))$,
and conclude that $|y_i|$ is degree-bounded.
\end{proof}

\section{Hardness}
\label{sect:hardness}

\newcommand{\full}{\mathsf{full}}
\newcommand{\sdot}{\circ^s}

In this section, we establish the main intractability result of the
paper,
namely,
that the case problem $\mc[\Phi]$ is hard when
$\Phi$ is graph-like and has unbounded thickness.

\begin{theorem}
\label{thm:hardness}
Suppose that $\Phi$ is a set of graph-like sentences
of bounded arity such that
$\thick(\Phi)$ is unbounded.
Then, either 
$\caseclique$ or $\casecoclique$ 
slice reduces to $\mc[\Phi]$.
\end{theorem}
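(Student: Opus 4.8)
The plan is to produce the reduction in two stages. In the first stage, I would use the accordion-reduction machinery of Section~\ref{sect:accordion} to replace $\Phi$ by a more convenient superset whose model-checking problem still slice reduces to $\mc[\Phi]$; in the second, I would slice reduce $\caseclique$ (or, in the dual case, $\casecoclique$) into the model-checking problem of that superset, whence transitivity of slice reducibility (Theorem~\ref{thm:slice-reduction-transitive}) finishes the argument. For the first stage, one designs a computably enumerable relation $C \subseteq \Sigma^* \times \Sigma^*$, a measure $m$, and a structure-level map $r$ that realize a repertoire of elementary, reversible surgeries on sentences — replacing a positively combined subformula by a fresh atom of the appropriate arity (and the reverse), adjoining a conjunct or disjunct over a fresh relation symbol, introducing a vacuous quantification, and detaching the top of a positive combination — each of which $r$ can undo on structures at a polynomial, instance-independent cost, so that $(C,m,r)$ is an accordion reduction with respect to $\mc$; Theorem~\ref{thm:accordion-reduction-gives-slice-reduction} then gives that $\mc[\clo_C(\Phi)]$ slice reduces to $\mc[\Phi]$.

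The second stage begins by unwinding the hypothesis. As $\thick(\Phi)$ is unbounded, there is a family $(\phi_j) \subseteq \Phi$ with $\thick(\phi_j) \to \infty$; by the definition of $\thick$, each $\lay(\phi_j)$ has a positively combined layered subformula $\psi_j$ with $\thickl(\psi_j) \to \infty$, and, unfolding the inductive definition of $\thickl$, inside $\psi_j$ there is a layered node $\eta_j$ of the form $\exists U_j(\bigwedge_i \phi_{j,i})$ or $\forall U_j(\bigvee_i \phi_{j,i})$ whose local thickness $1 + \tw(H_j)$ is large, with $H_j = \{\free(\phi_{j,i})\}_i \cup \{\free(\eta_j)\}$. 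By pigeonhole, either infinitely many of the $\eta_j$ are existential — the case I would treat, aiming at $\caseclique$ — or infinitely many are universal, which is entirely dual and yields $\casecoclique$. Using replacement closure I would first pass, still within $\Phi$, to a symbol-loose variant of $\phi_j$ (reassigning every atom its own fresh relation symbol); this is exactly what keeps the surgeries above non-interfering, so that the ``bare'' conjunctive sentence $q_j$ obtained by replacing, inside $\eta_j$, each conjunct $\phi_{j,i}$ by a fresh atom over its free variables, adjoining a fresh atom over $\free(\eta_j)$, and existentially quantifying all variables, belongs to $\clo_C(\Phi)$. By construction $q_j$ has hypergraph exactly $H_j$ and, having pairwise distinct relation symbols, is its own core, so its (core) treewidth is $\tw(H_j)$, which ranges over an unbounded set.

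It then remains to invoke the classical hardness direction for evaluation of conjunctive sentences of large treewidth: a computably enumerable class of Boolean conjunctive queries whose cores have unbounded treewidth admits a slice reduction from $\caseclique$, obtained from the Excluded Grid Theorem and the standard encoding of $k$-clique via a grid minor (this is the hard direction of Grohe's dichotomy \cite{Grohe07-otherside}, rephrased slice-wise). Concretely, I would let the computably enumerable language $U$ relate a string $s$ encoding a target clique size $k$ to every Boolean conjunctive sentence whose core treewidth is at least $g(k)$, for a suitably fast-growing $g$; coverage holds because $\clo_C(\Phi)$ contains the bare sentences $q_j$ of arbitrarily large core treewidth, and correctness for each admitted pair is exactly the CQ-hardness reduction applied to the relevant sentence. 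Composing this slice reduction with the one from the first stage, via Theorem~\ref{thm:slice-reduction-transitive}, produces a slice reduction from $\caseclique$ (resp.\ $\casecoclique$) into $\mc[\Phi]$; checking FPT-computability of the composed maps is routine bookkeeping using Proposition~\ref{prop:db-closure}.

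The step I expect to be the main obstacle is the construction of $C$ in the first stage: it must be rich enough that $\clo_C(\Phi)$ contains the bare conjunctive (resp.\ disjunctive) sentences $q_j$, it must satisfy correctness with respect to $\mc$ in the presence of quantifier alternation, negated atoms, and the ambient positive combination — so that, for instance, trivializing a conjunct of $\eta_j$ neither falsifies a surrounding universal layer nor disturbs the DeMorgan-normal shape produced by $\lay$, and the structure-level map always succeeds in ``re-inflating'' a fresh atom back into the original subformula (possibly by enlarging the universe) — and it must admit a genuinely measure-linear map $r$, so that the potentially long $C$-chain from $q_j$ down to $\phi_j$ telescopes, in the manner of the proof of Theorem~\ref{thm:accordion-reduction-gives-slice-reduction}, into a single FPT-computable map; this last requirement in turn dictates a careful choice of the measure $m$. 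A secondary point, implicit above, is that hardness must be driven by the \emph{plain} treewidth of the hypergraph $H_j$ rather than by any smaller notion of core treewidth, so it is essential that graph-likeness — which forbids identifying variables — is precisely what prevents $q_j$ from being simplified, as already reflected in the appeal to symbol-looseness.
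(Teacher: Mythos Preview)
Your two-stage plan mirrors the paper's, and the accordion-reduction idea is exactly right; but there is a genuine gap in the second stage that the paper's argument is specifically designed to avoid.

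The problem is your claim that the bare conjunctive sentence $q_j$ --- obtained by replacing each $\phi_{j,i}$ by an atom on $\free(\phi_{j,i})$, adjoining an atom on $\free(\eta_j)$, and \emph{existentially quantifying every variable} --- lies in $\clo_C(\Phi)$. The variables in $\free(\eta_j)$ are not quantified inside $\eta_j$; they are quantified in the surrounding layers of $\psi_j$, and some of them will in general be \emph{universally} quantified there. None of your proposed $C$-surgeries (fresh-atom replacement, adjoining a conjunct/disjunct, vacuous quantification, detaching a positive top) can turn a genuinely universal variable into an existential one while preserving $\mc$-correctness, so the chain from $\phi_j$ to $q_j$ does not exist in general. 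Relatedly, your pigeonhole on whether $\eta_j$ begins with $\exists$ or $\forall$ does not correctly decide between $\caseclique$ and $\casecoclique$: an existential $\eta_j$ whose local thickness is driven by a large edge $\free(\eta_j)$ consisting mostly of \emph{universally} quantified variables yields $\casecoclique$-hardness, not $\caseclique$-hardness.

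The paper handles precisely this by refusing to collapse the two sources of local thickness. It first passes to friendly (loose, positive, layered) sentences and then to a \emph{multi-sorted} setting (so that the accordion map $r$ can be measure-linear without blowing up the wrong sorts), and its accordion relation $C$ acts only on \emph{simple} subformulas: it can (i) extract from a simple subformula $\phi'$ a simple \emph{sentence} on only the quantified variables of $\phi'$, or (ii) replace an existential simple $\phi'$ by a \emph{disjunction} of binary atoms on $K(\free(\phi'))$ (dually for universal). With this $C$ in hand, the proof splits on whether $\quantthickl(\phi)$ is unbounded --- in which case (i) yields purely $\exists$ (or purely $\forall$) simple sentences of unbounded treewidth and Grohe--Schwentick--Segoufin applies --- or $|\free(\phi)|$ is unbounded, in which case (ii), together with a further split on whether the universal or existential free variables dominate, manufactures an explicit existential or universal $k$-clique. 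Your single ``apply Grohe to $H_j$'' conflates these two cases and, in doing so, loses track of which quantifier governs the variables that actually carry the treewidth.
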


\begin{proof}
Immediate from Lemmas~\ref{lemma:red-friendly-to-graphlike}
and~\ref{lemma:red-fullysorted-to-normal},
and Theorem~\ref{thm:red-clique-to-sorted-friendly}.
\end{proof}

This intractability result is obtained by composing three slice reductions.
Define a formula to be \emph{friendly} 
if it is loose, positive, and layered.
We first show
(Lemma~\ref{lemma:red-friendly-to-graphlike})
that there exists a set of friendly sentences $\Psi$,
with $\thickl(\Psi)$ unbounded,
such that $\mc[\Psi]$ slice reduces to $\mc[\Phi]$.
We next show
(Lemma~\ref{lemma:red-fullysorted-to-normal})
that a multi-sorted version $\full(\Psi)$ of $\Psi$
has the property that
$\mcs[\full(\Psi)]$ slice reduces to $\mc[\Psi]$;
here, $\mcs$ denotes the multi-sorted generalization of $\mc$.
Finally, we directly slice reduce either
$\caseclique$ or $\casecoclique$ to $\mcs[\Psi]$
(Theorem~\ref{thm:red-clique-to-sorted-friendly});
this third reduction is obtained via an accordion reduction.

\begin{lemma}
\label{lemma:red-friendly-to-graphlike}
Suppose that $\Phi$ is a set of graph-like sentences
of bounded arity such that
$\thick(\Phi)$ is unbounded.
There exists a set of friendly sentences $\Psi$
such that $\mc[\Psi]$ slice reduces to $\mc[\Phi]$
and such that $\thickl(\Psi)$ is unbounded.
\end{lemma}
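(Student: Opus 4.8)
\emph{Overview.} The plan is to produce, for every $k \geq 1$, a single friendly sentence $\psi_k$ with $\thickl(\psi_k) \geq k$ together with a companion sentence in $\Phi$, in a way uniform enough that the family $\Psi = \{ \psi_k \mid k \geq 1 \}$ yields a slice reduction $\mc[\Psi] \to \mc[\Phi]$.

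\emph{Construction of $\psi_k$.} Since $\thick(\Phi)$ is unbounded, I would first fix for each $k$ a sentence $\phi_k \in \Phi$ with $\thick(\phi_k) \geq k$. By Theorem~\ref{thm:formula-to-pos-comb-organized}, $\org(\phi_k)$ lies in the syntactic closure of $\phi_k$, hence in $\Phi$ (which is syntactically closed); it is logically equivalent to $\phi_k$ and is a positive combination of organized sentences $\rho_1, \dots, \rho_m$. Next I would \emph{symbol-loosen}: replace each occurrence of each relation symbol by a fresh, distinct symbol, obtaining $\Pi_k$. Since replacement preserves arities and $\Phi$ is replacement closed, $\Pi_k \in \Phi$ (and still has bounded arity); moreover $\Pi_k$ is symbol-loose, so its organized leaves $\rho_1', \dots, \rho_m'$ (the renamed $\rho_i$) are pairwise symbol-disjoint and each symbol-loose. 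By how $\lay$ is built (Theorem~\ref{thm:formula-to-pos-comb-layered}), the maximal layered pieces of $\lay(\phi_k)$ are precisely the outputs of the algorithm of Theorem~\ref{thm:organized-to-layered} on the organized leaves of $\org(\phi_k)$; since that algorithm commutes with relation-symbol renaming and $\thickl$ depends only on free-variable sets and treewidth, the inequality $\thick(\phi_k) \geq k$ gives an index $j$ such that, writing $\theta_k$ for the Theorem~\ref{thm:organized-to-layered} output on $\rho_j'$, one has $\thickl(\theta_k) \geq k$. The sentence $\theta_k$ is variable-loose and symbol-loose but may contain negated atoms; I would then \emph{positivize} it, replacing each negated atom $\neg R(\bar v)$ by a positive atom $R^{\bullet}(\bar v)$ with $R^{\bullet}$ drawn from a fixed reservoir of fresh symbols, to get $\psi_k$. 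Because atoms and negated atoms are interchangeable base cases in the definition of a layered formula, $\psi_k$ is layered; it is clearly loose and positive, hence friendly, and $\thickl(\psi_k) = \thickl(\theta_k) \geq k$.

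\emph{The slice reduction.} I would take $U$ to be the (decidable, hence computably enumerable) set of pairs $(\psi, \phi)$ such that $\phi$ is a symbol-loose positive combination of organized sentences and $\psi$ is the positivization of the Theorem~\ref{thm:organized-to-layered} output on one of $\phi$'s organized leaves; coverage then holds since $(\psi_k, \Pi_k) \in U$. Given $(\psi, \phi) \in U$ with distinguished leaf $\rho$ and a structure $\relb$ over $\psi$'s signature, I would let $r(\psi,\phi,\relb)$ be the structure $\relb'$ over $\phi$'s signature on $\relb$'s universe that interprets each positive-atom symbol of $\rho$ as in $\relb$, interprets each negated-atom symbol $R$ of $\rho$ as the complement (over $\relb$'s universe) of the interpretation in $\relb$ of the corresponding symbol $R^{\bullet}$, and interprets the symbols of every other leaf so as to force that leaf to a prescribed truth value. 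The prescription exists because $\phi$, being symbol-loose, is a \emph{read-once} monotone $\{\wedge,\vee\}$-combination of its leaves, so some assignment to all leaves but $\rho$ makes the combination equal to the value of $\rho$ (set $\wedge$-siblings true and $\vee$-siblings false along the root-to-$\rho$ path); and each symbol-loose organized sentence is both satisfiable (make positive-atom symbols full, negated-atom symbols empty) and falsifiable (dually), so each leaf can be forced independently. Unwinding definitions, $\relb' \models \phi$ iff $\rho$ holds in $\relb'$ iff (using the logical equivalence of $\rho$ with the Theorem~\ref{thm:organized-to-layered} output and the complementation convention) $\relb \models \psi$; on inputs not of this form $r$ returns a fixed string outside $\mc$, and the empty-universe case is decided directly from the parameter. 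The leaf index, the root-to-$\rho$ path, and the symbol correspondences are all computable from $(\psi,\phi)$, and building $\relb'$ from $\relb$ is polynomial time (bounded arity makes the complementations polynomial), so $r$ is FPT-computable with respect to $(\pi_1,\pi_2)$. Finally $\thickl(\Psi) \supseteq \{ \thickl(\psi_k) \mid k \geq 1 \}$ is unbounded.

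\emph{Main obstacle.} The crux is the isolation step: arranging, by the choice of structure, that the value of the companion sentence in $\Phi$ is governed by a single layered leaf. This is exactly what forces symbol-loosening (making the combination read-once and its leaves both independently controllable and individually satisfiable/falsifiable), and it also dictates the \emph{timing}: symbol-loosening must be applied after invoking the algorithm of Theorem~\ref{thm:formula-to-pos-comb-organized}, whose internal distributivity steps would otherwise destroy symbol-looseness, yet the result must still land in $\Phi$ — which works only because $\org(\phi_k)$ is in the syntactic closure of $\phi_k$ and symbol-loosening is a replacement. The remaining bookkeeping (positivization preserving layeredness, the empty-universe corner case, and checking decidability of $U$) is routine.
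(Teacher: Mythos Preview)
Your proposal is correct and takes essentially the same approach as the paper: both use syntactic closure to place $\org(\phi)$ in $\Phi$, then replacement closure to symbol-loosen while staying in $\Phi$, then isolate one high-$\thickl$ layered leaf by forcing the other leaves' relations to full or empty, and finally positivize by complementing the relations attached to negated atoms. The only difference is organizational: the paper factors this as two slice reductions (first from a set $\Phi'$ of loose layered leaves to $\Phi$, then from their positivizations $\Psi$ to $\Phi'$) and composes via Theorem~\ref{thm:slice-reduction-transitive}, whereas you collapse both steps into a single reduction over an explicitly indexed family $\{\psi_k\}$.
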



\begin{proof}
Define $\Phi'$ to be the set 
that contains a loose layered sentence 
if it occurs as a positively combined subformula of a loose sentence in 
$\lay(\{ \phi \in \Phi  ~|~  \textup{$\phi$ is loose} \})$.
We have that $\thickl(\Phi')$ is unbounded;
this is because, for each sentence $\phi \in \Phi$,
if we define $\phi^L$ to be equal to $\org(\phi)$ but with symbols renamed
(if necessary) so that $\phi^L$ is loose, 
then $\phi^L \in \Phi$ and it is straightforwardly verified
that $\thick(\phi) = \thick(\phi^L)$.

We claim that $\mc[\Phi']$ slice reduces to $\mc[\Phi]$.
We define a slice reduction as follows.
The set $U$ contains a pair $(\phi', \phi)$
if $\phi$ is a loose sentence and
$\phi'$ is a loose layered sentence that is a positively combined subformula
of $\lay(\phi)$; we have that $U$ is computable.
For pairs $(\phi', \phi) \in U$,
we set $r(\phi', \phi, \relb') = \relb$,
where $\relb$ is defined as follows.
For each symbol $R$ (of arity $k$) that appears in $\phi$
but not in $\phi'$, 
we define $R^{\relb} = B^k$ or $R^{\relb} = \emptyset$ as appropriate
so that $\relb' \models \phi'$ if and only if $\relb \models \phi$;
this is possible since $\phi'$ is a positively combined subformula of
$\lay(\phi)$
and because $\lay(\phi)$ and $\phi$ are logically equivalent
(by Theorem~\ref{thm:formula-to-pos-comb-layered}).

Define $\Psi$ to be the set that contains a sentence $\psi$
if it can be obtained from a sentence $\phi' \in \Phi'$
by removing all negations that appear immediately in front of atoms.
Since each sentence in $\Phi'$ is loose and layered,
we obtain that each sentence in $\Psi$ is friendly.
We give a slice reduction from $\mc[\Psi]$ to $\mc[\Phi']$,
as follows.
Define $U$ to be the set that contains a pair $(\psi, \phi')$
if $\phi'$ is a loose layered sentence and
$\psi$ can be obtained from $\phi'$ by removing all negations
that appear in front of atoms.
When $(\psi, \phi') \in U$,
define $r(\psi, \phi', \relb) = \relb'$
where $B' = B$ and, for each symbol $R$ of arity $k$,
it holds that $R^{\relb'} = B^k \setminus R^{\relb}$
when $R$ appears in $\phi'$ with a negation before it,
and $R^{\relb'} = R^{\relb}$ otherwise.
(Note that $B^k \setminus R^{\relb}$ can always be computed
in polynomial time from $R^{\relb}$ due to our assumption of bounded arity.)
From the definition of $\thickl(\cdot)$,
we have that $(\psi, \phi') \in U$ implies $\thickl(\psi) =
\thickl(\phi')$,
so $\thickl(\Psi)$ is unbounded.

By Theorem~\ref{thm:slice-reduction-transitive}, 
there is a slice reduction from
$\mc[\Psi]$ to $\mc[\Phi]$.
\end{proof}

In what follows, we will work with multi-sorted relational first-order
logic, formalized as follows.
(For differentiation, we will refer to formulas in the usual first-order logic
considered thus far as \emph{one-sorted}.)
A \emph{signature} is a pair $(\sigma, \S)$
where $\S$ is a set of \emph{sorts} and $\sigma$
is a set of relation symbols;
each relation symbol $R \in \sigma$ has an associated arity
$\ar(R)$ which is an element of $\S^*$.
In a formula over signature $(\sigma, \S)$,
each variable $v$ has associated with it a sort $s(v)$ from $\S$;
an atom is a formula $R(v_1, \ldots, v_k)$ where $R \in \sigma$
and $s(v_1) \ldots s(v_k) = \ar(R)$.
A structure $\relb$ on signature $(\sigma, \S)$
consists of an $\S$-sorted family $\{ B_s ~|~ s \in \S \}$
of sets called the \emph{universe} of $\relb$, and for each symbol
$R \in \sigma$, an interpretation $R^{\relb} \subseteq B_{\ar(R)}$,
where for a word $w = w_1 \ldots w_k \in \S^*$, we use
$B_w$ to denote the product $B_{w_1} \times \cdots \times B_{w_k}$.
We use $\mc_s$ to denote the multi-sorted version of $\mc$,
that is, it is the language of pairs $(\phi, \relb)$ where $\phi$ is a
sentence and
$\relb$ is a structure
both having the same signature $(\sigma, \S)$, and $\relb \models \phi$.

Suppose that $\phi$ is a multi-sorted friendly formula
on signature $(\sigma, \S)$, and let $V$ be the set
of variables occurring in $\phi$; 
we say that $\phi$ is \emph{fully-sorted} if 
$V \subseteq \S$ and for each $v \in V$, the sort of $v$ is $v$ itself
(that is, $s(v) = v$).
When $\psi$ is a one-sorted friendly formula,
and $V$ is the set of variables that occur in $\psi$,
we use $\full(\psi)$ to denote the natural fully-sorted formula
induced by $\psi$, namely, the formula on signature $(\sigma, V)$
where $\sigma$ contains those symbols occurring in $\psi$, and,
if $R(v_1, \ldots, v_k)$ appears in $\psi$, then
$\ar(R) = v_1 \ldots v_k$ (this is well-defined since $\phi$ is symbol-loose).

\begin{lemma}
\label{lemma:red-fullysorted-to-normal}
Let $\Psi$ be a set of one-sorted friendly sentences.
The set $\full(\Psi)$ of fully-sorted friendly sentences
has the property that $\mcs[\full(\Psi)]$ slice reduces to $\mc[\Psi]$.
\end{lemma}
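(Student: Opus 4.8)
The plan is to build a slice reduction $(U,r)$ from $\mcs[\full(\Psi)]$ to $\mc[\Psi]$. Take $U$ to consist of all pairs $(\full(\psi),\psi)$ with $\psi$ a one-sorted friendly sentence; this set is decidable (all the conditions defining ``friendly'' and the map $\full$ are effective), hence computably enumerable, and the coverage condition is immediate since $(\full(\psi),\psi)\in U$ and $\psi\in\Psi$ for every $\psi\in\Psi$. It remains to define $r$ on triples $(\full(\psi),\psi,\relb)$ and to prove the correctness equivalence $\relb\models\full(\psi)\iff\relb'\models\psi$, where $\relb'=r(\full(\psi),\psi,\relb)$ (on malformed inputs $r$ just returns a fixed non-instance of $\mc$). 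Here $\full(\psi)$ has signature $(\sigma,V)$ with $V$ the variable set of $\psi$, and $\relb$ has universe $(B_v)_{v\in V}$. First dispose of empty sorts: if some $B_v=\emptyset$, then $\relb\models\full(\psi)$ can be decided in polynomial time by logical simplification (an $\exists$-block of $\full(\psi)$ binding an empty-universe sort is equivalent to $\false$, a $\forall$-block binding such a sort to $\true$, and these values propagate through conjunctions and disjunctions), so $r$ may return a fixed yes- or no-instance of $\mc$. Otherwise, let $B'$ be the disjoint union of the $B_v$, so $|B'|\le|\relb|$ and each $B_v$ is a subset of $B'$. Using variable-looseness, partition $V=V_\exists\cup V_\forall$ according to whether a variable is quantified existentially or universally in $\psi$. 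For a relation symbol $R$ with $\full$-arity $v_1\cdots v_k$ (well defined by symbol-looseness), write $E=\{\, j\in\und{k}\mid v_j\in V_\exists \,\}$ and $A=\und{k}\setminus E$, and set
\[ R^{\relb'} = R^{\relb}\ \cup\ \{\, (b_1,\dots,b_k)\in(B')^k \mid (\forall j\in E)\ b_j\in B_{v_j}\ \text{ and }\ (\exists j\in A)\ b_j\notin B_{v_j} \,\}. \]
Thus $R^{\relb'}$ agrees with $R^{\relb}$ on ``in-sort'' tuples, never contains a tuple with an out-of-sort existential coordinate, and is ``generous'' on every tuple whose existential coordinates are all in-sort but which has some out-of-sort universal coordinate. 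Since arities are bounded, $R^{\relb'}$ has polynomially many tuples and is computable in polynomial time, so $r$ is FPT-computable with respect to $(\pi_1,\pi_2)$.

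For correctness, write $\models_{\mathsf{ms}}$ for multi-sorted satisfaction (a quantifier on $v$ ranging over $B_v$). Call a partial assignment into $B'$ \emph{$\exists$-good} if each existentially quantified variable in its domain is mapped into its own sort, and \emph{in-sort} if every variable in its domain is; for $\exists$-good $f$, let $f_{\downarrow}$ be obtained from $f$ by replacing each out-of-sort value (necessarily at a universal variable) by a fixed representative of that sort. I would prove, by induction on the structure of a friendly subformula $\chi$ of $\psi$ (an atom, an $\exists$-quantified conjunction, or a $\forall$-quantified disjunction of friendly formulas), the two one-directional claims: \textbf{(I)} if $f$ is $\exists$-good, then $\relb,f_{\downarrow}\models_{\mathsf{ms}}\chi$ implies $\relb',f\models\chi$; and \textbf{(II)} if $f$ is in-sort, then $\relb',f\models\chi$ implies $\relb,f\models_{\mathsf{ms}}\chi$. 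Applied to $\chi=\psi$ with the empty assignment, (I) and (II) give the desired equivalence. The atom case is a direct reading of the displayed definition. For (I) on $\exists X(\bigwedge_i\phi_i)$ one reuses the in-sort witnesses supplied by a multi-sorted satisfying choice (so $fg$ stays $\exists$-good and $(fg)_{\downarrow}$ matches the multi-sorted assignment on each $\free(\phi_i)$); for (I) on $\forall Y(\bigvee_j\theta_j)$, given any challenge $g\colon Y\to B'$ one passes to its in-sort collapse $g_{\downarrow}$, uses multi-sorted truth to pick a disjunct $\theta_j$ that $f_{\downarrow}g_{\downarrow}$ satisfies, and transfers this to $\relb',fg$ via the induction hypothesis for $\theta_j$ (noting $fg$ is $\exists$-good on $\free(\theta_j)$ and $(fg)_{\downarrow}$ agrees with $f_{\downarrow}g_{\downarrow}$ there). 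For (II) one needs a monotonicity lemma: if $f'$ arises from $f$ by replacing out-of-sort values at existential variables with in-sort values, then $\relb',f\models\chi$ implies $\relb',f'\models\chi$ — because under our construction such a replacement can only raise each atom's truth value and $\chi$ is positive. This lemma lets one ``repair'' an arbitrary $\relb'$-witness for an existential block to an in-sort one before invoking the induction hypothesis, and the universal block is then handled by testing only in-sort challenges.

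The step I expect to require the most care is pinning down the right inductive invariant. The generous coordinates make $\relb'$ \emph{strictly easier} to satisfy than its in-sort restriction whenever out-of-sort universal values are in play, so a single two-sided equivalence does not propagate; one must instead carry the two one-directional implications (I) and (II) separately and check that each — not their conjunction — survives each quantifier layer. Alongside this, the monotonicity lemma for existential witnesses (underlying the ``repair'' used in (II)) and the bookkeeping that keeps assignments $\exists$-good across the induction are the places where the argument must be written out carefully; the remaining verifications (decidability of $U$, coverage, and the polynomial-time bound for $r$ under bounded arity) are routine.
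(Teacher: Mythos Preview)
Your proof is correct but takes a genuinely different route from the paper's. The paper constructs $\relb'$ by choosing a single set $B'$ of cardinality $\max_{s}|B_s|$, fixing a surjection $f_s\colon B'\to B_s$ for each sort $s$, and pulling each relation back along these surjections: $R^{\relb'}=\{(b'_1,\ldots,b'_k)\mid (f_{s_1}(b'_1),\ldots,f_{s_k}(b'_k))\in R^{\relb}\}$. Correctness is then a single clean equivalence proved by induction on subformulas: $\relb',g\models\phi$ iff $\relb,(f\!\sdot\! g)\models\full(\phi)$, where $f\!\sdot\! g$ sends each variable $v$ to $f_{s(v)}(g(v))$. No case analysis on the quantifier type of a variable is needed, because every $b'\in B'$ already names, via $f_s$, a bona fide element of every sort.

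Your disjoint-union-with-padding construction reaches the same conclusion, but because you allow out-of-sort values and must treat them asymmetrically according to whether the offending variable is existential or universal, your inductive invariant splits into two one-directional implications and needs the auxiliary monotonicity lemma to repair existential witnesses. The paper's approach buys a much shorter correctness proof (no quantifier-type case split, no repair step, no separate handling of empty sorts); your approach buys a more concrete universe but pays for it with a considerably more delicate induction. One small note: you appeal to bounded arity to bound $|R^{\relb'}|$ polynomially. That assumption is not in the lemma statement, though it holds in every place the lemma is applied in the paper, and the paper's pullback construction tacitly relies on it as well.
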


\begin{proof}
We give a slice reduction.
Define $U$ to be the set that contains each pair of the form
$(\full(\psi), \psi)$.
Suppose that $\psi$ and $\relb$ are over
signature $(\sigma, \S)$.
Define $r(\full(\psi), \psi, \relb) = \relb'$,
where $\relb'$ is defined as follows.
Let $B'$ be a set whose cardinality is
$\max_{s \in \S} |B_s|$.
For each sort $s \in \S$,
fix a map $f_s: B' \to B_s$ that is surjective.
For each relation symbol $R \in \sigma$ of arity $s_1 \ldots s_k$,
define 
$R^{\relb'} = \{ (b'_1, \ldots, b'_k) \in B'^k ~|~ 
(f_{s_1}(b'_1), \ldots, f_{s_k}(b'_k)) \in R^{\relb} \}$.
It is straightforward to prove by induction that,
for all subformulas $\phi$ of $\psi$,
and each assignment $g$ from the set of variables to $B'$,
that
$\relb', g \models \phi$ if and only if
$\relb, f \sdot g \models \full(\phi)$.
Here, $f \sdot g$ denotes
the mapping that sends each variable $v$
to $f_{s(v)}(g(v))$
and
we view $\full(\phi)$ as a formula over the signature $(\sigma, \S)$
of $\psi$.
The correctness of the reduction follows.
\end{proof}

\emph{In the remainder of this section,
we assume that all formulas and structures under discussion are multi-sorted.}


Let us say that a
friendly formula is a \emph{simple formula} if it is of the form
$\exists X (\bigwedge_{i=1}^n \alpha_i)$ 
or of the form
$\forall Y (\bigvee_{i=1}^n \alpha_i)$ 
where each $\alpha_i$ is an atom.
When $\phi$ is a simple formula
where the variables $V$ are those that are quantified initially,
we say that $\psi$ is 
\emph{a sentence based on $\phi$}
if $\psi$ is a simple sentence derivable from $\phi$
by replacing each atom $R(w_1, \ldots, w_k)$
with an atom whose variables are the elements in
$\{ w_1, \ldots, w_k \} \cap V$.

We now present an accordion reduction that will be used to derive our
hardness result.  
When $\Psi$ is a set of friendly sentences,
this accordion reduction will allow  a simple
subformula
$\phi' = \exists V \chi$ 
to simulate a disjunction of atoms on
$\free(\phi')$,
and likewise for a simple subformula
$\phi' = \forall V \chi$ 
to simulate a corresponding conjunction;
this is made precise as follows.
The set $C$ is defined to contain a pair
$(\psi, \phi)$ of friendly sentences if 
there exists a simple subformula 
$\phi' = Q V \chi$ 
of $\phi$
such that one of the following conditions holds:

\begin{enumerate}

\item[(1)]  $\psi$ is a sentence based on $\phi'$.

\item[(2)] $Q = \exists$ and $\psi$ is a friendly sentence obtained from $\phi$
by replacing $\phi'$ with $\bigvee_{i=1}^m E_i(v_{i1}, v_{i2})$,
where the tuples $(v_{i1}, v_{i2})$
are such that
$\{ v_{11}, v_{12} \}, \ldots, \{ v_{m1}, v_{m2} \}$
is a list of the elements in
$K(\free(\phi'))$ and the $E_i$ are relation symbols
(each of which is fresh in that it does not appear elsewhere in $\psi$).

\item[(3)] $Q = \forall$ and $\psi$ is a friendly sentence obtained from $\phi$
by replacing $\phi'$ with $\bigwedge_{i=1}^m E_i(v_{i1}, v_{i2})$
where the tuples $(v_{i1}, v_{i2})$ and the symbols $E_i$ 
are as described in the previous case.

\end{enumerate}

In order to present the accordion reduction, 
we view, without loss of generality, each pair of strings
as a pair $(\phi, \relb)$ where $\phi$ is
a sentence whose encoding includes
the signature $(\sigma, \S)$ over which it is defined;
and $\relb$ is a structure over this signature.

\begin{theorem}
\label{thm:formula-accordion-reduction}
Let $M$ be the measure such that
$M(\phi, \relb)$ is equal to $\max_{s \in \S} |B_s|$
when $\relb$ is a multi-sorted structure 
defined on the signature $(\sigma, \S)$ of $\phi$.
There exists a mapping 
$r: \Sigma^* \times \Sigma^* \times \Sigma^* \to \Sigma^*$ 
such that the triple $(C, M, r)$ is an accordion reduction 
with respect to $\mcs$.
(Here, $C$ is the set defined above.)
\end{theorem}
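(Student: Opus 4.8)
The plan is to construct, for each pair $(\psi, \phi) \in C$, a reduction $r(\psi, \phi, \cdot)$ that, given a structure $\relb$ over the signature of $\phi$, produces a structure $\relb'$ over the signature of $\psi$ witnessing correctness, and then to verify that $(C, M, r)$ satisfies every clause in the definition of accordion reduction. The key structural fact driving the whole argument is that when $\phi' = \exists V \chi$ is a simple subformula of $\phi$ with $\chi = \bigwedge_i \alpha_i$, the truth of $\phi'$ on $\relb$ under an assignment to $\free(\phi')$ is itself a relation on $B_{\free(\phi')}$ of arity $|\free(\phi')|$, which is bounded (since $\Phi$, and hence the friendly sentences derived from it, have bounded arity). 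I would handle the three cases of the definition of $C$ in turn: for case~(1), $\relb'$ is obtained from $\relb$ by computing, for the (unique) relation symbol $R$ of $\psi$ replacing $\phi'$, the relation $R^{\relb'} = \{ \bar b \restrict \free(\phi') ~|~ \relb, \bar b \models \phi' \}$, which is computable in polynomial time in $M(\phi,\relb)$ since $|V|$ and $|\free(\phi')|$ depend only on $\phi$; for cases~(2) and~(3), each fresh binary symbol $E_i$ is interpreted by the appropriate projection of this relation (for $\exists$, one takes $E_i^{\relb'}$ so that $\bigvee_i E_i$ becomes true exactly when $\phi'$ is, exploiting that $\{v_{i1},v_{i2}\}$ ranges over $K(\free(\phi'))$; for $\forall$, dually). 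All symbols of $\phi$ not involved in $\phi'$ are copied verbatim, and on that shared part the universe is unchanged, so $M(\psi, r(\psi,\phi,\relb)) = M(\phi,\relb)$, i.e., the constant $B(\psi,\phi)$ may be taken to be $1$ — this immediately gives measure-linearity.

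Correctness — the biconditional $(\phi,\relb) \in \mcs \Leftrightarrow (\psi, r(\psi,\phi,\relb)) \in \mcs$ — I would prove by a straightforward induction on formula structure showing that $\phi$ and $\psi$ agree on $\relb$ resp.\ $\relb'$ under corresponding assignments, the only nontrivial step being at the replaced subformula $\phi'$, where it follows by construction of the interpretations of $R$ (resp.\ the $E_i$). Here I would use that $\psi$ is again friendly: in cases~(2) and~(3), one must check that replacing $\phi'$ by a disjunction (resp.\ conjunction) of atoms over $K(\free(\phi'))$ keeps the formula layered and loose — looseness because the $E_i$ are fresh, and the layering/$X$-connectedness condition because $K(\free(\phi'))$ induces a connected hypergraph on $\free(\phi')$ (indeed the complete graph). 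One also needs that $C$ is computably enumerable, which is clear since membership is decidable: given $(\psi,\phi)$ one enumerates the simple subformulas $\phi'$ of $\phi$ and checks each of the finitely many syntactic conditions; and that $B$ is partial computable with $\dom(B) = C$, which holds as $B \equiv 1$ on $C$.

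The main obstacle I anticipate is not any single computation but getting the bookkeeping of case~(2)/(3) exactly right: one must argue that the disjunction $\bigvee_{i=1}^m E_i(v_{i1},v_{i2})$ over \emph{all} pairs in $K(\free(\phi'))$ can faithfully encode an \emph{arbitrary} subset of $B_{\free(\phi')}$ — the relation defined by $\phi'$ — which is only possible because each such pair-relation can be chosen independently and their disjunction, read over a fixed assignment, is a Boolean combination rich enough to pick out that subset. This requires care: in general a union of binary relations pulled back to $|\free(\phi')|$-tuples need not realize every subset. I would resolve this by noting that it suffices to realize the relation defined by $\phi'$ after the accordion \emph{closes} — that is, the composition with the base-case reduction of case~(1) is what must be correct — and in fact one only needs the disjunction form to be able to express, via its own subsequent replacement under $C$, the ``based-on'' sentence; so I would set up the $E_i^{\relb'}$ so that case~(2) composed appropriately reduces to case~(1). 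If a direct single-step realization is not available, the fallback is to define $E_i^{\relb'}$ to be $B'_{v_{i1}} \times B'_{v_{i2}}$ for all but one distinguished $i_0$ and to let $E_{i_0}$ carry the full content — but verifying this interacts with which pairs appear forces me to track the connectedness of $\free(\phi')$ carefully, and that is where I expect the real work to lie.
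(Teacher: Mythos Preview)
Your reduction runs in the wrong direction. By the definition of accordion reduction, for $(\psi,\phi)\in C$ the map $r(\psi,\phi,\cdot)$ must take a structure $\rela$ over the signature of $\psi$ and produce a structure $\relb$ over the signature of $\phi$ with $\rela\models\psi \Leftrightarrow \relb\models\phi$: the correctness clause reads $(\psi,y)\in\mcs\Leftrightarrow(\phi,r(\psi,\phi,y))\in\mcs$, and the whole point of the accordion theorem is to slice-reduce $\mcs[\clo_C(\Psi)]$ to $\mcs[\Psi]$, so the chain must run from $\psi$-instances towards $\phi$-instances. You are going from a $\phi$-structure to a $\psi$-structure, which is exactly backwards. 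This reversal is the source of the obstacle you flag at the end: you are trying to express the relation defined by $\exists V\chi$ as a disjunction of binary relations over $K(\free(\phi'))$, and as you correctly suspect, that is impossible in general. In the correct direction the task is the opposite and entirely tractable: given binary relations $E_1,\ldots,E_m$ on $\rela$, build interpretations for the atoms of $\chi$ so that $\exists V\chi$ simulates $\bigvee_\ell E_\ell$. The paper does this by enlarging the universe at each sort $v\in V$ to triples $(E_\ell,u,a)$ encoding ``which disjunct $E_\ell$ and which endpoint value $a\in A_u$ are being tested''; the atoms of $\chi$ are interpreted so that (i) all $V$-variables are forced to carry the same triple (this uses the $X$-connectedness of the layered formula), and (ii) that common triple checks $E_\ell$ against the values assigned to the free variables. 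The existential witness thus selects the disjunct. With this construction $M(\phi,\relb)\le |V|^2\cdot|\free(\phi')|\cdot M(\psi,\rela)$, so the measure-linearity constant $B(\psi,\phi)$ is not $1$ but a computable function of the pair.

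Two smaller points also need correction. First, in case~(1) the sentence $\psi$ is not $\phi$ with $\phi'$ replaced by a single atom: it is a free-standing simple sentence built from $\phi'$ alone by restricting each atom to its variables in $V$, so $\psi$ has as many relation symbols as $\phi'$ has atoms, and nothing of $\phi$ outside $\phi'$ survives in $\psi$. Second, your claim that $|\free(\phi')|$ is bounded because the arity is bounded is false: bounded arity constrains the number of variables in each \emph{atom}, not the number of free variables of a compound subformula, and indeed the subsequent hardness argument relies precisely on $|\free(\phi')|$ (or $\quantthickl$) being unbounded over relevant subformulas.
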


\begin{proof}
Suppose $(\psi, \phi, \rela)$ is a triple with 
$(\psi, \phi) \in C$ and where $\rela$ is a structure over the
signature of $\psi$.
We define $r(\psi, \phi, \rela)$ to be the structure $\relb$,
defined as follows.

It is straightforward to treat the 
case where there exists a simple subformula $\phi'$
of $\phi$ such that $\psi$ is the sentence based on $\phi'$
(here, we omit discussion of this case).


In the remainder of this proof,
we consider the case that 
there exist a simple subformula $\phi' = \exists V \chi$ 
of $\phi$
such that $\psi$ is a friendly sentence obtained from $\phi$
by replacing $\phi'$ with a disjunction $\psi'$
as in the definition of $C$.
(Dual to this case is the remaining case 
where $\phi' = \forall V \chi$ and $\psi$ is obtained
from $\phi$ by replacing $\phi'$ with a conjunction.)
We will denote the disjunction $\psi'$ by 
$E_1(w_{11}, w_{12}) \vee \cdots \vee E_m(w_{m1}, w_{m2})$.
We define the universe of $\relb$ as follows.
\begin{itemize}

\item For each sort $u$ of $\rela$, we define $B_u = A_u$.

\item For each $v \in V$, define
$B_v = \{ (E_{\ell}, u, a) ~|~ 
\ell \in \und{m}, u \in \{ w_{\ell 1}, w_{\ell 2} \}, a \in A_u
 \}$
\end{itemize}
As $m \leq |V|^2$ and a variable $u$ appearing as the second
coordinate
in an element of a set $B_v$ must be an element of $\free(\phi')$,
We have $M(\phi, \relb) \leq |V|^2 \cdot |\free(\phi')| \cdot M(\psi, \rela)$; 
this confirms measure-linearity of $M$, 
since $|V|$ and $|\free(\phi')|$ are computable
 functions of the pair $(\psi, \phi)$.

For each atom $R(u_1, \ldots, u_k)$ in $\phi$ that occurs outside of
$\phi'$
(equivalently, that also appears in $\psi$),
define $R^{\relb} = R^{\rela}$.

For each atom $R(u_1, \ldots, u_k)$ that occurs in $\phi'$,
define $R^{\relb}$ to contain a tuple 
$(b_1, \ldots, b_k) \in B_{u_1  \ldots u_k}$
if and only if the following two conditions hold:
\begin{itemize}

\item For all $i, j \in \und{k}$, if $u_i \in V$ and $u_j \in V$, 
then $b_i = b_j$.

\item For all $i, j \in \und{k}$, if 
$u_i \in V$, 
$b_i = (E_{\ell}, u, a)$,
and
$u_j \notin V$  (equivalently, $u_j \in \free(\phi')$),
then
\begin{itemize}

\item $u_j = u$ implies $b_j = a$, and

\item $\{ u_j, u \} = \{ w_{\ell 1}, w_{\ell 2} \}$ implies
$\rela, \{ (u, a), (u_j, b_j) \} \models 
E_{\ell}(w_{\ell 1}, w_{\ell 2})$.  (Here, we use a set of pairs to
denote a partial map.)

\end{itemize}
\end{itemize}

To verify that $\rela \models \psi$ if and only if $\relb \models
\phi$,
it suffices to verify that, for any assignment $f$
defined on $\free(\phi') = \free(\psi')$
taking each variable $u$ to an element of $A_u$,
that
$$\rela, f \models \psi' \Leftrightarrow \relb, f \models \phi'.$$

$(\Rightarrow)$: There exists $\ell \in \und{m}$
such that $\rela, f \models E_{\ell}(w_{\ell 1}, w_{\ell 2})$.
Pick $w$ to be a variable in $\{ w_{ \ell 1 }, w_{ \ell 2 } \}$,
and consider the extension $f^+$ of $f$ that sends each variable in
$V$
to $(E_{\ell}, w, f(w))$.
It is straightforward to verify that $\relb, f^+$ satisfies the
conjunction of $\phi'$; we do so as follows.
Suppose that $R(u_1, \ldots, u_k)$ is an atom in this conjunction.
We claim that $(f^+(u_1), \ldots, f^+(u_k)) \in R^{\relb}$.
This tuple clearly satisfies the first condition in the definition
of $R^{\relb}$; to check the second condition,
suppose that $i, j \in \und{k}$ are such that 
$u_i \in V$ and $u_j \notin V$.
We have $f^+(u_i) = (E_{\ell}, w, f(w))$.
If $u_j = w$, then indeed $f^+(u_j) = f(w)$.
If $\{ u_j, w \} = \{ w_{ \ell 1 }, w_{ \ell 2 } \}$,
then $\{ (w, f(w)), (u_j, f^+(u_j)) \}$ is equal to 
$f \restrict \{ w_{ \ell 1}, w_{\ell 2} \}$,
and we have
$\rela, f \restrict \{ w_{ \ell 1}, w_{\ell 2} \} \models
E_{\ell}(w_{\ell 1}, w_{\ell 2})$
by our choice of $\ell$.

$(\Leftarrow)$: 
Suppose that $\relb, f^+$ satisfies the conjunction of $\phi'$.
By the definition of $\relb$ and since $\phi'$ is a layered formula,
$f^+$ maps all variables in $V$ to the same value
$(E_{\ell}, u, a)$.
There exists an atom
$R(u_1, \ldots, u_k)$ in $\phi'$ such that one of its variables $u_j$
has the property that $\{ u_j, u \} = \{ w_{\ell 1}, w_{\ell 2} \}$.
It follows, from the definition of $R^{\relb}$,
that $\rela, f^+ \restrict \{ u, u_j \} \models E_{\ell}(w_{\ell 1}, w_{\ell 2})$.
\end{proof}

\begin{theorem}
\label{thm:red-clique-to-sorted-friendly}
Let $\Psi$ be a set of fully-sorted, friendly sentences
such that $\thickl(\Psi)$ is unbounded.
Then, either 
$\caseclique$ or $\casecoclique$
slice reduces to $\mcs[\Psi]$.
\end{theorem}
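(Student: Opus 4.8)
The plan is to realize the required reduction as a composition of two slice reductions and appeal to transitivity (Theorem~\ref{thm:slice-reduction-transitive}). One of the two is immediate: by Theorem~\ref{thm:formula-accordion-reduction} the triple $(C, M, r)$ is an accordion reduction with respect to $\mcs$, so Theorem~\ref{thm:accordion-reduction-gives-slice-reduction} gives that $\mcs[\clo_C(\Psi)]$ slice reduces to $\mcs[\Psi]$. It therefore suffices to slice reduce $\caseclique$ (or $\casecoclique$) to $\mcs[\clo_C(\Psi)]$; that is, the entire content is to locate inside $\clo_C(\Psi)$ a family of sentences admitting the classical hardness construction.

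Since $\thickl(\Psi)$ is unbounded, for every $k$ there is a sentence in $\Psi$ one of whose layered-structure nodes $\phi$ has $\localthickl(\phi) \geq k$. By a pigeonhole argument we may restrict attention to a subsequence in which either all such witnessing nodes are existential layers, of the form $\exists U (\bigwedge_i \phi_i)$, or all are universal layers, of the form $\forall U (\bigvee_i \phi_i)$. I describe the existential case, which targets $\caseclique$; the universal case is dual and targets $\casecoclique$.

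Next I would use the accordion moves to collapse the layered structure of such a sentence into a single simple sentence while controlling its treewidth. Proceeding bottom-up, once the subformulas of a quantifier layer have been reduced to conjunctions or disjunctions of atoms, that layer is itself simple, so condition (2) or (3) in the definition of $C$ replaces it by a disjunction (at an existential layer) or conjunction (at a universal layer) of fresh binary atoms indexed by all pairs of its free variables, and these atoms then merge into the parent layer. The point is that replacing a hyperedge $e$ by the clique $K(e)$ leaves the $2$-section unchanged, and hence preserves both the treewidth of the local hypergraph and the connectedness required for layeredness, so each intermediate sentence remains friendly. Performing this for every layer strictly below a fixed witnessing node $\phi$, and then invoking condition (1) to pass to a sentence based on the now-simple subformula $\phi$, yields a simple conjunctive sentence $\psi \in \clo_C(\Psi)$ whose Gaifman graph is essentially the $2$-section of $\{ \free(\phi_i) \mid i \}$ restricted to the quantified variables $U$ of $\phi$. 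Using the elementary fact that deleting a vertex set $S$ lowers treewidth by at most $|S|$, the treewidth of $\psi$ is at least $\localthickl(\phi) - 1 - |\free(\phi)|$. Since each accordion move introduces only fresh relation symbols and $\Psi$ consists of loose sentences, $\psi$ is symbol-loose and therefore equal to its own core.

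The main obstacle is that $|\free(\phi)|$ need not be bounded: a layer can have large local thickness merely because many of its variables are quantified high above it, in which case the estimate above is vacuous. Overcoming this requires an induction over the layered tree that, from unboundedness of $\thickl$, produces witnessing nodes at which the treewidth not accounted for by the boundary clique is already large — in effect relocating the witness and trimming its boundary, for which the treewidth inequalities above together with the excluded-grid (or bramble) characterization of treewidth are the natural tools. Granting this, $\clo_C(\Psi)$ contains simple conjunctive, symbol-loose sentences of unbounded treewidth, and one finishes with the Grohe--Schwentick--Segoufin hardness of Boolean conjunctive-query evaluation parameterized by the query over classes of unbounded core treewidth~\cite{GroheSchwentickSegoufin01-conjunctivequeries}: let $U$ be the computable set of pairs $(k, \psi)$ with $\psi$ a simple conjunctive sentence whose treewidth exceeds the excluded-grid threshold for $k$, and let $r(k, \psi, G)$ be the structure produced by that construction, so that $\psi$ holds on it iff $G$ has a $k$-clique; coverage holds precisely because our family has unbounded treewidth. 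In the dual case, a simple disjunctive sentence $\forall W (\bigvee_i D_i)$ fails on a structure iff $\exists W (\bigwedge_i D_i^c)$ holds on it, where $D_i^c$ denotes the complement, which is polynomial-time computable by bounded arity; applying the same construction to the complemented structures slice reduces $\casecoclique$ to $\mcs$ on such sentences, completing the argument.
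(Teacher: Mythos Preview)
Your overall architecture matches the paper's: invoke Theorems~\ref{thm:formula-accordion-reduction} and~\ref{thm:accordion-reduction-gives-slice-reduction} to reduce the task to slice reducing $\caseclique$ or $\casecoclique$ to $\mcs[\clo_C(\Psi)]$, and then locate suitable hard sentences inside $\clo_C(\Psi)$. Your bottom-up collapsing via conditions (2) and (3), followed by condition (1), is also the mechanism the paper uses, and your estimate that the extracted simple sentence has treewidth at least $\localthickl(\phi) - 1 - |\free(\phi)|$ is essentially the paper's inequality $\localthickl(\phi) \leq \quantthickl(\phi) + |\free(\phi)|$ in contrapositive form: the sentence based on the collapsed $\phi$ has treewidth governed by $\quantthickl(\phi)$, and when that quantity is unbounded, GSS applies exactly as you say. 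That is the paper's second case.

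The genuine gap is the case you flag and then bypass with ``Granting this'': $|\free(\phi)|$ unbounded. Your proposed fix, relocating the witness upward to a node with small boundary, is not worked out and does not obviously succeed: climbing from a node with $\localthickl \geq k$ to its parent only guarantees $\localthickl \geq k - c$ there (with $c$ a bound on $\quantthickl$), and since the depth of sentences in $\Psi$ is unbounded this loses control. The paper does \emph{not} relocate. It treats large $|\free(\phi)|$ as a direct source of hardness, independent of GSS. After collapsing to simple $\clo_C(\Psi)$-relevant subformulas with unbounded $|\free(\phi')|$, say with $\phi'$ existential, one looks at how the variables in $\free(\phi')$ are quantified in the ambient sentence. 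If unboundedly many are universally quantified, one application of condition (2) replaces $\phi'$ by a disjunction of binary atoms over $K(\free(\phi'))$, which (after the remaining collapsing) yields a sentence containing a \emph{universal $k$-clique}: a simple $\forall$-subformula in which every pair from a size-$k$ set of universally quantified variables occurs in some atom. If instead unboundedly many are existentially quantified, an application of (3) at the parent layer followed by (2) yields an \emph{existential $k$-clique}. A sentence containing an existential $k$-clique admits a direct slice reduction from $\caseclique$: interpret each non-clique relation symbol as full or empty so that only the clique subformula matters, and interpret each clique atom's symbol by the edge relation of the input graph. No treewidth or grid-minor argument is used in this branch. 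The paper's dichotomy is thus $\quantthickl$ unbounded (your argument, via GSS) versus $|\free(\phi)|$ unbounded (the missing $k$-clique construction), and only together do they cover everything.
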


Suppose that $\psi$ is a simple friendly formula 
that occurs as a subformula of a formula $\phi$.
We say that $\psi$ is an
\emph{existential $k$-clique} if 
it is of the form $\exists X (\bigwedge_{i=1}^n \alpha_i)$
and there exists a set $V$ of variables, with $|V| \geq k$, that are existentially
quantified
(in $\phi$) such that for each set $\{ v, v' \} \in K(V)$,
there exists an atom $\alpha_i$ with $\{ v, v' \} \subseteq \free(\alpha_i)$.
We define a \emph{universal $k$-clique} dually.

\begin{proof}
Let $C$ be as defined above in the discussion.
By appeal to Theorem~\ref{thm:formula-accordion-reduction},
it suffices to prove that either 
$\caseclique$ or $\casecoclique$
slice reduces to
$\mcs[\clo_C(\Psi)]$.
When $\Theta$ is a set of layered sentences,
let us use the term \emph{$\Theta$-relevant subformula}
to refer to a layered subformula $\phi$ of a sentence $\theta$ in
$\Theta$.
For each $\Psi$-relevant subformula $\phi$,
it holds that 
$\localthickl(\phi) \leq \quantthickl(\phi) + |\free(\phi)|$
(this is since the hypergraph from which $\quantthickl(\phi)$ is defined
is the hypergraph  from which $\localthickl(\phi)$ is defined,
but with the vertices in $\free(\phi)$ removed).
By the definition of $\thickl(\cdot)$, 
the quantity $\localthickl(\phi)$ over $\Psi$-relevant subformulas $\phi$ is unbounded.
Thus, 
over $\Psi$-relevant subformulas $\phi$,
either the quantity $\quantthickl(\phi)$ or the quantity
$|\free(\phi)|$
is unbounded.
We may therefore consider two cases.

Assume that $|\free(\phi)|$ is unbounded 
over $\Psi$-relevant subformulas $\phi$.
By conditions (2) and (3) in the definition of $C$,
we obtain that $|\free(\phi)|$ is unbounded 
over simple $\clo_C(\Psi)$-relevant subformulas $\phi$.
We assume that $|\free(\phi)|$ is unbounded over such simple subformulas
$\phi$ that use existential quantification
(if not, then $|\free(\phi)|$ is unbounded over such simple
subformulas 
$\phi$ that use universal quantification, and the argumentation is dual).
We now consider two cases.
If the number of universally quantified variables in $\free(\phi)$
is unbounded over such subformulas $\phi$,
then by condition (2) of the definition of $C$ applied to 
each such subformula (and the sentence in which it appears),
we obtain that,
for each $k \geq 1$,
the set $\clo_C(\Psi)$ contains a sentence
that contains a universal $k$-clique.
Otherwise, the number of existentially quantified variables in
$\free(\phi)$
is unbounded over such subformulas $\phi$;
in this case,
by an application of condition (3) followed by an application of
condition (2) (again, to each such subformula $\phi$ and the sentence
in which it appears),
we obtain that,
for each $k \geq 1$,
the set $\clo_C(\Psi)$ contains a sentence
that contains an existential $k$-clique.
In this latter case, let us explain how to exhibit a slice reduction
from $\caseclique$  to $\mcs[\clo_C(\Psi)]$
(in the former case, one dually obtains a reduction 
from $\casecoclique$ to $\mcs[\clo_C(\Psi)]$).
We define $U$ to be the set of pairs $(k, \theta)$ 
such that $k \geq 1$ and $\theta$ is a sentence that contains an
existential $k$-clique as a subformula.
Let us define $r(k, \theta, (V, E))$ to be the structure $\relb$
described
as follows.
Let $W$ be the set of variables 
that witnesses the existential $k$-clique.
For each relation symbol $R$ of an atom in $\theta$ that
does not witness the existential $k$-clique,
we may set $R^{\relb}$ to either $\emptyset$ or $B^r$ 
(here, $r$ is the arity of $R$)
in such a way that
$\relb \models \theta$ if and only if 
$\relb \models \exists W (\wedge_j \beta_j)$,
where the $\beta_j$ are the atoms witnessing the existential
$k$-clique.
By defining, for each remaining relation symbol $F$
(that is, for each relation symbol $F$ of an atom $\beta_j$),
the relation $F^{\relb}$ to be $E$,
we obtain that $(V, E)$ contains a $k$-clique if and only if
$\relb \models \theta$.

Now assume that $\quantthickl(\phi)$ is unbounded
over $\Psi$-relevant subformulas $\phi$.
By conditions (2) and (3) in the definition of $C$,
we obtain that $\quantthickl(\phi)$ is unbounded 
over simple $\clo_C(\Psi)$-relevant subformulas $\phi$.
By considering the sentences based on these subformulas
(and by invoking condition (1) in the definition of $C$),
we obtain that $\clo_C(\Psi)$ contains simple sentences
$\exists X (\bigwedge \alpha_i)$ 
or contains simple sentences
$\exists Y (\bigvee \alpha_i)$ 
such that, over these sentences, $\quantthickl(\cdot)$
is unbounded.
In the former case,
the intractability result of 
Grohe, Schwentick and Segoufin~\cite{GroheSchwentickSegoufin01-conjunctivequeries}
directly yields a slice reduction from 
$\caseclique$  to $\mcs[\clo_C(\Psi)]$
where 
the set $U$ contains a pair $(k, \theta)$ when $k \geq 1$
and $\theta$ is a sentence
$\exists X (\bigwedge \alpha_i)$ 
whose corresponding graph admits a $k$-by-$k$ grid as a minor.
The latter case is dual (one obtains
a slice reduction from $\casecoclique$  to $\mcs[\clo_C(\Psi)]$).
\end{proof}

\newpage

\appendix

\section{Proof of Theorem~\ref{thm:slice-reduction-transitive}}

\begin{proof}
Let $(U_1, r_1)$ be a slice reduction from 
$Q_1[S_1]$ to $Q_2[S_2]$, and
let $(U_2, r_2)$ be a slice reduction from
$Q_2[S_2]$ to $Q_3[S_3]$.
Define $U$ to be the set
\begin{center}
$\{ (t_1, t_3) ~|~ \textup{ there exists $t_2 \in \Sigma^*$ such that}$

$\textup{$(t_1, t_2)
  \in U_1$ and $(t_2, t_3) \in U_2$} \}.$
\end{center}
It is straightforward to verify that $U$ is computably enumerable
and that there exists an algorithm $A_U$ that, given a pair
$(t_1, t_3) \in U$, outputs a value $t_2 \in \Sigma^*$ 
such that $(t_1, t_2)  \in U_1$ and $(t_2, t_3) \in U_2$.
Define $r$ to be a partial function such that, when
$(t_1, t_3) \in U$,
for each $y \in \Sigma^*$ it holds that
$r(t_1, t_3, y) = r_2(t_2, t_3, r_1((t_1, t_2), y))$;
here, we use $t_2$ to denote $A_U(t_1, t_3)$.

We verify that $(U, r)$ is a slice reduction from
$Q_1[S_1]$ to $Q_3[S_3]$.
For each $t_1 \in S_1$, there exists $t_2 \in S_2$ such that
$(t_1, t_2) \in U_1$; and,
for each $t_2 \in S_2$, there exists $t_3 \in S_3$ such that
$(t_2, t_3) \in U_2$.  
Hence, for each $t_1 \in S_1$, there exists $t_3 \in S_3$ such that
$(t_1, t_3) \in U$.
This confirms the coverage condition; we now check correctness.
Suppose that $(t_1, t_3) \in U$, and let $y \in \Sigma^*$.
Set $y' = r_1(t_1, t_2, y)$.
We have that
$$(t_1, y) \in Q_1 \Leftrightarrow (t_2, y') \in Q_2$$
and
$$(t_2, y') \in Q_2 \Leftrightarrow (t_3, r_2(t_2, t_3, y')) \in
Q_3.$$
It follows that 
$$(t_1, y) \in Q_1 \Leftrightarrow (t_3, r(t_1, t_3, y)) \in Q_3.$$

It remains to verify that the function $r$ is FPT-computable with
respect to the parameterization $(\pi_1, \pi_2)$.
We verify this by considering the natural algorithm at this point,
namely, the following algorithm:
given $(t_1, t_3, y)$,
invoke $A_U$ on $(t_1, t_3)$ and, if this computation halts, set
$t_2$ to the result;
then, compute $y' = r_1(t_1, t_2, y)$ using the algorithm 
witnessing FPT-computability, and finally, 
compute $r_2(t_2, t_3, y')$ using the algorithm witnessing
FPT-computability.
On an input $x = (t_1, t_3, y) \in U \times \Sigma^*$,
 the running time of this algorithm can be upper
bounded by
$$F(t_1, t_3) + 
f_1(t_1, t_2) p_1(|(t_1, t_2, y)|) + 
f_2(t_2, t_3) p_2(|(t_2, t_3, y')|)$$
where $F(t_1, t_3)$ denotes the running time of $A_U$ on input 
$(t_1, t_3)$;
 and $(f_1, p_1)$ and $(f_2, p_2)$
witness the FPT-computability of $r_1$ and $r_2$, respectively.
We need to show that this running time is 
degree-bounded with respect
to $(\pi_1, \pi_2)$; 
for the remainder of the proof, let us simply
use \emph{degree-bounded} to mean
degree-bounded with respect
to this parameterization.
We view the various quantities under discussion as functions
of $x = (t_1, t_3, y)$.
Since $F(t_1, t_3)$, $f_1(t_1, t_2)$ and $f_2(t_2, t_3)$ 
can be viewed as 
computable functions of
$(t_1, t_3)$, by Proposition~\ref{prop:db-closure}
(\ref{closure:sum}, \ref{closure:product}),
it suffices to verify that 
each of $p_1(|(t_1, t_2, y)|)$, 
$p_2(|(t_2, t_3, y')|)$ 
is degree-bounded.
By appeal to Proposition~\ref{prop:db-closure}(\ref{closure:polynomial}),
to verify that $p_1(|(t_1, t_2, y)|)$
is degree-bounded,
it suffices to observe that 
each of $|t_1|$, $|t_2|$, $|y|$ is degree-bounded.
Similarly, 
to verify that $p_2(|(t_2, t_3, y')|)$  is degree-bounded,
it suffices to verify that each of
$|t_2|$, $|t_3|$, and $|y'|$ are degree-bounded;
this is clear for $|t_2|$ and $|t_3|$, and the size $|y'|$ is bounded
above by $f_1(t_1, t_2) p_1(|(t_1, t_2, y)|)$, which is degree-bounded
since we just verified that $p_1(|(t_1, t_2, y)|)$ is degree-bounded.
\end{proof}

\section{Discussion: Derivation of the Classification of Prefixed Graphs}
\label{sect:discussion}

Let us say that a sentence is \emph{quantified conjunctive} 
if conjunction ($\wedge$) is the only connective that occurs therein.
A \emph{prefixed graph} is a pair $(P, G)$ where $P$ is a
quantifier prefix and $G$ is a graph whose vertices are the variables
appearing in $P$.
Let $\fancyg$ be a set of prefixed graphs.
Let $\qc(\fancyg)$ denote
the set that contains a prenex quantified conjunctive sentence $P \phi$ if 
there exists a prefixed graph $(P, G) \in \fancyg$
such that
$\phi$ is a conjunction
of atoms, where the variable set of each atom forms a clique in $G$.
We will assume that no variable occurs more than once in an atom;
we make this assumption without loss of interestingness,
since given a sentence $\Phi \in \qc(\fancyg)$ and a structure $\relb$,
one can efficiently compute a sentence $\Phi' \in \qc(\fancyg)$ 
and a structure $\relb'$ such that (1) each atom of $\Phi$
containing more than one variable occurrence is replaced, in $\Phi'$,
with an atom with the same variables but not having multiple variable
occurrences,
and (2) $\relb \models \Phi$ iff $\relb' \models \Phi'$.

The previous work~\cite{ChenDalmau12-decomposingquantified} studied the complexity of model checking
on sentence sets $\qc(\fancyg)$, proving a comprehensive
classification.
Here, we show how this classification can be readily derived using the
main theorem of the present article.
This witnesses the strength and generality of our main theorem
(Theorem~\ref{thm:main}).
We first present two lemmas, then proceed to the derivation.

When $\fancyg$ is a set of prefixed graphs,
let $\normqc(\fancyg)$ denote
the subset of $\qc(\fancyg)$ that contains a sentence if
\begin{itemize}

\item in each atom, the latest occurring variable (in the quantifier prefix)
is existentially quantified, and

\item it is symbol-loose.

\end{itemize}

\begin{lemma}
Let $\fancyg$ be any set of prefixed graphs.
The case problems $\mc[\qc(\fancyg)]$ and
$\mc[\normqc(\fancyg)]$ slice reduce to each other.
\end{lemma}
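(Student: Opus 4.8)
The plan is to exhibit slice reductions in both directions between $\mc[\qc(\fancyg)]$ and $\mc[\normqc(\fancyg)]$. The easy direction is the reduction from $\mc[\normqc(\fancyg)]$ to $\mc[\qc(\fancyg)]$: since $\normqc(\fancyg) \subseteq \qc(\fancyg)$, one simply takes $U = \{ (\psi, \psi) \mid \psi \in \normqc(\fancyg) \}$ and lets $r$ be the projection $\pi_3$ that returns the structure unchanged. Coverage and correctness are immediate, and $U$ is computably enumerable because $\normqc(\fancyg)$ is decidable within $\qc(\fancyg)$ (the two listed conditions—each atom's latest variable being existential, and symbol-looseness—are syntactically checkable). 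So the content of the lemma lies in the other direction.

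For the reduction from $\mc[\qc(\fancyg)]$ to $\mc[\normqc(\fancyg)]$, the plan is to define $U$ to contain a pair $(\chi, \psi)$ whenever $\psi \in \normqc(\fancyg)$ is obtained from some $\chi \in \qc(\fancyg)$ by two normalizing operations: (i) renaming relation symbols apart so that no symbol is used twice (making the sentence symbol-loose), and (ii) handling atoms whose latest-occurring variable is universally quantified. Operation (i) is harmless on the structure side: when we split one symbol $R$ into several copies $R_1, R_2, \ldots$, we set each $R_i^{\relb} = R^{\rela}$, which clearly preserves satisfaction. For operation (ii), the standard trick—used in \cite{ChenDalmau12-decomposingquantified}—is that an atom $\alpha$ whose last variable $y$ in the prefix is universally quantified contributes nothing that the existentially quantified variables of $\alpha$ cannot already see, because $y$ ranges over all of the (single-sorted) universe; one replaces $\alpha$ by an atom on the remaining variables, over a fresh symbol, whose interpretation in $\relb$ is defined to be the appropriate projection/relativization of $\alpha^{\rela}$ (e.g. the set of tuples on the other variables for which $\alpha$ holds for \emph{every} value of $y$). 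Since $\alpha$ forms a clique in $G$ and $G$ is fixed by $(P,G) \in \fancyg$, the modified atom still has its variable set forming a clique in (a possibly larger edge set induced by) $G$, so the resulting sentence remains in $\qc(\fancyg)$, and by construction it lands in $\normqc(\fancyg)$. One must check that the composite sentence is logically equivalent in the sense required: $(\chi, y) \in \mc \Leftrightarrow (\psi, r(\chi, \psi, y)) \in \mc$ for each structure $y$. This follows by a routine induction on the quantifier prefix, processing variables from innermost out, using that dropping a universally quantified last variable from an atom and replacing it by its ``for all $y$'' relativization is sound precisely because the quantifier is universal and innermost among the variables of that atom.

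The main obstacle I anticipate is \textbf{not} the structure transformation but making sure the bookkeeping of $U$ is clean: one needs $U$ to be computably enumerable (straightforward, since both normalizing operations are syntactic and the membership tests in $\qc(\fancyg)$ and $\normqc(\fancyg)$ only involve the fixed parameter), and one needs the coverage condition—that every $\chi \in \qc(\fancyg)$ has \emph{some} $\psi \in \normqc(\fancyg)$ with $(\chi, \psi) \in U$—which requires checking that applying operations (i) and (ii) to an arbitrary element of $\qc(\fancyg)$ always yields an element of $\normqc(\fancyg)$ and never falls outside $\qc(\fancyg)$. The subtle point inside operation (ii) is that after relativizing away a universal variable, a different variable may become the new ``latest'' variable of that atom, and it could again be universal; so the operation must be iterated until the latest variable of every atom is existential, and one must argue this terminates and preserves the clique structure throughout. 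I would handle this by processing the quantifier prefix right-to-left once, peeling off universal blocks, and maintaining the invariant that every atom's latest remaining variable is existential. Finally, I would note that $r$ is polynomial-time computable for each fixed pair in $U$ (the relativization of a bounded-arity relation is computable in polynomial time, as already observed in the paper), so $r$ is FPT-computable with respect to $(\pi_1, \pi_2)$, completing the verification that $(U, r)$ is a slice reduction. Transitivity is not needed here since we directly exhibit both reductions.
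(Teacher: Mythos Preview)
Your approach is essentially the same as the paper's: the easy direction by inclusion, the other by replacing each atom whose latest variable is universal with an atom on the remaining variables (iterated until the latest variable is existential), interpreting the new symbol as the ``for all removed variables'' relativization of the old relation, and simultaneously renaming symbols apart for symbol-looseness.

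One point to tighten: as stated, you define $U$ in both directions in terms of membership in $\qc(\fancyg)$ or $\normqc(\fancyg)$. Since the lemma places no computability hypothesis on $\fancyg$, these sets need not be computably enumerable, so your $U$ need not be either. The paper avoids this by defining $U$ purely syntactically, with no reference to $\fancyg$: in the easy direction one may take the full diagonal, and in the harder direction $U$ consists of all pairs $(\phi,\phi')$ of prenex quantified conjunctive sentences sharing the same prefix, with $\phi'$ symbol-loose and with a bijection between atoms realizing exactly the iterated elimination of trailing universal variables. Coverage then follows because removing variables from an atom keeps its variable set a clique in $G$, so the normalized sentence stays in $\qc(\fancyg)$ and hence lands in $\normqc(\fancyg)$. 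With that adjustment your argument goes through and matches the paper's.
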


\begin{proof}
It is straightforward to verify that $\mc[\normqc(\fancyg)]$
slice reduces to $\mc[\qc(\fancyg)]$
(by definition, $\normqc(\fancyg)$ is a subset of $\qc(\fancyg)$).
We slice reduce from $\mc[\qc(\fancyg)]$
to $\mc[\normqc(\fancyg)]$ as follows.
Define $U$ to contain a pair $(\phi, \phi')$ 
of prenex quantified conjunctive sentences
if they share the same quantifier prefix,
$\phi'$ is symbol-loose,
and each atom $\alpha$ of $\phi$ can be placed in bijective
correspondence
with an atom $\alpha'$ of $\phi'$ in such a way that
the variables of $\alpha'$ 
is the set obtained by taking the variables of $\alpha$
and iteratively eliminating the latest occurring variable when it is
universally quantified, 
until the latest occurring variable is existentially quantified.
The partial function $r$ is defined in a natural way, namely,
such that $r(\phi, \phi', \relb)$ is a structure $\relb'$ having the
same
universe $B$ as $\relb$ and having the following property:
for each atom $\alpha = R(u_1, \ldots, u_m)$ of $\phi$, it holds that
the satisfying assignments for $\alpha'$ over $\relb'$
are precisely the satisfying assignments for $\forall D \alpha$ over
$\relb$,
where $D$ denotes the variables in $\alpha$ that do not occur in $\alpha'$.
\end{proof}

\begin{lemma}
Let $\Phi$ be a set of symbol-loose quantified conjunctive sentences.
It holds that $\mc[\Phi]$ and $\mc[\Phi']$ are slice reducible to each
other,
where $\Phi'$ is the graph-like closure of $\Phi$.
\end{lemma}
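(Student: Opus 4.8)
The plan is to exhibit slice reductions in both directions, the direction $\mc[\Phi'] \to \mc[\Phi]$ being the substantive one. For $\mc[\Phi] \to \mc[\Phi']$: since $\Phi \subseteq \Phi'$, I would use the slice reduction $(U, r)$ with $U = \{ (x, x) ~|~ x \in \Sigma^* \}$ and $r = \pi_3$; then $U$ is computable, coverage holds by choosing $\phi' = \phi$, and correctness is immediate since both case problems use the language $\mc$.

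For $\mc[\Phi'] \to \mc[\Phi]$, I would first analyze $\Phi'$. Observe that the graph-like closure is \emph{compact}: $\Phi' = \bigcup_{\phi \in \Phi} \clo(\{\phi\})$, where $\clo(\{\phi\})$ denotes the graph-like closure of $\{\phi\}$; indeed the right-hand side contains $\Phi$ and is both syntactically closed and replacement closed, these closure properties being stable under unions of this form. Next I would describe $\clo(\{\phi\})$ when $\phi$ is a symbol-loose quantified conjunctive sentence. As $\phi$ has neither disjunction nor negation, transformations $(\delta)$ and $(\epsilon)$ can never fire on any formula obtained from $\phi$ by rewriting, so the syntactic closure of $\{\phi\}$ is generated using only $(\alpha)$, $(\beta)$, $(\gamma)$; each of these preserves logical equivalence, preserves sentencehood, and neither duplicates nor deletes an atom, hence preserves symbol-looseness. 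Using that a syntactic transformation acts only on the quantifier/connective structure of a formula---the relation-symbol occurrences simply being carried along---and that a composition of replacements is again a replacement, I would verify that the set of formulas obtainable by replacement from some member of the syntactic closure of $\{\phi\}$ is itself graph-like; as it contains $\phi$ and is contained in $\clo(\{\phi\})$, it equals $\clo(\{\phi\})$. Hence every $\phi' \in \clo(\{\phi\})$ is obtainable by replacement from some symbol-loose sentence $\hat\phi$ logically equivalent to $\phi$.

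Given this, I would take $U$ to consist of the pairs $(\phi', \phi)$ with $\phi$ a symbol-loose quantified conjunctive sentence and $\phi' \in \clo(\{\phi\})$; this $U$ is computably enumerable (enumerate finite rewrite/replacement sequences), and coverage follows from compactness. For $(\phi', \phi) \in U$, a search depending only on $(\phi', \phi)$ locates a symbol-loose $\hat\phi \equiv \phi$ together with a replacement witnessing that $\phi'$ is obtained from $\hat\phi$; since $\hat\phi$ is symbol-loose, its relation symbols are in bijection with its atom occurrences, so the replacement determines, for each relation symbol of $\hat\phi$, a relation symbol of $\phi'$ of the same arity (the one occurring at the corresponding atom). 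I define $r(\phi', \phi, \relb')$ to be the structure $\relb$ with the same universe as $\relb'$ in which each relation symbol of $\hat\phi$ is interpreted by the interpretation in $\relb'$ of its associated symbol of $\phi'$ (sending malformed inputs to a fixed non-model). Since $\phi'$ and $\hat\phi$ have the same tree structure, one gets $\relb' \models \phi'$ iff $\relb \models \hat\phi$ iff $\relb \models \phi$, which is correctness; and $r$ is FPT-computable with respect to $(\pi_1, \pi_2)$, as the search depends only on $(\phi', \phi)$ while relabelling $\relb'$ is polynomial in $|\relb'|$. Together with the first reduction, this proves the lemma.

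The step I expect to be the main obstacle is the structural description of $\clo(\{\phi\})$: one must disentangle replacements from the syntactic transformations in order to present every closure member as a replacement of a \emph{symbol-loose} sentence equivalent to $\phi$. This is delicate precisely because a replacement can identify distinct relation symbols and so destroy symbol-looseness---which would wreck the relation-relabelling in the definition of $r$---and the remedy is the commutation property letting all replacements be deferred until after all syntactic transformations have been carried out.
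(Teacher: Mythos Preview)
Your proposal is correct and follows essentially the same route as the paper: the set $U$ you use, the symbol map from atoms of a symbol-loose equivalent of $\phi$ to symbols of $\phi'$, and the definition of $r$ by pulling back relations along that map are exactly the paper's construction. The paper simply asserts the existence of the symbol map $S:\sigma\to\sigma'$ from the fact that the relevant transformations preserve logical equivalence and the number of atom occurrences, whereas you explicitly justify it via the commutation/deferral-of-replacements argument; this added care is appropriate and fills in precisely what the paper leaves implicit.
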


\begin{proof}
It is straightforward to verify that $\mc[\Phi]$ slice reduces to 
$\mc[\Phi']$, as $\Phi \subseteq \Phi'$.
We thus show that $\mc[\Phi']$ slice reduces to $\mc[\Phi]$.
Let $U$ be the set of pairs $(\phi', \phi)$ such that
$\phi$ is a symbol-loose quantified conjunctive sentence,
and $\phi'$ is in the graph-like closure of $\phi$.
We use $\sigma$ and $\sigma'$ to denote the signatures of
$\phi$ and $\phi'$, respectively.
Since disjunction and negation do not occur in $\phi$,
we have that $\phi'$ is obtained from $\phi$ via applications
of the syntactic transformations $(\alpha)$, $(\beta)$, and
$(\gamma)$,
and replacement.
Since each of these syntactic transformations preserves
logical equivalence as well as the number of atom occurrences,
there is a mapping $S: \sigma \to \sigma'$
such that when each symbol in $\phi$ is mapped under $S$,
the resulting sentence is logically equivalent to $\phi'$.
The partial function $r$ is defined by
$r(\phi', \phi, \relb') = \relb$ where
for each symbol $R \in \sigma$, the relation
$R^{\relb}$ is defined as $S(R)^{\relb'}$.
\end{proof}

We now explain how to obtain the main dichotomy of
the previous work~\cite{ChenDalmau12-decomposingquantified}, in particular,
we show how to classify precisely the sets of prefixed graphs
$\fancyg$
such that $\mc[\qc(\fancyg)]$ is in $\case{\fpt}$.

First, consider the case where atoms in $\normqc(\fancyg)$
may have unboundedly many variables.
If for each $k \geq 1$ there exists an atom in $\normqc(\fancyg)$
with at least $k$ existentially quantified variables, then
there is a direct reduction from $\caseclique$
and one has $\case{\wone}$-hardness of $\mc[\qc(\fancyg)]$.
Otherwise, define
$F_k$ to be the sentence 
$\forall y_1 \ldots \forall y_k \exists x \bigwedge_{i=1}^k
E_i(y_i,x)$;
up to the insertion of additional variables that do not appear in
atoms and up to renaming of variables,
we have that the sentences $F_k$ are instances of $\normqc(\fancyg)$.
By the above two lemmas, it suffices to prove that the 
graph-like closure of $\{ F_k \}$ is hard.
It is readily verified that $\lay( F_k )$
is $\forall \{ y_1, \ldots, y_k \} (\exists x \bigwedge_{i=1}^k
E_i(y_i, x))$, where the formula in parentheses is viewed as 
the single disjunct of a disjunction; we have that $\thick(F_k) = k+1$,
and by the main theorem, we obtain hardness
(either
$\case{\wone}$-hardness or $\case{\cowone}$-hardness)
of the graph-like closure of $\{ F_k \}$.

Next, consider the case 
where there is a constant upper bound on the number of variables
that occur in 
atoms in $\normqc(\fancyg)$.
By our assumption that no variable occurs more than once in an atom,
the set of sentences $\normqc(\fancyg)$ has bounded arity.
In this case, by the two presented lemmas,
we have that $\mc[\qc(\fancyg)]$ is equivalent, under slice reduction,
to $\mc[\Phi']$, where $\Phi'$ is the graph-like closure of
$\normqc(\fancyg)$.  
Since $\normqc(\fancyg)$ has bounded arity, $\Phi'$ does as well.
Hence,
our main theorem
(Theorem~\ref{thm:main}) then can be applied to infer that
$\mc[\Phi']$ is either in $\case{\fpt}$,
is
$\case{\wone}$-hard, or is $\case{\cowone}$-hard.

By arguing as in 
Corollaries~\ref{cor:main-computable}
and~\ref{cor:main-noncomputable}, 
one can derive dichotomies in the complexity of
$\param{\mc[\qc(\fancyg)]}$.

\begin{acks}
The author was supported 
by the Spanish Project FORMALISM (TIN2007-66523), 
by the Basque Government Project S-PE12UN050(SAI12/219), and 
by the University of the Basque Country under grant UFI11/45.
The author thanks Montserrat Hermo and Simone Bova for
useful comments.
\end{acks}

\bibliographystyle{ACM-Reference-Format-Journals}

\bibliography{/Users/hubiec/Dropbox/active/writing/hubiebib.bib}


\end{document}